\documentclass[11pt]{amsart}

\setlength{\textwidth}{148.67truemm} \linespread{1.05}
\setlength{\textheight}{200.0truemm}
\setlength{\evensidemargin}{4.5truemm}
\setlength{\oddsidemargin}{4.5truemm} \setlength{\topmargin}{0pt}

\input txdtools

\usepackage{amsmath,mathtools}
\usepackage{amssymb}
\usepackage{mathrsfs}
\usepackage{amsfonts}
\usepackage{graphicx}
\usepackage{texdraw}
\usepackage{graphpap}
\usepackage{enumitem}
\usepackage{chngcntr}

\usepackage{wasysym}
\usepackage{color}
\usepackage[OT2,T1]{fontenc}
\usepackage{tikz}
\usepackage{pgfplots}
\usetikzlibrary{decorations.markings}
\pgfplotsset{width=7cm,compat=1.8}
\usetikzlibrary{spy}
\usepackage{verbatim}

\newcommand{\boundellipse}[3]
{(#1) ellipse (#2 and #3)
}

\def\pa{\partial}
\def\bp{\bar\partial}

\def\bfK{\mathbf{K}}
\def\bfk{\mathbf{k}}
\def\bfR{\mathbf{R}}

\def\C{\mathbb{C}}
\def\D{\mathbb{D}}
\def\R{\mathbb{R}}
\def\T{\mathbb{T}}
\def\LL{\mathbb{L}}
\def\rmn{\mathrm{n}}
\def\rmt{\mathrm{t}}

\def\calW{\mathcal{W}}

\def\calQ{\mathscr{Q}}
\def\calD{\mathcal{D}}
\def\calH{\mathscr{H}}

\def\erfc{\operatorname{erfc}}
\def\Int{\operatorname{Int}}

\def\z{\zeta}
\def\w{\eta}
\def\Lap{\Delta}
\def\normal{\mathrm{n}}

\newcommand{\Prob}{{\mathbb P}}
\newcommand{\fii}{{\varphi}}
\newcommand{\Expe}{{\mathbb E}}
\newcommand{\1}{{\mathbf{1}}}
\newcommand{\re}{\operatorname{Re}}
\newcommand{\im}{\operatorname{Im}}

\newcommand{\dist}{\operatorname{dist}}

\newcommand{\param}{\mathbf{c}}

\newcommand{\sfun}{b}

\newcommand{\fif}{\fii}

\newcommand{\loc}{\operatorname{loc}}

\newcommand{\eqpot}{\check{Q}}

\renewcommand{\T}{{\mathbb T}}
\renewcommand{\L}{{\mathbb L}}

\newcommand{\Pc}{\operatorname{Pc}}

\renewcommand{\d}{{\partial}}
\newcommand{\dbar}{\bar{\partial}}

\theoremstyle{plain}
\newtheorem{thm}{Theorem}[section]
\newtheorem{lem}[thm]{Lemma}

\theoremstyle{definition}
\newtheorem*{def*}{Definition}
\newtheorem*{eg*}{Example}

\theoremstyle{remark}
\newtheorem*{rmk*}{Remark}

\numberwithin{equation}{section}


\begin{document}

\title[A scale of boundary confinements]{On boundary confinements for the Coulomb gas}

\author{Yacin Ameur}

\address{Yacin Ameur\\
Department of Mathematics\\
Faculty of Science\\
Lund University\\
P.O. BOX 118\\
221 00 Lund\\
Sweden}
\email{Yacin.Ameur@maths.lth.se}

\author{Nam-Gyu Kang}
\address{Nam-Gyu Kang\\
School of Mathematics\\
Korea Institute for Advanced Study\\
85 Hoegiro\\
Dongdaemun-gu\\
Seoul 02455\\
Republic of Korea}
\email{namgyu@kias.re.kr}

\author{Seong-Mi Seo}
\address{Seong-Mi Seo\\
School of Mathematics\\
Korea Institute for Advanced Study\\
85 Hoegiro\\
Dongdaemun-gu\\
Seoul 02455\\
Republic of Korea}
\email{seongmi@kias.re.kr}

\keywords{Random normal matrices, Scaling limits, Planar orthogonal polynomials, Universality, Soft edge, Hard edge}

\subjclass[2010]{82D10, 60G55, 46E22, 42C05, 30D15}

\thanks{Nam-Gyu Kang was partially supported by Samsung Science and Technology Foundation (SSTF-BA1401-51) and by a KIAS Individual Grant(MG058103) at Korea Institute for Advanced Study.
Seong-Mi Seo was partially supported by a KIAS Individual Grant (MG063103) at Korea Institute for Advanced Study and by a National Research Foundation of Korea Grant funded by the Korea government (No. 2019R1F1A1058006). 
}

\begin{abstract} We introduce a family of boundary confinements for Coulomb gas ensembles, and study them in the two-dimensional determinantal case of random normal matrices.
The family interpolates between the free boundary and hard edge cases, which
have been well studied in various random matrix theories. The confinement can also be relaxed beyond the free boundary to produce ensembles with fuzzier boundaries, i.e.,
where the particles are more and more likely to be found outside of the boundary. The resulting ensembles are investigated with respect to
scaling limits and distribution of the maximum modulus. In particular, we prove existence of a new point field - a limit of scaling limits to the ultraweak
point when the droplet ceases
to be well defined.
\end{abstract}

\maketitle

\section{Introduction and main results}

In the theory of Coulomb gas ensembles, it is natural to consider different kinds of boundary confinements. The most well-known examples are the ``free boundary'', where particles are admitted to range freely outside of the droplet,
and the ``hard edge'', where they are completely confined to it. On the other hand, notions of weakly confining potentials have attracted attention recently, where
the boundary is softer than a free boundary, i.e., particles are more likely to be found outside of the boundary. In this note,
we introduce a one-parameter family of edge confinements, ranging all the way between an idealized ``ultraweak'' edge and a hard edge.

\begin{figure}[ht]\label{fig1}
\begin{center}
\includegraphics[width=.25\textwidth]{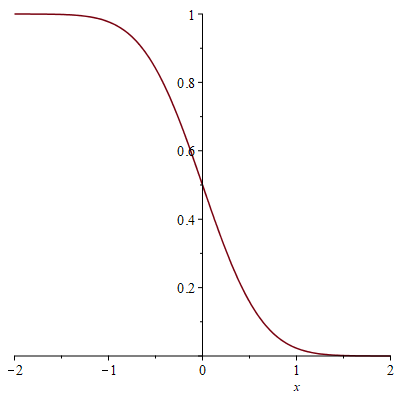}
\hspace{.02\textwidth}
\includegraphics[width=.25\textwidth]{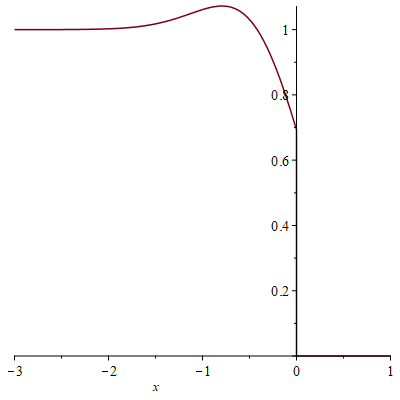}
\hspace{.02\textwidth}
\includegraphics[width=.25\textwidth]{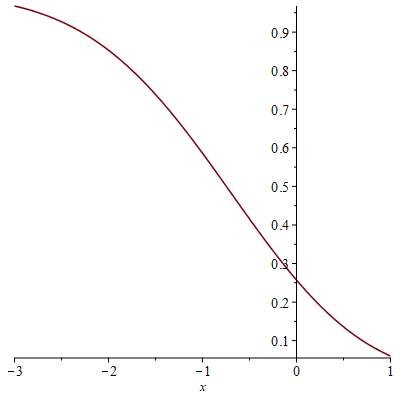}
\end{center}
\caption{Density profiles $R(x)$ at the free boundary, the hard edge, and the ultraweak edge, respectively.}
\end{figure}

Our construction can be applied to general
Coulomb gas ensembles in any dimension and for any inverse temperature $\beta$.
However, we shall here be content to develop the theory only
in the determinantal, two-dimensional case,
i.e., we will consider ensembles of
eigenvalues of random normal matrices.

The study of universality in free boundary ensembles has been the focus of several recent works \cite{AKM,AKMW,HW17}. Notably, in the paper \cite{HW17}, it is shown that with free boundary confinement, the point field with intensity function $R(z)=\fif(z+\bar{z})$ appears universally (i.e., for a ``sufficiently large'' class of ensembles) when rescaling about a regular boundary point, where $\fif$, the ``free boundary function'', is given by
\begin{equation}\label{ffun}\fif(z)=\sfun_1(z):=\frac 1 {\sqrt{2\pi}}\int_{-\infty}^{0}e^{-(z-t)^2/2}\, dt=\frac 1 2 \erfc\frac z{\sqrt{2}}.\end{equation}

For the hard edge Ginibre ensemble,
a direct computation with the orthogonal polynomials in \cite[Section 2.3]{AKM} shows that,
under a natural scaling about a boundary point, the point process
of eigenvalues converges to the determinantal point field determined by the 1-point function $R(z)=\sfun_\infty(z+\bar{z})\cdot \1_\L(z)$
where $\1_\L$ is the indicator function of the left half plane $\L=\{\re z<0\}$ and where $\sfun_\infty$, the ``hard edge plasma function'', is defined by
\begin{equation}\label{hfun} \sfun_\infty(z)=\frac 1 {\sqrt{2\pi}}\int_{-\infty}^0\frac {e^{-(z-t)^2/2}}{\fif(t)}\, dt.\end{equation}
As far as we know, this function appeared first in the physical paper \cite{Sm} from 1982, cf. \cite[Section 15.3.1]{Fo};
see Figure 1.

We shall introduce a scale of point-processes depending on a \textit{confinement-parameter} $\param$ such that the values $\param=1$ and $\param=\infty$ correspond to the free boundary and the hard edge, respectively.

We note at this point that the functions $\sfun_1$ and $\sfun_\infty$ both have the convolution structure
\begin{equation}\label{def:soft}
\sfun_{\param}(z) = \frac{1}{\sqrt{2\pi}}\int_{-\infty}^{0}
\frac{e^{-(z-t)^2/2}}{\Phi_{\param}(t)} \, dt,
\end{equation}
where $\Phi_1=1$ and $\Phi_\infty=\fii$. We shall show that if $\param>0$, then the choice
\begin{equation}\label{defphi}\Phi_\param(t):=\fii(t)+\frac 1 {\sqrt{\param}}(1-\fii(\frac t{\sqrt{\param}}))\exp\left\{\frac {(1-\param)t^2}{2\param}\right\}\end{equation}
leads to a scale of new determinantal point fields with $1$-point intensities
\begin{equation}\label{1pt}R(z)=R^{(\param)}(z)=\sfun_\param(z+\bar{z})\exp\left\{2(1-\param)(\re z)_+^2\right\},\end{equation}
where we write $x_+=\max\{x,0\}$  and (to avoid bulky notation) $x_+^2$ instead of $(x_+)^2$.

We shall find that these point fields emerge naturally as scaling limits about regular boundary points of the droplet, if we set up appropriate boundary confinements.
In the limit as $\param\to 0+$ we will establish existence of a new point field in the critical case where the droplet ceases to be well defined. This
point field might be said
to model an idealized ultraweak edge; its density profile is depicted in Figure \ref{fig1} and also in Figure 3.

\begin{rmk*}
Hard edge ensembles are well known in the Hermitian theory, where they are usually associated with the Bessel kernel \cite{Fo,Fo2,TW2}. Another possibility, a ``soft/hard edge'', appears when a soft edge is replaced by a hard edge cut. This situation was studied by Claeys and Kuijlaars in the paper \cite{CKu08}. In this case a Painlev\'{e} II kernel arises instead of a Bessel kernel. The hard edges in the present note (in the case $\param=\infty$) are actually of the soft/hard type, but
to keep our terminology simple, we prefer to use the adjective ``hard''. (We follow the papers \cite{AKM,AKMW} in this connection.)
\end{rmk*}

\subsection{Basic setup}
Fix a function (``external potential'') $Q:\C\to\R\cup\{+\infty\}$ and write $\Sigma=\{Q<+\infty\}$. We assume that $\Int \Sigma$ be dense in $\Sigma$ and that $Q$ be lower semicontinuous on $\C$ and real-analytic on $\Int\Sigma$ and ``large'' near $\infty$:
\begin{equation}\label{growth}\liminf_{\zeta\to\infty}\frac {Q(\zeta)}{\log|\zeta|^2}>1.\end{equation}

We next form the \textit{equilibrium measure} $\sigma$ in external potential $Q$, namely the measure $\mu$ that minimizes the weighted energy
\begin{equation}\label{q-en} \displaystyle I_Q[\mu]=\iint_{\C^2}\log\frac 1 {|\zeta-\eta|}\, d\mu(\zeta)d\mu(\eta)+\int_\C Q\, d\mu\end{equation}
amongst all compactly supported Borel probability measures $\mu$ on $\C$.
It is well known \cite{ST} that $\sigma$ is unique, is absolutely continuous, and takes the form
\begin{equation}\label{EQMEAS}d\sigma(\zeta)=\Lap Q(\zeta)\cdot\1_S(\zeta)\, dA(\zeta),\end{equation}
where $S$ is a compact set which we call the \textit{droplet} in potential $Q$. Here and henceforth we use the convention that $\Lap=\d\dbar$ denotes $1/4$ times the usual Laplacian, while $dA=dxdy/\pi$ is Lebesgue measure normalized so that the unit disk has measure $1$.

We will in the following assume that $S\subset\Int\Sigma$ and that $S$ be connected.
Under our assumptions, $S$ is finitely connected and the boundary $\d S$ is a union of a finite number of real-analytic arcs, possibly with finitely many singular points which can be certain types of cusps and/or double points. See e.g., \cite{AKMW,LM}.

We shall consider the \textit{outer boundary}
$\Gamma=\d \Pc S,$
where the \textit{polynomially convex hull} $\Pc S$ is the
union of $S$ and the bounded components of $\C\setminus S$.
Thus $\Gamma$ is a Jordan curve having possibly finitely many singular points. We shall assume that $\Gamma$ be \textit{everywhere regular}, i.e., that there are no singular points on $\Gamma$.

\begin{rmk*} We can do with weaker assumptions: for instance concerning regularity it suffices to assume that $Q$ be real-analytic in a neighborhood of $\Gamma$ and, say, $C^2$-smooth on $\Int\Sigma$.
\end{rmk*}

We next recall a basic potential-theoretic construct: the \textit{obstacle function} $\check{Q}(\zeta)$ in external potential $Q$. This function can be defined in several ways, e.g., as the maximal subharmonic function which is less than $Q$ on $\C$ and grows at most like $\log|\zeta|^2 + O(1)$ at infinity,
or as $\gamma-2U^\sigma(\zeta)$ where
$U^\sigma(\zeta)=\int_\C \log\frac1 {|\zeta-\eta|}\, d\sigma(\eta)$ is the logarithmic potential of $\sigma$, and $\gamma$ is a (``modified Robin's'') constant.

The properties of $\check{Q}$ to be used below are (i) $\check{Q}=Q$ on $S$ and $\check{Q}\le Q$ everywhere, (ii) $\check{Q}$ is $C^{1,1}$-smooth on $\C$ and harmonic
on the complement $S^c$ of $S$, (iii) $\check{Q}(\zeta)= \log|\zeta|^2+O(1)$ as $\zeta\to\infty$. (See \cite{HM,ST} for proofs.)

In general, it might happen that $\check{Q}=Q$ on some points of the complement $S^c$, called ``shallow points'' in \cite{HM}. With a mild restriction, we will assume that no such points exist, i.e., we assume that the droplet $S$ equals to the coincidence set
$\{Q=\check{Q}\}$ and that $\check{Q}<Q$ everywhere on $S^c$.

After these proviso, we introduce our main object of study.

We fix a number (the ``confinement-constant'') $\param$ with $0<\param<\infty$
and consider the modified ($C^{1,1}$-smooth) potential $Q^{(\param)}$ defined by
$$Q^{(\param)}(\zeta)={\param}Q(\zeta)+(1-{\param})\check{Q}(\zeta).$$

Observe that when $\param=1$ we have $Q^{(\param)}=Q$ and if $\param=\infty$ we recover the hard edge potential $Q^{(\infty)}=Q+\infty\cdot \1_S$ which has been denoted $Q^S$ in papers such as \cite{HM,AKM,AKMW}.
As $\param\to 0$ we recover the obstacle function $Q^{(0)}=\check{Q}$, which is not an admissible potential since it fails to satisfy the growth condition \eqref{growth}.

Given a positive value of the confinement constant $\param$ it is natural to study corresponding planar Coulomb gas ensembles determined by a partition function of the form
$$Z_{n,\param}^{\beta}:=\int_{\C^n} e^{-\beta H_{n,\param}}\, dV_n,\qquad H_{n,\param}(\zeta_1,\ldots,\zeta_n)=\sum_{j\ne k}^n\log\frac 1 {|\zeta_j-\zeta_k|}
+n\sum_{j=1}^nQ^{(\param)}(\zeta_j),$$
where we write $dV_n$ for the usual Lebesgue measure in $\C^n$ divided by $\pi^n$. Here $\beta$ is an arbitrary positive constant
(the ``inverse temperature'').

We next introduce a Boltzmann-Gibbs type probability measure on $\C^n$ by
\begin{equation}\label{gibb}d\Prob_{n,\param}^{\beta}=\frac 1 {Z_{n,\param}^{\beta}}\cdot e^{-\beta H_{n,\param}}\, dV_n\end{equation}
and consider configurations $\{\zeta_j\}_1^n$ of points in $\C$, picked randomly with respect to this measure.

By arguing as in the free boundary case (see \cite{HM}) it is easy to verify that the system $\{\zeta_j\}_1^n$ roughly tends to follow the equilibrium distribution, in the sense that, for each bounded and continuous function $f$ on $\C$, one has the convergence
$$\frac 1 n \Expe_{n,\param}^{\beta}[f(\zeta_1)+\cdots +f(\zeta_n)]\to \sigma (f),\qquad (n\to\infty).$$

 As for any point process, the system $\{\zeta_j\}_1^n$ is determined by the collection of its $k$-point intensity functions $\bfR_{n,k}=\bfR_{n,k}^{(\param)}$. A characterization of these functions seems to be quite a hard enterprise. In this note, we shall henceforth restrict to the important \textit{determinantal case} $\beta=1$, leaving other $\beta$ to a future investigation.

In the case $\beta=1$, we have the basic determinant formula
$$\bfR_{n,k}^{}(\eta_1,\ldots,\eta_k)=\det(\bfK_n^{}(\eta_i,\eta_j))_{i,j=1}^k,$$
where the \textit{correlation kernel} $\bfK_n^{}$ can be taken as the reproducing kernel for the subspace $\calW_n^{}$ of $L^2=L^2(\C,dA)$ consisting of all ``weighted polynomials'' $w=pe^{-nQ^S/2}$ where $p$ is an analytic polynomial of degree at most $n-1$. (This \textit{canonical} correlation kernel is used
without exception below.)

We will write $\bfR_n^{}=\bfR_{n,1}^{}$ for the $1$-point function, which is the key player in our discussion below.

\subsection{Scaling limit}
Let us now fix a (regular) point on the outer boundary $\Gamma$, without loss of generality we place it at the origin, and so that the outwards normal to $\Gamma$ at $0$ points in the positive real direction.

We define a \textit{rescaled process} $\{z_j\}_1^n$ by magnifying distances about $0$ by a factor $\sqrt{n\Lap Q(0)}$,
$$z_j=\sqrt{n\Lap Q(0)}\,\zeta_j,\qquad\qquad (j=1,\ldots,n).$$

In general, we will denote by $\zeta,z$ two complex variables related by $z=\sqrt{n\Lap Q(0)}\, \zeta$.
We regard the droplet $S$ as a subset of the $\zeta$-plane.
 Restricting to a fixed bounded subset of the $z$-plane, the image of the droplet then more and more resembles left half plane $\L$, as $n\to\infty$; see Figure 2.

\begin{figure}[ht]
\begin{center}
\includegraphics[width=.4\textwidth]{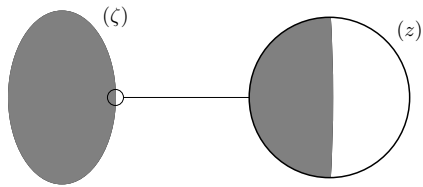}
\end{center}
\caption{Rescaling about a boundary point.}
\end{figure}

Following \cite{AKM,AKMW} we denote by plain symbols $R_n^{},K_n^{}$, etc., the $1$-point function, canonical correlation kernel, etc., with respect to the process $\{z_j\}_1^n$. Here the canonical kernel $K_n$ is, by definition
\begin{equation*}K_n^{}(z,w)=\frac 1 {n\Lap Q(0)}\bfK_n^{}(\zeta,\eta),\quad z=\sqrt{n\Lap Q(0)}\,\zeta,\quad w=\sqrt{n\Lap Q(0)}\,\eta.\end{equation*}

Recall that a function of the form $c(\zeta,\eta)=g(\zeta)\bar{g}(\eta)$, where $g$ is a continuous unimodular function, is called a \textit{cocycle}. A function $h(z,w)$ is \textit{Hermitian} if $\bar{h}(w,z)=h(z,w)$, and \textit{Hermitian-analytic} if furthermore $h$ is analytic in $z$ and in $\bar{w}$. Finally, the \textit{Ginibre kernel} is
$$G(z,w)=e^{-|z|^2/2-|w|^2/2+z\bar{w}}.$$
This is the correlation kernel of the infinite Ginibre ensemble, which emerges by rescaling about a regular bulk point, see e.g., \cite{AKM}.

\begin{thm}\label{exker} (``Structure of limiting kernels, Ward's equation'').
\begin{enumerate}[label=(\roman*)]
 \item \label{1o} There exists a sequence of cocycles $c_n$ such that each subsequence of the sequence $(c_nK_n)$ has a subsequence converging boundedly in $\C^2$ and
locally uniformly in $(\C\setminus i\R)^2$ to a Hermitian limit $K$.

\item\label{2o} Each limiting kernel $K$ in (i) is of the form
$$K(z,w)=G(z,w)\Psi(z,w)\exp\left\{(1-{\param})((\re z)_+^2+(\re w)_+^2)\right\}$$
where $\Psi$ is some Hermitian-entire function.

\item\label{3o} Each limiting 1-point function $R(z)=K(z,z)$ is everywhere strictly positive and satisfies the modified Ward equation
\begin{equation}\label{modward}\dbar C=R-1-\Lap\log R+(1-\param)\1_{\re z>0},\end{equation}
where $C(z)$ is the Cauchy-transform of the Berezin kernel corresponding to $R$ (see Section \ref{sec:Ward}).
\end{enumerate}
\end{thm}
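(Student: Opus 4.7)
The plan is to follow the compactness and Ward-identity strategy of \cite{AKM,AKMW}, with modifications accounting for the obstacle function in $Q^{(\param)}=\param Q+(1-\param)\check Q$. For (i), I would start from the factorization $\bfK_n(\zeta,\eta)=\bfL_n(\zeta,\eta)\,e^{-\frac{n}{2}(Q^{(\param)}(\zeta)+Q^{(\param)}(\eta))}$, in which $\bfL_n$ is Hermitian-entire. A uniform diagonal bound $\bfR_n(\zeta)\le C\,n\Lap Q(\zeta)$ on a fixed neighborhood of $0$ (available by standard weighted $L^2$ estimates, since $Q^{(\param)}$ is $C^{1,1}$ and $\Lap Q^{(\param)}\ge\min(1,\param)\Lap Q$ near $0$) transfers under rescaling to a uniform bound $R_n(z)\le C$. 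Cocycles $c_n(\zeta,\eta)=g_n(\zeta)\overline{g_n(\eta)}$ are then chosen to absorb the $n$-dependent oscillating holomorphic parts in the Taylor expansion of $nQ^{(\param)}$ at $0$; a Montel-type argument extracts a subsequence for which $c_n\bfL_n$ converges locally uniformly on $\C^2$ to a Hermitian-entire limit, while $c_n K_n$ converges boundedly on $\C^2$ and locally uniformly on $(\C\setminus i\R)^2$.

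The decisive computation for (ii) is the rescaled expansion of $Q^{(\param)}$. Normalizing $\Lap Q(0)=1$ and flattening $\Gamma$ near $0$ so that the rescaled droplet approximates $\L$, a Taylor comparison of $Q$ with its harmonic extension $\check Q$ across the imaginary axis gives
\[
n\bigl(Q(\zeta)-\check Q(\zeta)\bigr)=2(\re z)_+^2+o(1),
\]
since the harmonic extension of $y^2$ from $i\R$ into $\{\re z>0\}$ is $y^2-x^2$. Consequently $nQ^{(\param)}(\zeta)=nQ(\zeta)-2(1-\param)(\re z)_+^2+o(1)$, which combined with the usual expansion $nQ(\zeta)=|z|^2+h_n(z)+o(1)$ (with $h_n$ harmonic, absorbed by the cocycle) puts $K_n$ in the form
\[
K_n(z,w)=G(z,w)\,\Psi_n(z,w)\,\exp\bigl\{(1-\param)\bigl((\re z)_+^2+(\re w)_+^2\bigr)\bigr\}
\]
with $\Psi_n$ Hermitian-entire. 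The compactness from (i) together with Weierstrass' theorem produces a subsequential limit $\Psi_n\to\Psi$ locally uniformly on $\C^2$, with $\Psi$ Hermitian-entire.

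For (iii), differentiating the partition function $Z_{n,\param}^{1}$ along a compactly supported smooth vector field $\psi$ yields the Ward identity
\[
\Expe_{n,\param}\Bigl[\sum_{j\ne k}\frac{\psi(\zeta_j)}{\zeta_j-\zeta_k}-n\sum_j\d Q^{(\param)}(\zeta_j)\,\psi(\zeta_j)+\sum_j\d\psi(\zeta_j)\Bigr]=0.
\]
Rewriting in terms of $\bfR_n$ and $\bfR_{n,2}$, rescaling about $0$, and passing to the limit using (i)--(ii) gives, as in the free boundary case of \cite{AKMW}, the identity $\dbar C=R-\Lap_0^{(\param)}-\Lap\log R$, where $\Lap_0^{(\param)}$ is the pointwise limit of the rescaled $\Lap Q^{(\param)}$. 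Since $\check Q=Q$ on $S$ and $\check Q$ is harmonic on $S^c$, this limit equals $\1_{\re z<0}+\param\,\1_{\re z>0}=1-(1-\param)\1_{\re z>0}$, yielding exactly \eqref{modward}. Strict positivity of $R$ follows from a lower bound for $K_n(z,z)$ obtained by inserting Gaussian trial polynomials into the reproducing-kernel inequality. The main technical obstacle is the analysis across $i\R$: since $Q^{(\param)}$ is only $C^{1,1}$ there, one must verify that the singular part of $nQ^{(\param)}$ in rescaled coordinates is captured \emph{exactly} by the explicit factor $\exp\{(1-\param)((\re z)_+^2+(\re w)_+^2)\}$, with no residual singularity leaking into $\Psi$, so that the latter remains genuinely entire.
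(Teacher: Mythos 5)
Your proposal follows the same high-level strategy as the paper (normal-family compactness for the rescaled kernel, the Taylor expansion $n(Q-\check Q)(\zeta)=2(\re z)_+^2(1+O(n^{-1/2}))$ near a regular boundary point, and the rescaled Ward identity with the correction $\Lap Q^{(\param)}/\Lap Q(0)\to 1-(1-\param)\1_{\re z>0}$), and all three ingredients are essentially right. The one organizational difference worth noting: you factor out the raw weight $e^{-n(Q^{(\param)}(\zeta)+Q^{(\param)}(\eta))/2}$ and run Montel on the rescaled polynomial kernel $L_n$, while the paper divides $K_n$ by an auxiliary kernel $K_n^{\#}$ built from the Hermitian-analytic extension $A$ of $Q$ (not of $Q^{(\param)}$). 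The advantage of the paper's choice is that $\Psi_n:=K_n/K_n^{\#}=k_n/k_n^{\#}$ is manifestly Hermitian-entire — the non-analytic weight $e^{-nQ^{(\param)}}$ cancels — so the concern you raise in your closing sentence (whether the $C^{1,1}$ singularity of $Q^{(\param)}$ on $i\R$ leaks into $\Psi$) is resolved automatically: the singular part sits entirely in $c_n K_n^{\#}\to G(z,w)\,e^{(1-\param)((\re z)_+^2+(\re w)_+^2)}$, and $\Psi_n$ is locally bounded via $|\Psi_n(z,w)|^2\le R_n(z)R_n(w)/|c_nK_n^{\#}(z,w)|^2$. In your organization the same conclusion follows by identifying the limit of $c_n L_n$ with $e^{z\bar w}\Psi(z,w)$ after dividing out the Gaussian and the explicit exponential, but the Hermitian-entirety of the limit has to be argued by hand rather than being structural. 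One other small caution: the diagonal bound you invoke should be the global estimate of the type $\bfR_n(\zeta)\le C_\param\,n\,e^{-n\param(Q-\check Q)(\zeta)}$ (the paper's Lemma \ref{goodb}), not merely a local bound on a fixed neighborhood of $0$; only the former yields the ``bounded on $\C^2$'' clause and, after rescaling, the crucial Gaussian decay $R_n(z)\le C_\param e^{-2\param(\re z)_+^2}$ that makes the $\Psi_n$'s locally bounded.
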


By Theorem \ref{exker} and standard arguments (e.g., Macchi-Soshnikov's theorem, see \cite[Lemma 1]{AKMW}) we obtain immediately the existence and uniqueness 
of a non-trivial limiting point field
 $\{z_j\}_1^\infty$ corresponding to each limiting 1-point function $R$ in Theorem \ref{exker}.

 We have the following theorem.

\begin{thm}\label{thm:ml}
If $Q$ is radially symmetric, or more generally, if conditions (P1) and (P2) in Section \ref{apqq} are satisfied, then
the rescaled process $\{z_j\}_1^n$ converges as $n\to\infty$ to a unique determinantal point field $\{z_j\}_1^\infty$ determined by the $1$-point function $R^{(\param)}$ in \eqref{1pt}.
\end{thm}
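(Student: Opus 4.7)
The plan is to combine the general structural result of Theorem \ref{exker} with an explicit asymptotic computation of the correlation kernel on the diagonal, most transparent in the radially symmetric case, and then to extend to the hypotheses (P1)-(P2) by the reduction developed in \cite{AKMW}.

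By Theorem \ref{exker}, after passing to a subsequence I may assume $c_n K_n$ converges to a limiting Hermitian kernel $K(z,w) = G(z,w)\Psi(z,w) e^{(1-\param)((\re z)_+^2 + (\re w)_+^2)}$, with $\Psi$ Hermitian-entire, and with $R(z) = K(z,z)$ satisfying the modified Ward equation \eqref{modward}. Under radial symmetry of $Q$, the finite-$n$ kernel is invariant under rotation about the origin, so the rescaled $1$-point function inherits translation invariance along the tangent direction to the boundary. This symmetry passes to the limit, giving $R(z) = f(\re z)$ for a positive function $f$ on $\R$, and it suffices to identify $f$.

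In the radial case I would exploit the orthogonal-monomial representation
\begin{equation*}
\bfK_n(\zeta, \eta) = e^{-\tfrac n 2 Q^{(\param)}(\zeta) - \tfrac n 2 Q^{(\param)}(\eta)} \sum_{k=0}^{n-1} \frac{(\zeta \bar\eta)^k}{h_{k,n}}, \qquad h_{k,n} = \int_\C |\zeta|^{2k} e^{-n Q^{(\param)}(\zeta)} \, dA(\zeta),
\end{equation*}
and estimate $h_{k,n}$ by Laplace's method around the critical radius $\tau_k$ determined by $\d_\tau Q^{(\param)}(\tau_k) = 2k/(n\tau_k)$. For $k$ in a window of size $O(\sqrt n)$ around $n$, the critical radius $\tau_k$ traverses the boundary circle $|\zeta|=\tau_0$. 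On $S$ the weight $Q^{(\param)}$ coincides with $Q$, while on $S^c$, since $\check Q$ is harmonic, one has $\Lap Q^{(\param)} = \param \Lap Q$; thus the quadratic Taylor expansion of $Q^{(\param)}$ around the boundary point $\zeta_0$ has curvature $\Lap Q(\zeta_0)$ on one side and $\param\,\Lap Q(\zeta_0)$ on the other. After magnification $z = \sqrt{n\Lap Q(\zeta_0)}(\zeta - \zeta_0)$ and reindexing $k - n = \sqrt n\, s(t)$, the two Gaussian halves of the Laplace integral for $h_{k,n}$ combine precisely to $\Phi_\param(t)$ as in \eqref{defphi}. Summation over $k$ becomes a Riemann integral in $t$ over $(-\infty,0)$, producing the convolution \eqref{def:soft}. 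The outer exponential factor $e^{2(1-\param)(\re z)_+^2}$ in \eqref{1pt} comes from the quadratic correction $n(Q - Q^{(\param)}) = n(1-\param)(Q - \check Q)$ in the weighting of the kernel outside $S$.

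The main obstacle is the transition regime. The obstacle function $\check Q$ is only $C^{1,1}$ across $\d S$, so $\Lap Q^{(\param)}$ has a jump discontinuity at the critical circle, and the standard Laplace-type analysis from \cite{AKM, AKMW} (valid for smooth weights) must be carried out uniformly across this discontinuity. It is precisely this uniform matching of Gaussian tails of unequal variances that produces the characteristic exponential $e^{(1-\param)t^2/(2\param)}$ in the formula for $\Phi_\param$. Once the radial case is settled, the hypotheses (P1)-(P2) allow one to reduce the general asymptotic computation on the boundary to the radial one (as in \cite{AKMW, HW17}); the uniqueness of the limiting determinantal point field then follows from Macchi--Soshnikov, as recorded after Theorem \ref{exker}.
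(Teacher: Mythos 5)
Your outline for the radial case is sound in spirit, but it diverges substantially from the paper's argument, and the final sentence about (P1)--(P2) rests on a misreading of what those conditions are.

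The core idea you propose --- estimate $h_{k,n}=\int_\C |\zeta|^{2k}e^{-nQ^{(\param)}}\,dA$ by a Laplace expansion around the critical radius, noting that $\Lap Q^{(\param)}=\Lap Q$ inside $S$ and $\Lap Q^{(\param)}=\param\Lap Q$ outside (since $\check Q$ is harmonic there), so that the two Gaussian halves of unequal variance recombine into $\Phi_\param$ --- is a valid alternative route for the radial case. It is close in spirit to what the paper does in Section~\ref{sec:ex} for the pure Ginibre potential, where the same two-sided matching is carried out via incomplete Gamma functions and normal approximation of the Poisson distribution. Your identification of the factor $e^{2(1-\param)(\re z)_+^2}$ from $n(Q-Q^{(\param)})=n(1-\param)(Q-\check Q)$ is also correct and matches \eqref{taye}. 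However, the paper's actual proof of Theorem~\ref{thm:ml} in Section~\ref{sec:err} does not touch $h_{k,n}$ at all. Instead it builds approximate quasipolynomials $F_{j,n}$ from conformal mapping data \eqref{qop}, proves approximate normalization and orthogonality (Lemmas~\ref{ap1} and~\ref{ap2}), converts these into a pointwise approximation $p_{j,n}\approx F_{j,n}$ near $\Gamma$ (Lemma~\ref{lem:pwq}), and extracts the limit of $R_n^\sharp$ from a Riemann-sum analysis using the two-sided expansion of $R_{\param,\tau}=(Q^{(\param)}-V_\tau)\circ\phi_\tau^{-1}$ across the image of $\pa S$ (Lemmas~\ref{lem:ep} and~\ref{lem:un}). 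The two-sided Gaussian matching you identify does appear there, but it enters through the conformally transported weight, not through the monomial norms. Your route, if carried out, would be more elementary for radial $Q$; the paper's route has the advantage of separating the curvature analysis from the exact orthogonality of the basis, which is what allows the statement to be phrased under (P1)--(P2) without radial symmetry.

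The genuine gap is in your last paragraph. Conditions (P1)--(P2) in Section~\ref{apqq} do not ``reduce the general asymptotic computation on the boundary to the radial one.'' They are two specific $L^2$-estimates on the quasipolynomials $F_{j,n}$: (P1) says $\|\chi_0 F_{j,n}\|_{nQ^{(\param)}}\approx 1$, and (P2) says $\chi_0 F_{j,n}$ is nearly orthogonal to polynomials of lower degree. The logical structure in the paper is the reverse of what you propose: radial symmetry is used as an \emph{input} to verify (P2) (Lemma~\ref{ap2}), while (P1) holds in general; then (P1)--(P2) together power the H\"ormander $\dbar$-correction and pointwise approximation that yield the theorem. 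There is no comparison to, or transfer from, a radial model. Moreover, your Laplace computation of $h_{k,n}$ already presupposes that the monomials are orthogonal --- i.e., that $Q$ is radial --- so it cannot be the starting point for proving the statement under (P1)--(P2) alone. To cover that generality you would have to argue through the quasipolynomial framework rather than through $h_{k,n}$.
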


Note in particular that Theorem \ref{thm:ml} proves existence of a determinantal point field with 1-point function $R^{(\param)}$.

Observe also that the $1$-point intensities $R^{(\param)}(z)$ converge locally uniformly as $\param\to0+$ to the function
\begin{align}R^{(0)}(z)&=\sfun_0(z+\bar{z})\exp\{2(\re z)_+^2\},\\
\sfun_0(z)&=\int_{-\infty}^0 \frac {e^{-(z-t)^2/2}}{\sqrt{2\pi}\fii(t)-t^{-1}e^{-t^2/2}}\, dt.
\end{align}

In view of this convergence, standard arguments imply that
 the point fields $\{z_j\}_1^\infty$ with 1-point functions $R^{(\param)}$ converge (in the sense of point fields) to a new determinantal point
field with 1-point function $R^{(0)}$. It is also easy to establish convergence on the level of Ward equations. We summarize this in the form of a theorem.

\begin{thm} There exists a unique determinantal point field in $\C$ with 1-point function $R^{(0)}$. The 1-point function $R^{(0)}$ gives rise to
a solution to the generalized Ward equation \eqref{modward} with parameter value $\param=0$.
\end{thm}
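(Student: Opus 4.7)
The plan is to realize the $\param=0$ point field as a limit of the point fields produced by Theorem \ref{thm:ml} and then pass the Ward equation \eqref{modward} to the limit. Concretely, for each $\param>0$ let $K^{(\param)}$ denote the limiting correlation kernel supplied by Theorem \ref{exker}, so that, by Theorem \ref{thm:ml}, $K^{(\param)}$ determines the (unique) determinantal point field with 1-point intensity $R^{(\param)}$ given by \eqref{1pt}. I will show that as $\param\to 0+$ the kernels $K^{(\param)}$ converge, up to cocycles, to a Hermitian limit $K^{(0)}$, and that this limit is the correlation kernel of a unique determinantal point field with 1-point function $R^{(0)}$.

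Step one is convergence of the intensities. From \eqref{def:soft} and \eqref{defphi} one checks by dominated convergence that $\sfun_{\param}\to \sfun_{0}$ locally uniformly on $\R$ as $\param\to 0+$, and hence $R^{(\param)}\to R^{(0)}$ locally uniformly on $\C$. Step two is convergence of the kernels. By Theorem \ref{exker}\,\ref{2o} we may write
$$K^{(\param)}(z,w)=G(z,w)\,\Psi^{(\param)}(z,w)\exp\bigl\{(1-\param)\bigl((\re z)_+^2+(\re w)_+^2\bigr)\bigr\},$$
with $\Psi^{(\param)}$ Hermitian-entire. Since $R^{(\param)}(z)=K^{(\param)}(z,z)$ is locally uniformly bounded in both $z$ and $\param$, the diagonal values $\Psi^{(\param)}(z,z)$ are locally uniformly bounded, and by Hermitian-analyticity together with a polarization/Cauchy integral argument so are the values $\Psi^{(\param)}(z,w)$ on compact sets of $\C^2$. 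Hence $\{\Psi^{(\param)}\}$ forms a normal family, and any subsequential limit $\Psi^{(0)}$ is Hermitian-entire. But a Hermitian-entire function is determined by its diagonal (expand $\Psi(z,w)=\sum a_{mn}z^m\bar w^n$ with $\bar a_{nm}=a_{mn}$ and read off the coefficients from $\Psi(z,z)$), and the diagonal of $\Psi^{(0)}$ is forced to be $R^{(0)}(z)\exp\{-|z|^2-2(\re z)_+^2\}$. Thus the full limit $K^{(0)}:=G\cdot\Psi^{(0)}\cdot\exp\{(\re z)_+^2+(\re w)_+^2\}$ exists and is uniquely determined.

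Step three is to produce the point field. Since each $K^{(\param)}$ is a correlation kernel of a determinantal process (a positive, locally trace-class, Hermitian contraction on $L^2(\C,dA)$), these properties are preserved under the locally uniform convergence of kernels with Gaussian off-diagonal decay coming from the factor $G(z,w)$. Macchi--Soshnikov (as in \cite[Lemma 1]{AKMW}) then yields a determinantal point field with kernel $K^{(0)}$, and this point field is unique among determinantal point fields with 1-point function $R^{(0)}$ because the kernel is, up to an irrelevant cocycle, determined by its diagonal in view of the structural constraint just described.

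Finally, to pass \eqref{modward} to the limit with $\param=0$, note that each individual term converges distributionally: $R^{(\param)}\to R^{(0)}$ and $\Lap\log R^{(\param)}\to \Lap\log R^{(0)}$ locally by step one (since $R^{(\param)}>0$ uniformly on compacts in the limit as well), while $(1-\param)\1_{\re z>0}\to \1_{\re z>0}$. The Berezin kernels $B^{(\param)}(z,w)=|K^{(\param)}(z,w)|^2/R^{(\param)}(z)$ converge locally uniformly to $B^{(0)}$, and the Gaussian decay in $|z-w|$ coming from $G$ gives a uniform integrable majorant, so their Cauchy transforms $C^{(\param)}$ converge to $C^{(0)}$ in the sense of distributions; hence $\dbar C^{(\param)}\to\dbar C^{(0)}$. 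Combining, $R^{(0)}$ satisfies \eqref{modward} with $\param=0$. The main obstacle I anticipate is making the convergence of the Cauchy transform $\dbar C^{(\param)}\to\dbar C^{(0)}$ precise near the boundary $i\R$, where the convergence of $K^{(\param)}$ is only bounded and not locally uniform; this should be surmountable by using the decay of the Berezin kernel away from the diagonal to truncate the Cauchy integral and reduce to interior locally uniform convergence.
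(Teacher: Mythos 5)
Your proposal is correct and follows essentially the same route as the paper: the paper's argument is precisely the terse paragraph preceding the theorem, which observes $R^{(\param)}\to R^{(0)}$ locally uniformly as $\param\to 0+$ and then invokes ``standard arguments'' to pass both the determinantal point fields and the Ward equation to the limit, exactly as you do. Two small remarks on your write-up: the diagonal of $\Psi^{(0)}$ should read $R^{(0)}(z)\exp\{-2(\re z)_+^2\}$ with no extra $e^{-|z|^2}$ (since $G(z,z)=1$), and the cleanest way to get off-diagonal bounds for $\Psi^{(\param)}$ is the Cauchy--Schwarz inequality for the positive semi-definite correlation kernel, $|\Psi^{(\param)}(z,w)|^2\le\Psi^{(\param)}(z,z)\Psi^{(\param)}(w,w)$, rather than a polarization/Cauchy-integral argument; also note that the obstacle you anticipate near $i\R$ is not actually present here, because the $\param\to 0+$ limit of the explicit kernels $G(z,w)\sfun_\param(z+\bar w)e^{(1-\param)((\re z)_+^2+(\re w)_+^2)}$ is locally uniform on all of $\C^2$ (the non-uniformity near $i\R$ in Theorem \ref{exker} concerns the $n\to\infty$ limit, not the $\param\to 0$ limit).
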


It is interesting to note that $R^{(0)}$ has a heavy tail in the sense that
\begin{align*}
R^{(0)}(x): &= \sfun_0(2x) \, e^{2x^2} = \frac{1}{4x^2} + O(x^{-3}), \quad x \to +\infty.
\end{align*}
By contrast, if $\param>0$ then $R^{(\param)}(x)\lesssim Ce^{-2\param x^2}$ as $x\to+\infty$, see Fig. 3.

\begin{figure}
\includegraphics[width=.45\textwidth]{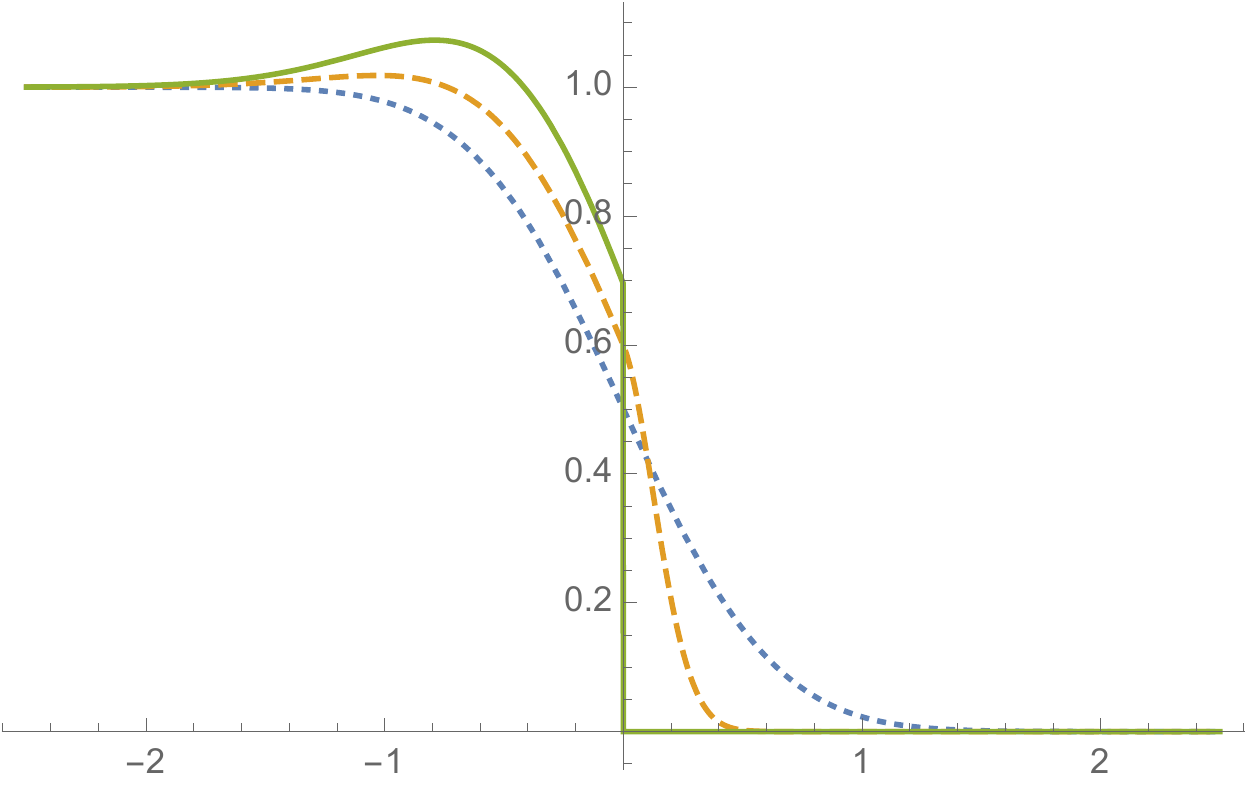}
\hspace{.05\textwidth}
\includegraphics[width=.45\textwidth]{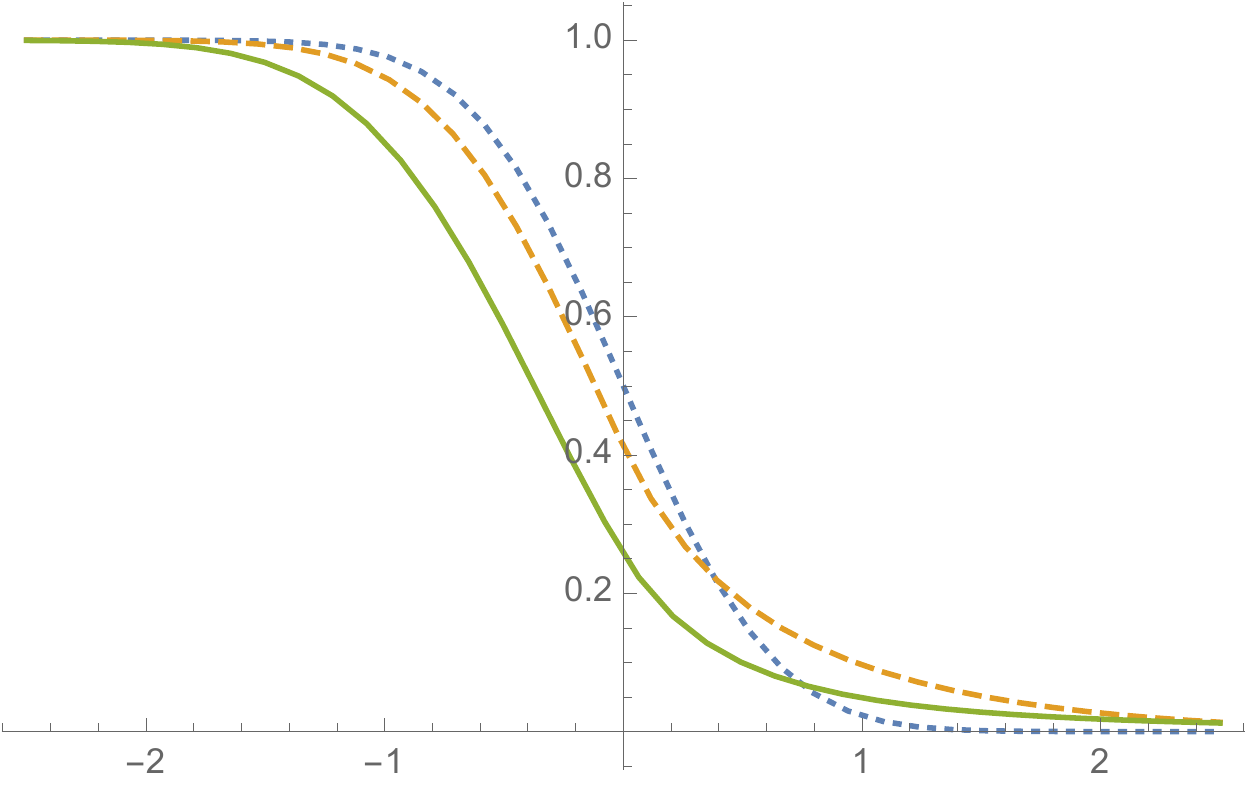}
\caption{Density profiles $R^{(\param)}$. The graphs for $\param=1$(blue dotted line), $10$(orange dashed line), and $\infty$(green line) are on the left, and those for $\param=1$(blue dotted line), $0.1$(orange dashed line), $0$(green line) are on the right.}
\label{fig3}
\end{figure}

\begin{rmk*} Likewise, by letting $\param\to+\infty$, we can recover the results concerning the hard edge point field with 1-point function $R^{(\infty)}$ from the note \cite{A18}. We refer to \cite{AKMW,Fo} for further details about hard edge point fields and their applications.
\end{rmk*}

\subsection{Distribution of the maximum modulus} Let us now assume that the external potential $Q$ is radially symmetric, and let $\{\zeta_j\}_1^n$ be a random sample from the corresponding Gibbs' distribution.

We shall denote by the symbol ``$|\zeta|_n$'' the maximum modulus of the sample, i.e., the random variable
$$|\zeta|_n = \max_{1\leq j\leq n}|\zeta_j|.$$
We also write
$$\rho=\max\{|\zeta|;\, \zeta\in S\}$$
and introduce the constants
 $$\gamma_n = \log (n/2\pi) -2\log\log n + 2\log C_{\param},$$
where $C_{\param} = \rho\sqrt{\Lap Q(\rho)}/\Phi_{\param}(0) = 2\rho \sqrt{{\param}\Lap Q(\rho)}/(\sqrt{\param}+1)$.

Finally, we define a random variable $\omega_n$ by rescaling $|\z|_n$ about $\rho$ in the following way:
$$\omega_n = \sqrt{4n{\param}\gamma_n\Lap Q(\rho)}\left(|\zeta|_n - \rho - \sqrt{\frac{\gamma_n}{4n{\param}\Lap Q(\rho)}}\right).$$

Given these proviso, we have the following theorem, which generalizes earlier results due to Rider \cite{R03} and Chafa\"{i} and P\'{e}ch\'{e} \cite{CP14} in the case $\param=1$.

\begin{thm}\label{thm:sr} The random variable $\omega_n$ converges in distribution to the standard Gumbel distribution: for $x\in \R$
$$\lim_{n\to \infty} \Prob_n (\omega_n \leq x)= e^{-e^{-x}}.$$
\end{thm}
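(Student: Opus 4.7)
The plan is to exploit the Kostlan decomposition, which is available because $Q^{(\param)}$ is radially symmetric. Under this assumption the monomials $\{\zeta^k\}_{k=0}^{n-1}$ form an orthogonal basis for $\calW_n$ with respect to the weight $e^{-nQ^{(\param)}}$, and a classical argument (the radial version of Kostlan's theorem, used in \cite{R03,CP14} for the case $\param=1$) shows that the squared moduli $\{|\zeta_j|^2\}_{j=1}^n$ have the same joint distribution as a collection $\{Y_k\}_{k=0}^{n-1}$ of independent random variables where $Y_k$ has the density
$$
\rho_k(r) = \frac{1}{Z_k}\, r^k e^{-nQ^{(\param)}(\sqrt{r})}, \qquad Z_k = \int_0^\infty r^k e^{-nQ^{(\param)}(\sqrt{r})}\, dr, \qquad r>0.
$$
Consequently
$$
\Prob_n(|\zeta|_n \leq t) = \prod_{k=0}^{n-1}\bigl(1 - p_k(t)\bigr), \qquad p_k(t) := \Prob(Y_k > t^2),
$$
and the task is to evaluate this product asymptotically at the Gumbel threshold implicit in the definition of $\omega_n$.

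Next I would obtain Laplace-type asymptotics for both $Z_k$ and the tail integrals appearing in $p_k$. The function $r \mapsto k \log r - nQ^{(\param)}(\sqrt{r})$ has a unique maximum $r_k^\ast$, which is a smooth, increasing function of $k/n$ reaching $\rho^2$ at $k/n=1$. Inside the droplet one has $Q^{(\param)}=Q$, so standard Laplace gives $Z_k \sim \sqrt{2\pi/n}\cdot (r_k^\ast)^k e^{-nQ(\sqrt{r_k^\ast})} / \sqrt{\Lap Q(\sqrt{r_k^\ast})}$ up to polynomial corrections. The tail $p_k(t)$ for $t$ just beyond $\rho$ then reduces, after substituting $r = (t+s/\sqrt{n})^2$, to a one-sided Gaussian integral whose exponent mixes the $Q$-regime (for $r \le \rho^2$) and the $\param Q + (1-\param)\check{Q}$-regime (for $r > \rho^2$). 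Matching the second-order expansions of $Q(\sqrt{r})$ and $\check{Q}(\sqrt{r})$ at $\rho$, where they agree to first order, produces precisely the convolution defining $\Phi_\param$ in \eqref{defphi} evaluated at a rescaled boundary distance; in particular the normalization $\Phi_\param(0) = (\sqrt{\param}+1)/(2\sqrt{\param})$ appears as the value at the droplet boundary.

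With these asymptotics in place I would insert $t = \rho + \sqrt{\gamma_n/(4n\param\Lap Q(\rho))} + x/\sqrt{4n\param\gamma_n\Lap Q(\rho)}$ into the product and estimate
$$
-\log \Prob_n(|\zeta|_n \leq t) = \sum_{k=0}^{n-1} p_k(t) + O\Bigl(\sum_k p_k(t)^2\Bigr).
$$
Contributions from $k$ with $r_k^\ast$ bounded away from $\rho^2$ are super-polynomially small and can be discarded. For $k$ in a window of width $O(\sqrt{n\log n})$ below $n$, Euler--Maclaurin converts the remaining sum to a one-dimensional integral that, after the change of variable $u = \sqrt{n\param\Lap Q(\rho)}(\rho - \sqrt{r_k^\ast})$, becomes a Gaussian integral whose leading behavior is $e^{-x}$, with the precise constants $\gamma_n$ and $C_\param$ chosen exactly so as to absorb the logarithmic and polynomial prefactors. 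This yields $\Prob_n(\omega_n \leq x) \to e^{-e^{-x}}$ as desired.

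The principal obstacle is the transition step: producing uniform Laplace tail asymptotics for $p_k(t)$ in the narrow edge window, that correctly interpolate between integration against $Q$ (contribution from $r \in [0,\rho^2]$) and against $\param Q + (1-\param)\check{Q}$ (contribution from $r > \rho^2$). This is the source of the $\param$-dependence and of the replacement of the classical Rider--Chafa\"{i}--P\'ech\'e constant $C_1$ by $C_\param$. Once this uniform expansion is in hand, the truncation of small-$k$ indices, the passage from sum to integral, and the extraction of the double-exponential are routine adaptations of the arguments in \cite{R03,CP14}.
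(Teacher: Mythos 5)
Your plan is essentially the same strategy as the paper's proof of Theorem \ref{thm:sr}, merely expressed in the language of the Kostlan decomposition rather than the quasipolynomial package. The paper's starting point \eqref{pbpd} is exactly your product $\prod_k(1-p_k(t))$: for radial $Q^{(\param)}$ the orthonormal polynomials are normalized monomials, so the ratio $\int_{t^2}^\infty r^k e^{-nQ^{(\param)}(\sqrt r)}dr / Z_k$ that you call $p_k(t)$ coincides with $\int_{\D_e(0,t)}|p_{k,n}|^2 e^{-nQ^{(\param)}}dA$. After that, both proofs discard indices with $k/n < 1-\delta_n$, both do a two-sided Laplace analysis of the edge window that splits into an inside-the-droplet Gaussian piece (curvature $\Lap Q(\rho)$) and an outside piece (curvature $\param\Lap Q(\rho)$) whose sum produces $\Phi_\param$, and both use a Riemann-sum/Euler--Maclaurin step to convert $\sum_k p_k(t)$ into a one-dimensional integral evaluating to $e^{-x}$ under the Gumbel normalization of $\gamma_n$ and $C_\param$.

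The genuine difference is organizational. You propose raw Laplace asymptotics on the moment integrals $Z_k$ and their tails, computing the normalization and the tail separately and dividing; the paper instead invokes the approximate quasipolynomials $F_{j,n}$ from Section \ref{apqq}, whose definition (via $\phi_\tau$, $\calQ_\tau$, $\calH_{j,n}$) already bakes in the $\Phi_\param$ correction so that only the tail integral needs to be estimated. In the radial case your more elementary route works and is closer to the original Rider and Chafa\"{i}--P\'ech\'e arguments; the paper prefers the quasipolynomial route because the machinery is already in place from the proof of the scaling-limit Theorem \ref{thm:ml} and because properties (P1)--(P2) are stated in a way that could in principle extend beyond the radial case. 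Note, however, that your write-up explicitly leaves the crux --- the uniform mixed-regime Laplace expansion of $p_k(t)$ for $k$ in the $\sqrt{n}\log n$ edge window, which is where $\Phi_\param$ and hence $C_\param$ arise --- as an acknowledged gap. This is the content of Lemmas \ref{lem:ep}, \ref{lem:un} and the Taylor expansions in the proof of Lemma \ref{lem:sr}, and it is the nontrivial part; the rest (truncation, sum-to-integral, double exponential) is, as you say, routine.
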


\begin{rmk*} The Gumbel distribution is used to describe the distribution of extreme values. For example, the fluctuation of the maximum of i.i.d. gaussian random variables is expressed by a Gumbel distribution.
In the present context, it was shown by Rider \cite{R03} that the scaled maximal modulus of the free boundary Ginibre ensemble converges to the Gumbel distribution. This result was later generalized
to arbitrary radially symmetric potentials by Chafa\"{i} and P\'{e}ch\'{e} \cite{CP14}.

 For the hard edge ensemble, the situation is very different. In microscopic scale, the eigenvalues are distributed densely near the boundary inside the droplet. With a proper scaling, the limit law for the maximal modulus follows an exponential distribution. See \cite{Seo}. The exponential-type distribution and the Gumbel distribution can be found in the classification of extremal distribution functions. See \cite[Section 14]{Bl95}.

On the other hand, it is shown by Butez and Garc\'{i}a-Zelada in \cite{BGZ18} that for an $n$-dependent potential of the form $q_n:=(1+\frac{1}{n})\check{Q}$ (where $Q$ is a suitable radially symmetric potential for which the corresponding droplet is the unit disk) the maximal modulus $|\zeta|_n $ converges in distribution (without scaling) to a random variable ``$|\zeta|_\infty$'' with distribution function
$$\Prob(|\zeta|_\infty < x) = \prod_{k=1}^{\infty}\left(1-x^{-2k}\right),\quad x>1.$$
The potential $q_n$ is very weakly confining, and is of another type from the ones studied in the present paper.
\end{rmk*}

\subsection{Plan of this paper; further results}
The outline of this paper is as follows. In Section \ref{sec:ex}, we give an elementary proof of Theorem \ref{thm:ml} for the generalized Ginibre ensemble (the potential $Q(\zeta)=|\zeta|^2$)
with an arbitrary value of the confinement coefficient $\param$.

In Section \ref{sec:struct}, we prove Theorem \ref{exker} about the structure of limiting kernels.

In Section \ref{sec:Ward} we analyze the modified Ward equation \eqref{modward} under some natural assumptions, most importantly we assume apriori
\textit{translation invariance} of a scaling limit, i.e., $R(x+iy)=R(x)$. Under this assumption,
it turns out that Ward's equation determines the solution $R=R^{(\param)}$ up to constant. (See Theorem \ref{ward1}.) This argument applies
for any reasonable potential, and gives a possible ``abstract'' approach to universality.

In Section \ref{prel} and \ref{apqq} we present a more ``concrete'' approach, by adapting the method of quasipolynomials of Hedenmalm and Wennman \cite{HW17}.
This method is then applied to prove universality of scaling limits (Theorem \ref{thm:ml}) in Section \ref{sec:err} and universality of maximum modulus (Theorem \ref{thm:sr}) in Section \ref{sec:max}.

In the concluding remarks section (Section \ref{sec:sum}) we summarize our results and mention some natural problems going forward.

\subsection{Index of notation}  We will write $L^2(e^{-\phi})=L^2(e^{-\phi},\C)$ for the usual $L^2$-space normed by $\|f\|_\phi^2:=\int_\C|f|^2 e^{-\phi}\, dA$; when $\phi=0$ we drop the  ``$\phi$'' and write $L^2$ and $\|f\|$ respectively. The characteristic function of a set $E$ is denoted
$\1_E$, and we always reserve the symbol ``$\delta_n$'' for the number $\delta_n=n^{-1/2}\log n$. For convenience of the reader, we now list a number of other frequently occurring symbols.

$\hat{\C}=\C\cup\{\infty\}$; $\D=\{|\zeta|<1\}$; $\D_e(\rho)=\{|\zeta|>\rho\}\cup\{\infty\}$; $\D_e=\D_e(1)$; $\L=\{\re z<0\}$;

``Area'': $dA(\zeta)=\frac 1 \pi \, d^2 \zeta$; ``Arclength'': $ds(\zeta)=\frac 1 {2\pi}\,|d\zeta|$; ``Laplacian'': $\Lap =\d\dbar$;

$S_\tau$: droplet in external potential $Q/\tau$; $U_\tau=\hat{\C}\setminus\Pc S_\tau$; $\Gamma_\tau=\d U_\tau$;

$\normal_\tau$: exterior unit normal on $\Gamma_\tau$;

$\check{Q}_\tau$: obstacle function in external potential $Q/\tau$;

$V_\tau$: harmonic continuation of $\check{Q}_\tau|_{U_\tau}$ across $\Gamma_\tau$;

$\phi_\tau$: normalized conformal map $\phi_\tau : U_\tau\to\D_e$ with $\phi_\tau(\infty)=\infty$ and $\phi_\tau'(\infty)>0$.

\section{Confined Ginibre ensembles}\label{sec:ex}

In this section, we consider the Ginibre case, i.e., the random normal matrix model associated with the external potential
$$Q^{(\param)} = {\param}Q + (1-{\param})\check{Q},\quad ({\param}>0)$$
where $Q(\z) = |\z|^2.$
In this example, the droplet $S$ is the unit disk $\D=\{\zeta\in\C: |\zeta| \leq 1\}$ and
the solution to the obstacle problem is given by
$$\check Q(\z) = \begin{cases} |\z|^2 \quad  &\textrm{if } |\z|\le 1, \\  1 + \log |\z|^2  &\textrm{otherwise.}\end{cases}$$
Let $\{\zeta_j\}_1^n$ denote the eigenvalues of random normal matrices with $Q^{(\param)}$. Define a rescaled system $\{z_j\}_1^n$ at the boundary point $p=1$ by
$z_j = \sqrt{n}(\zeta_j -1)$.

We shall give a short proof for the convergence of the rescaled ensemble to the point field in Theorem \ref{thm:ml}. We here use the normal approximation to the Poisson distribution. This approach can be found in \cite{AKM} for the Ginibre ensemble with a free boundary as well as with a hard edge.

\begin{proof}[Proof of Theorem \ref{thm:ml} for the Ginibre case]
Recall that a correlation kernel for the process $\{\zeta_j\}_1^n$ is obtained by
$$\mathbf{K}_n(\zeta,\eta) = \sum_{j=0}^{n-1}\frac{(\z\bar{\w})^j}{\|\z^j\|^2_{nQ^{(\param)}}} e^{-n(Q^{(\param)}(\zeta)+Q^{(\param)}(\eta))/2}$$
since $Q^{(\param)}$ is radially symmetric.
Here, by direct computation we have
\begin{align*}
\|\zeta^j\|^2_{nQ^{(\param)}}= n^{-j-1}\gamma(j+1,n) + e^{-n(1-{\param})}(n{\param})^{-j-1 + n(1-{\param})}\Gamma(j+1-n(1-{\param}),n{\param}),
\end{align*}
where $\gamma(j+1,n) = \int_{0}^{n} s^j e^{-s} ds$ is the lower incomplete Gamma function and $\Gamma(j+1,n)=\int_{n}^{\infty} s^j e^{-s} ds$ is the upper incomplete Gamma function.
It follows that
$$\mathbf{K}_n(\zeta,\eta) = n \sum_{j=0}^{n-1}\frac{(n\zeta\bar\eta)^j e^{-nQ^{(\param)}(\zeta)/2-nQ^{(\param)}(\eta)/2}}{P(j,n,{\param})},$$
where
\begin{align*}
P(j,n,{\param}) &=  j! \, \Prob(U_n>j) \\
&+ e^{-n(1-{\param})}  {\param}^{-j-1} (n{\param})^{n(1-{\param})}\, {\Gamma(j+1-n(1-{\param}))}\, \Prob(U_{n{\param}}\le j+1-n(1-{\param}))
\end{align*}
and $U_s \sim \text{Po}(s)$, i.e., $U_s$ is a Poisson distributed random variable with intensity $s$.

By normal approximation of the Poisson distribution we obtain
$$\Prob(U_n>j)  =  \varphi\big(\xi_{j,n}\big) \big(1+o(1)\big), \quad \Prob(U_{n{\param}}\le j+1-n(1-{\param}))= 1 - \varphi\left(\frac{\xi_{j,n}}{\sqrt{\param}}\right) \big(1+o(1)\big),$$
where $\fii(\xi)=\frac 1 {\sqrt{2\pi}}\int_\xi^\infty e^{-t^2/2}\, dt,$ $\xi_{j,n}=\frac{(j-n)}{\sqrt{n}},$ and $o(1)\to 0$ as $n\to\infty$ uniformly in $j$.

Now rescale by
$$\zeta=1+z/\sqrt{n},\qquad \eta=1+w/\sqrt{n},\qquad z,w\in\C$$
and write
$$K_n(z,w)=\frac 1 n \mathbf{K}_n(\zeta,\eta)=\sum_{j=0}^{n-1}\Big( \frac {n\zeta\bar{\eta}} \lambda\Big)^j\frac {\lambda^j}{P(j,n,{\param}) }e^{-\lambda},$$
where $\lambda=n\big(Q^{(\param)}(\zeta)+Q^{(\param)}(\eta)\big)/2.$
Note that $\lambda$ has the asymptotic identity:
\begin{align*}
\lambda &= n + \sqrt n \re(z+w) + \frac{|z|^2+|w|^2}2 - (1-{\param})\Big( (\re z)_+^2 + (\re w)_+^2\Big) + o(1).
\end{align*}
We have shown that
\begin{equation*}K_n(z,w)=(1+o(1))\sum_{j=0}^{n-1}\left( \frac {n\zeta\bar{\eta}} \lambda\right)^j\frac {\lambda^j}{j!}e^{-\lambda}\frac 1 {\varphi(\xi_{j,n}) + \Gamma(j,n,{\param})(1-\varphi(\xi_{j,n}/\sqrt{\param}))},\end{equation*}
where $\Gamma(s,n,{\param}):=e^{-n(1-{\param})}  {\param}^{-s-1} (n{\param})^{n(1-{\param})} \Gamma(s+1-n(1-{\param}))/ \Gamma(s+1).$
Finally, if $X_n\sim \text{Po}(\lambda)$, we can write the last sum in the form
\begin{align*}
&\sum_{j=0}^{n-1}
\Big(\frac {n\zeta\bar{\eta}}\lambda\Big)^j \frac {\Prob(X_n=j)} {\varphi(\xi_{j,n}) + \Gamma(j,n,{\param})(1-\varphi(\xi_{j,n}/\sqrt{\param}))} \\
&=\mathbb{E} \, \Big(\frac {n\zeta\bar{\eta}}\lambda\Big)^{X_n}
\frac {\1_{\{X_n<n\}}} {\varphi\Big(\frac {X_n-n}{\sqrt{n}}\Big)+\Gamma(X_n,n,{\param})\Big(1-\varphi\Big(\frac {X_n-n}{\sqrt{n{\param}}}\Big)\Big)}.
\end{align*}
Defining $Y_n$ by $X_n=\lambda+\sqrt{\lambda}Y_n$ and $\alpha_n=(n-\lambda)/\sqrt{n},$ we now get a relation of the form
$$K_n(z,w)=(1+o(1))A_n  B_n,$$
where
$$A_n = \Big(\frac {n\zeta\bar{\eta}}\lambda\Big)^\lambda$$
and
$$B_n =
\mathbb{E} \, \Big(\frac {n\zeta\bar{\eta}}\lambda\Big)^{\!\sqrt{\lambda} Y_n}\!\!\!
\frac {\1_{\{Y_n<\alpha_n\}}} {\varphi\Big(\sqrt{\frac{\lambda}{n}}\,Y_n\,-\alpha_n\Big)+\Gamma(\lambda+\sqrt{\lambda}Y_n,n,{\param})\Big(1-\varphi\Big(\sqrt{\frac{\lambda}{n{\param}}}\,Y_n\,- \frac{\alpha_n}{\sqrt{\param}}\Big)\Big)}.$$

\begin{lem} \label{lem:anbn}
Let $x = \re z$, $u = \re w$, and $b = b(z,w) = \im(z+\bar{w})$.
We have
\begin{align}\label{an}
A_n &= e^{ib\sqrt n}  e^{b^2/2} G(z,w)\big(1+o(1)\big) \, e^{(1-{\param})(x_+^2 + u_+^2)}, \\ \label{bn}
B_n &= e^{-b^2/2} \big(1+o(1)\big) S(z+\bar w),
\end{align}
where $G(z,w) = e^{z\bar w -|z|^2/2-|w|^2/2}$ and
$$S(z) = \frac 1 {\sqrt{2\pi}}\int_{-\infty}^0
{e^{-(\xi-z)^2/2}} \frac{1}{\Phi_{\param}(\xi)}\, d\xi,
\qquad
\Phi_{\param}(\xi) = \varphi(\xi) +  \frac1{ \sqrt{\param} } e^{\frac{1-{\param}}{2{\param}} \xi^2}\left(1-\varphi\left(\frac \xi{\sqrt{\param}}\right)\right).$$
\end{lem}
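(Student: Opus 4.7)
The plan is to analyze $A_n$ and $B_n$ separately through asymptotic expansion, exploiting the CLT for the Poisson distribution that drives $B_n$.

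For $A_n = (n\zeta\bar\eta/\lambda)^\lambda$, I would Taylor expand $\log A_n = \lambda\log(n\zeta\bar\eta/\lambda)$ to the relevant order. Using $n\zeta\bar\eta = n + \sqrt n(z+\bar w) + z\bar w$ and the asymptotic for $\lambda$ supplied just above the lemma, each logarithm expands to order $1/n$ as
$$
\log\frac{n\zeta\bar\eta}{\lambda} \;=\; \frac{ib}{\sqrt n} + \frac{c}{n} + O(n^{-3/2}),
$$
where the identity $z+\bar w = (x+u)+ib$ exposes the pure imaginary leading term; explicit simplification via $(z+\bar w)^2/2 = (x+u)^2/2 + ib(x+u) - b^2/2$ gives $c = z\bar w - \tfrac12(|z|^2+|w|^2) + \tfrac12 b^2 - ib(x+u) + (1-\param)(x_+^2+u_+^2)$. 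Multiplying by $\lambda = n + \sqrt n(x+u) + O(1)$ while retaining the order-$\sqrt n$ contribution $(\lambda-n)\cdot ib/\sqrt n = ib(x+u) + o(1)$ cancels the imaginary piece of $c$, producing $\log A_n = ib\sqrt n + b^2/2 + [z\bar w - \tfrac12(|z|^2+|w|^2)] + (1-\param)(x_+^2+u_+^2) + o(1)$, which is \eqref{an}.

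For $B_n$, set $Y_n := (X_n-\lambda)/\sqrt\lambda$, the standardized Poisson variable, which converges weakly to $Y\sim N(0,1)$. Each ingredient of the summand has a tractable pointwise limit: the power $(n\zeta\bar\eta/\lambda)^{\sqrt\lambda Y_n}\to e^{ibY}$ via the same log expansion; the indicator $\1_{\{Y_n<\alpha_n\}}\to \1_{\{Y<-(x+u)\}}$ since $\alpha_n\to-(x+u)$; and the two $\varphi$-terms tend to $\varphi(Y+x+u)$ and $\varphi((Y+x+u)/\sqrt\param)$. The substantive step is the limit of $\Gamma(\lambda+\sqrt\lambda Y_n,n,\param)$. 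Applying Stirling to $\log\Gamma(A)/\Gamma(B)$ with $A=s+1-n(1-\param)$, $B=s+1$, $s=\lambda+\sqrt\lambda Y_n$, and setting $\xi := (s-n)/\sqrt n \to Y+x+u$, the $n\log n$, $\sqrt n\log\param$ and $n\log\param$ contributions cancel exactly against the prefactors $e^{-n(1-\param)}\param^{-s-1}(n\param)^{n(1-\param)}$ from the definition of $\Gamma(s,n,\param)$, leaving
$$
\Gamma(\lambda+\sqrt\lambda Y_n,n,\param) \;\longrightarrow\; \frac{1}{\sqrt\param}\,\exp\!\Bigl(\frac{1-\param}{2\param}(Y+x+u)^2\Bigr).
$$
Consequently the denominator converges to $\Phi_\param(Y+x+u)$, and in the limit
$$
B_n \;\longrightarrow\; \frac{1}{\sqrt{2\pi}}\int_{-\infty}^{-(x+u)}\frac{e^{iby}\,e^{-y^2/2}}{\Phi_\param(y+x+u)}\,dy.
$$

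Finally, the substitution $t = y + (x+u)$ and completion of the square $-(t-(x+u))^2/2 + ib(t-(x+u)) = -(t-(x+u+ib))^2/2 - b^2/2$, together with $z+\bar w = (x+u)+ib$, transform the last display into $e^{-b^2/2}\,S(z+\bar w)$, proving \eqref{bn}. The main obstacle I anticipate is justifying passage to the limit under the expectation: the Stirling asymptotics for $\Gamma(\lambda+\sqrt\lambda Y_n,n,\param)$ must hold \emph{uniformly} across the bulk of the mass of $Y_n$. Boundedness of $1/\Phi_\param$ on $\R$, combined with a moderate-deviation truncation $|Y_n|\le M\sqrt{\log n}$ (outside of which the Poisson tail is negligible), should supply the required uniform control.
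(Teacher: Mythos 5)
Your proposal is correct and follows essentially the same route as the paper: Taylor expansion of $\log A_n$ for \eqref{an}, and for \eqref{bn} the central limit theorem for the standardized Poisson variable $Y_n$ combined with Stirling asymptotics for $\Gamma(\lambda+\sqrt\lambda Y_n,n,\param)$, followed by completion of the square. The only cosmetic difference is that the paper first works out the special case $\zeta=\eta=1+x/\sqrt n$ before passing to general $z,w$, whereas you treat the general case directly; and you explicitly flag the need for uniform control in the dominated-convergence step, which the paper leaves implicit in its appeal to ``the central limit theorem and the above asymptotic identities.''
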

\begin{proof}
We first analyze that case $\zeta = \eta = 1 + x/\sqrt n$ is real.
We find
$$ \lambda =
 \begin{cases}
 n + 2x\sqrt n + x^2 \quad &\textrm{if }x\le0, \\
 n + 2x\sqrt n + (2{\param}-1)x^2 + o(1) &\textrm{otherwise,}
\end{cases}$$
and
$$A_n =
\begin{cases}
1 \quad &\textrm{if }x\le0, \\
 e^{2(1-{\param})x^2}(1+o(1)) &\textrm{otherwise.}
 \end{cases}$$
Using $\alpha_n =  -2x + o(1),$ $\lambda/n = 1 + o(1),$ and Stirling's formula, we obtain the asymptotic identity for $\Gamma(\lambda+\sqrt{\lambda}s,n,{\param}):$
$$\Gamma(\lambda+\sqrt{\lambda}s,n,{\param}) =  \frac1{ \sqrt{\param} } e^{\frac{1-{\param}}{2{\param}} (s+2x)^2}+ o(1).$$
By means of the central limit theorem and the above asymptotic identities, we now approximate the factor $B_n$ as follows,
\begin{align*}B_n&= \frac 1 {\sqrt{2\pi}}\int_{-\infty}^{-2x}
\frac {e^{-s^2/2}\, ds} {\varphi(s+2x) +  \frac1{ \sqrt {\param} } e^{\frac{1-{\param}}{2{\param}} (s+2x)^2}(1-\varphi((s+2x)/\sqrt{\param})) } + o(1).
\end{align*}
Now we consider $A_n$ and $B_n$ for $$\zeta=1+z/\sqrt{n},\qquad \eta=1+w/\sqrt{n},\qquad z,w\in\C$$
and recall that
\begin{align*}
\lambda = n + \sqrt n \re(z+w) + \frac{|z|^2+|w|^2}2
- (1-{\param})\left( x_+^2  + u_+^2\right) + o(1).
\end{align*}
We use the same argument as in the previous case and deduce \eqref{an} and \eqref{bn}.
\end{proof}

By the above lemma, we approximate the kernel $K_n(z,w)$ by
$$K_n(z,w) =  G(z,w) e^{(1-{\param})(x_+^2 + u_+^2)}S(z+\bar w)\big(1+o(1)\big)$$
up to cocycles and the proof is complete.
\end{proof}

\section{Structure of limiting correlation kernels} \label{sec:struct}

In this section we prove Theorem \ref{exker}. Our proof closely follows the derivation of similar results in \cite{AKM}; we shall
here indicate the necessary modifications.

We start by proving some generally useful estimates for functions of the form
$$w(\zeta)=p(\zeta)e^{-n Q^{(\param)}(\zeta)/2},$$
where $p$ is a holomorphic function (in practice, a polynomial or a ``quasi-polynomial'' - see below). 

 It is convenient to introduce the numbers
$$M=M_{n,\param}:=\|\Lap Q^{(\param)}\|_{L^\infty(S_n)},\quad (S_n:=\{\zeta;\, \dist(\zeta,S)\le 1/\sqrt{n}\}).$$ The
numbers $M_{n,\param}$ clearly do not increase
as $n$ increases; in the sequel we may fix $M$ to be $M_{n_0,\param}$ with a sufficiently large $n_0$.

\begin{lem} \label{propp} Suppose that $\zeta\in S$ and $n$ is large enough and suppose that $p$ is holomorphic in the disc $D(\zeta;1/\sqrt{n})$. Then there is a constant $C=C_\param$ depending only on $M$ such that $$|w(\zeta)|^2\le Cn\int_{D(\zeta;1/\sqrt{n})}|w|^2\, dA.$$
\end{lem}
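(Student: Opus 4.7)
The plan is to exploit the classical weighted-Bergman space trick: namely, perturb $|w|^2$ by a quadratic exponential factor so as to obtain a genuinely subharmonic function on the disc $D := D(\zeta;1/\sqrt{n})$, and then apply the ordinary sub-mean-value inequality. The only input we need beyond the holomorphy of $p$ is the pointwise bound $\Lap Q^{(\param)}\le M$ on $D$; this is legitimate because $\zeta\in S$ forces $D\subset S_n$, so the definition of $M$ applies throughout $D$.

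First I would introduce the auxiliary function
\begin{equation*}
u(\eta):=|p(\eta)|^2\exp\bigl(-nQ^{(\param)}(\eta)+nM|\eta-\zeta|^2\bigr),\qquad \eta\in D.
\end{equation*}
Since $p$ is holomorphic on $D$, $\log|p|^2$ is subharmonic there (its distributional Laplacian is a sum of positive Dirac masses at the zeros of $p$). Using $\Lap|\eta-\zeta|^2=1$ together with $\Lap Q^{(\param)}\le M$ on $D\subset S_n$, the distributional Laplacian of $\log u$ satisfies
\begin{equation*}
\Lap\log u \;=\;\Lap\log|p|^2 \;-\;n\Lap Q^{(\param)} \;+\; nM \;\ge\; -nM+nM\;=\;0.
\end{equation*}
Hence $\log u$ is subharmonic on $D$, and therefore so is $u=e^{\log u}$, as the exponential is convex and increasing.

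Next I would apply the standard sub-mean-value inequality on $D$. With our normalization $dA=dxdy/\pi$, the area of $D$ equals $1/n$, so
\begin{equation*}
u(\zeta)\;\le\;n\int_{D}u(\eta)\,dA(\eta).
\end{equation*}
Since $u(\zeta)=|p(\zeta)|^2e^{-nQ^{(\param)}(\zeta)}=|w(\zeta)|^2$ and for $\eta\in D$ one has $nM|\eta-\zeta|^2\le M$, hence $u(\eta)\le e^M|w(\eta)|^2$, we conclude
\begin{equation*}
|w(\zeta)|^2\;\le\; e^M\, n\int_{D}|w(\eta)|^2\,dA(\eta),
\end{equation*}
which gives the stated inequality with $C=C_\param=e^M$.

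I do not foresee any real obstacle beyond the verification $D(\zeta;1/\sqrt{n})\subset S_n$ (immediate from $\dist(\eta,S)\le|\eta-\zeta|\le 1/\sqrt{n}$) and the mild issue that $\log|p|^2$ is only subharmonic in the distributional sense at zeros of $p$, which is standard. The constant depends on $M$ only through $e^M$, which is harmless because $M=M_{n,\param}$ is non-increasing in $n$ and may be fixed at $M_{n_0,\param}$ for a large enough $n_0$, as noted in the preamble to the lemma.
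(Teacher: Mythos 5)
Your proof is correct and is exactly the argument the paper gestures at: the paper's cryptic one-liner that $z\mapsto|w(\zeta+z/\sqrt{n})|^2e^{M|z|^2}$ is logarithmically subharmonic on $D(0;1)$ is, after the change of variables $\eta=\zeta+z/\sqrt{n}$, precisely your auxiliary function $u$, and the rest is the sub-mean-value inequality you spell out. You have simply written out in full the ``standard argument'' that the paper delegates to \cite{AKM}, and the constant $e^{M}$ matches the claimed dependence on $M$.
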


\begin{proof} This follows from a standard argument, using that $z \mapsto |w(\zeta+z/\sqrt{n})|^2 e^{M|z|^2}$ is
logarithmically subharmonic in $D(0;1)$.
(See for instance \cite[Section 3]{AKM}.)
\end{proof}

\begin{lem} \label{l6} Let $K$ be a compact subset of the interior of $\Pc S$ and let $w=qe^{-nQ^{(\param)}/2}$ where $q$ is holomorphic
on $\C\setminus K$ and satisfying $q(\zeta)=O(\zeta^j)$ as $\zeta\to\infty$. Write $\tau=j/n$ and suppose that $\tau$ is in the range $\tau\le 1$. Then there is a constant $C_\param>0$ such that
$$|w(\zeta)|\le C_\param\sqrt{n}\|w\|e^{-n(Q^{(\param)}-\eqpot_\tau)(\zeta)/2},\qquad (\dist(\zeta,K)\ge 1/\sqrt{n}).$$
\end{lem}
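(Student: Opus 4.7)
The strategy is a standard subharmonicity plus maximum-principle argument. The stated inequality is equivalent to $|q(\zeta)|^2 e^{-n\check{Q}_\tau(\zeta)} \le C_\param^2 n\|w\|^2$ for $\dist(\zeta, K) \ge 1/\sqrt{n}$. I would introduce the function
\[
g(\zeta) := \log|q(\zeta)|^2 - n\check{Q}_\tau(\zeta),
\]
and verify that $g$ is subharmonic on $\C \setminus (K \cup S_\tau)$: indeed, $\log|q|^2$ is subharmonic where $q$ is holomorphic (off $K$), while $-\check{Q}_\tau$ is harmonic where the equilibrium measure for $Q/\tau$ has no mass (off $S_\tau$).

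The next step is to bound $g$ on the boundary of its domain of subharmonicity and at infinity. Since $\d S_\tau \subset S$, one has $\check{Q}_\tau = Q = Q^{(\param)}$ there, so $e^g$ coincides with $|w|^2$, and Lemma \ref{propp} gives $e^{g(\zeta)} \le Cn\|w\|^2$ at points of $\d S_\tau$ lying at distance $\ge 1/\sqrt{n}$ from $K$. On the inner boundary $\{\dist(\zeta,K) = 1/\sqrt{n}\}$---which for large $n$ lies inside $S$ because $K \subset \Int\Pc S$---Lemma \ref{propp} again controls $|w|^2$, and in the natural regime $K \subset \Pc S_\tau$ the identity $\check{Q}_\tau = Q^{(\param)}$ transfers this to a bound on $e^g$. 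At infinity, the growth $q(\zeta) = O(\zeta^{j})$ with $j = n\tau$ combined with the asymptotic $\check{Q}_\tau(\zeta) = \tau\log|\zeta|^2 + O(1)$ forces $e^g$ to stay bounded, the relevant constant being controlled by an $L^2$-estimate on the leading coefficient of $q$ in terms of $\|w\|$.

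Applying the maximum principle to $g$ on $\{\dist(\zeta,K) > 1/\sqrt{n}\} \setminus S_\tau$ then propagates the boundary and asymptotic bounds to the interior, yielding $|q(\zeta)|^2 e^{-n\check{Q}_\tau(\zeta)} \le C_\param^2 n\|w\|^2$ throughout that region; for $\zeta \in S_\tau$ with $\dist(\zeta,K) \ge 1/\sqrt{n}$, Lemma \ref{propp} applies directly and delivers the same bound, using $Q^{(\param)} = \check{Q}_\tau$ on $S_\tau$. Together these two cases cover all $\zeta$ of interest. The main technical obstacle is matching the constant at infinity to the boundary bound $Cn\|w\|^2$; this is handled via the extremal characterization of $\check{Q}_\tau$ as the largest subharmonic minorant of $Q$ with the correct growth, and directly generalizes the classical argument for pointwise bounds on weighted polynomials outside the droplet (cf.\ \cite{AKM, HM}).
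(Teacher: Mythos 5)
Your proposal is correct and follows essentially the same route as the paper: bound $|w|$ on $S_\tau$ via the sub-mean-value estimate (Lemma~\ref{propp}), then compare $\frac1n\log|q|^2$ (equivalently, your $g=\log|q|^2-n\check Q_\tau$) against $\check Q_\tau$ using the extremal/maximum-principle characterization of the obstacle function together with the growth $q(\zeta)=O(\zeta^j)$. The paper phrases step two more compactly --- observing that $u_n(\zeta)=\frac1n\log\big(|q(\zeta)|^2/(Cn\|w\|^2)\big)$ is subharmonic, satisfies $u_n\le Q$ on $S_\tau$, and grows like $\tau\log|\zeta|^2+O(1)$, so $u_n\le\check Q_\tau$ by definition --- whereas you make the maximum-principle step explicit on $U_\tau$; these are the same argument. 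One small remark: the ``matching the constant at infinity'' you flag is not actually an obstacle, since $g$ extends to a bounded subharmonic function at $\infty$ and the maximum principle on $\widehat\C\setminus\Pc S_\tau$ (whose complement has positive capacity) transports the boundary bound $\log(Cn\|w\|^2)$ to all of $U_\tau$ automatically; no separate $L^2$ estimate on the leading coefficient of $q$ is needed.
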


\begin{proof}
By Lemma \ref{propp}
there is a constant $C$ such that $|w(\zeta)|\le \sqrt{C n}\|w\|$ when $\zeta\in S_\tau$.

Consider the subharmonic function
\begin{equation*}\label{smock}u_n(\zeta)=\frac 1 n\log\left(\frac {|q(\zeta)|^2}{C n\|w\|^2}\right).\end{equation*}
Note that $u_n(\zeta)\le \tau\log|\zeta|^2+O(1)$ as $\zeta\to\infty$ and $u_n\le Q$ on $S_\tau$, and $u_n$ is subharmonic on $\C$. Hence $u_n\le\eqpot_\tau$ on $\C$, proving the lemma
(with $C_\param:=\sqrt{C}$).
\end{proof}

\begin{rmk*} Following \cite{HW17}, we call $q$ above a quasipolynomial of degree $j$; $w$ is a weighted quasipolynomial.
\end{rmk*}

\begin{lem} \label{goodb} The $1$-point function $\bfR_n$ in external potential $Q^{(\param)}$ satisfies
$$\bfR_n(\zeta)\le C_\param ne^{-n\param(Q-\eqpot)(\zeta)},\qquad (\zeta\in \C).$$
\end{lem}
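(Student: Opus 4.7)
My approach is based on the extremal characterization of the reproducing kernel. By Cauchy--Schwarz together with the reproducing property, we have
$$\bfR_n(\zeta) = \bfK_n(\zeta,\zeta) = \sup\bigl\{|w(\zeta)|^2 : w \in \calW_n,\ \|w\|_{L^2} \le 1\bigr\},$$
so it suffices to establish, for each weighted polynomial $w = p\, e^{-n Q^{(\param)}/2}$ with $\deg p \le n-1$ and $\|w\| = 1$, the pointwise bound $|w(\zeta)|^2 \le C_\param n\, e^{-n\param(Q-\check{Q})(\zeta)}$. Using the algebraic identity $Q^{(\param)} - \check{Q} = \param(Q-\check{Q})$, this is equivalent to the polynomial inequality $|p(\zeta)|^2 \le C_\param n\, e^{n\check{Q}(\zeta)}$ for all $\zeta \in \C$.

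On the droplet $S$ the inequality is immediate: Lemma \ref{propp} applied to $w$ gives $|w(\zeta)|^2 \le C_\param n$ for $\zeta \in S$, and since $\check{Q} = Q$ on $S$ (so that $Q^{(\param)} = Q$ there too) this rearranges to $|p(\zeta)|^2 \le C_\param n\, e^{n\check{Q}(\zeta)}$ throughout $S$.

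To extend the bound across $\C \setminus S$, I pass to the subharmonic function
$$u_n(\zeta) := \tfrac{1}{n}\log\!\bigl(|p(\zeta)|^2/(C_\param n)\bigr),$$
for which Step 2 reads $u_n \le \check{Q}$ on $S$, and which obeys the growth estimate $u_n(\zeta) \le \tfrac{n-1}{n}\log|\zeta|^2 + O(1)$ as $\zeta \to \infty$ (with the $O(1)$ allowed to depend on $p$, since $\deg p \le n-1$). Because $\check{Q}$ is harmonic on $\C \setminus S$ and $\check{Q}(\zeta) = \log|\zeta|^2 + O(1)$ at infinity, the difference $u_n - \check{Q}$ is subharmonic on each component of $\C\setminus S$, is $\le 0$ on $\partial S$, and on the unbounded component tends to $-\infty$ thanks to the surplus $-n^{-1}\log|\zeta|^2$ in the growth gap. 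The maximum principle then yields $u_n \le \check{Q}$ throughout $\C \setminus S$, which is the required inequality. The only subtle point is exactly this growth-gap estimate at infinity, which is what permits the $p$-dependent constant to be absorbed and the maximum principle to be applied cleanly.
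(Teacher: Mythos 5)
Your proof is correct and takes essentially the same route as the paper's: after reducing to a pointwise polynomial bound via the extremal characterization of the reproducing kernel, you use Lemma~\ref{propp} on the droplet and then a subharmonicity/maximum-principle argument to propagate the estimate to $\C\setminus S$. The paper packages this second step as Lemma~\ref{l6} (your argument is in effect that lemma with $\tau=(n-1)/n$, where the growth gap you flag is exactly what makes it work), so the two proofs are the same in substance.
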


\begin{proof} Let $\bfk_n$ be the reproducing kernel of the space of holomorphic polynomials of degree at most $n-1$ with the norm of
$L^2(e^{-nQ^{(\param)}})$. Now fix $\zeta\in S$ and put $q(\eta)=\bfk_n(\eta,\zeta)/\sqrt{\bfk_n(\zeta,\zeta)}$ and $w=qe^{-nQ^{(\param)}/2}$. Then $\|w\|=1$, and so by Lemma \ref{l6}
(with $\tau=1$) we have
$$\bfR_n(\zeta)=|w(\zeta)|^2\le C_\param ne^{-n(Q^{(\param)}-\eqpot)(\zeta)}=C_\param ne^{-n\param(Q-\eqpot)(\zeta)}.$$
\end{proof}

Fix a neighborhood $\Omega$ of $S$ where $Q$ is real-analytic and strictly subharmonic. Let $A$ be a Hermitian-analytic function defined in a neighborhood in $\C^2$ of the set $\{(\zeta,\zeta):\zeta\in \Omega\}$ such that $A(\zeta,\zeta)=Q(\zeta)$.
As in \cite{AKM}, we fix an outer boundary point $0\in \Gamma=\d\Pc S$ such that the outwards normal to $\Gamma$ at $0$ points in the positive real direction.

We define a kernel by
$$\mathbf{k}_n^{\#}(\zeta,\eta)=n(\pa_1\bar\pa_2A)(\zeta,\eta)\, e^{nA(\zeta,\eta)}.$$
We also introduce rescaled kernels $k_n^{\#}$ and $K_n^{\#}$ by
$$k_n^{\#}(z,w)=\frac{1}{n\Lap Q(0)}\mathbf{k}_n^{\#}(\zeta,\eta);\quad K_n^{\#}(z,w)=k_n^{\#}(z,w)e^{-n(Q^{(\param)}(\zeta)+Q^{(\param)}(\eta))/2}$$
where $\zeta = z/\sqrt{n\Lap Q(0)}$ and
$\eta = w/\sqrt{n\Lap Q(0)}$.

\begin{lem} For $z$ in a given compact subset of $\C$, we have as $n\to\infty$
\begin{equation}\label{taye}
(Q-\check{Q})(\zeta) = 2n^{-1}(\re z)_+^2(1+O(n^{-1/2})).\end{equation}
Moreover, if $R_n(z)=\frac 1 {n\Lap Q(0)}\bfR_n(\zeta)$ denotes the rescaled 1-point function, then
\begin{equation}\label{taye2}R_n(z)\le C_\param e^{-2\param(\re z)_+^2}.\end{equation}
\end{lem}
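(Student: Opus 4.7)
The plan is to prove the two estimates in sequence: \eqref{taye} is a local Taylor expansion of $Q-\check Q$ at the regular boundary point $0\in\Gamma$, and \eqref{taye2} then follows by combining it with Lemma \ref{goodb}.

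For \eqref{taye}, I would set $g=Q-\check Q$ and exploit the three facts recorded earlier: $g\ge 0$, $g\equiv 0$ on $S$, $g$ is $C^{1,1}$ on $\C$, and $g$ is harmonic (actually $\Lap g=\Lap Q$ with respect to the usual Laplacian times $1/4$) on $S^c$. Since $0$ is a regular point of $\Gamma$, in a neighborhood of $0$ the curve $\Gamma$ is a real-analytic arc whose tangent at $0$ is the imaginary axis; introduce signed-distance coordinates $(s,t)$ with $s>0$ on $S^c$ and $t$ an arclength coordinate along $\Gamma$. From $g\equiv 0$ on $S$ and the $C^{1,1}$-smoothness, both $g$ and $\partial_s g$ vanish on $\Gamma$. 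Hence the tangential pieces of the Laplacian vanish on $\Gamma$ and the identity $\Delta g = 4\Lap Q$ (usual Laplacian = $4\Lap$) reduces on $\Gamma$ to $\partial_s^2 g|_\Gamma=4\Lap Q|_\Gamma$. Taylor-expanding in $s$ and using the real-analyticity of $Q$ near $\Gamma$ gives
\begin{equation*}
g(\zeta)=2\,\Lap Q(0)\,s(\zeta)_+^2+O\!\bigl(s(\zeta)_+^3+|t(\zeta)|\,s(\zeta)_+^2\bigr),
\end{equation*}
locally near $0$. Substituting $\zeta=z/\sqrt{n\Lap Q(0)}$, one has $s(\zeta)=\re\zeta+O(|\zeta|^2)=\re z/\sqrt{n\Lap Q(0)}+O(n^{-1})$, so $s(\zeta)_+^2=(\re z)_+^2/(n\Lap Q(0))\cdot(1+O(n^{-1/2}))$ uniformly for $z$ in a fixed compact set, yielding
\begin{equation*}
(Q-\check Q)(\zeta)=\frac{2(\re z)_+^2}{n}\bigl(1+O(n^{-1/2})\bigr),
\end{equation*}
which is precisely \eqref{taye}. (The case $\re z\le 0$ is trivial, since then $\zeta\in S$ for $n$ large and both sides vanish.)

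For \eqref{taye2}, I would invoke Lemma \ref{goodb} with $\eqpot=\check Q$, giving $\bfR_n(\zeta)\le C_\param n\,e^{-n\param(Q-\check Q)(\zeta)}$, and then substitute \eqref{taye}:
\begin{equation*}
\bfR_n(\zeta)\le C_\param\, n\,\exp\!\Bigl\{-2\param(\re z)_+^2\bigl(1+O(n^{-1/2})\bigr)\Bigr\}.
\end{equation*}
Dividing by $n\Lap Q(0)$ gives $R_n(z)$, and since $(\re z)_+^2$ is bounded on the compact set, the factor $\exp\{2\param(\re z)_+^2\cdot O(n^{-1/2})\}$ is uniformly bounded and can be absorbed into the constant, producing \eqref{taye2}.

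The only real obstacle is the second-order Taylor expansion step, where one must confirm the cancellation of the tangential part of $\Delta g$ on $\Gamma$; this is the standard local analysis at a regular boundary point of the droplet (see \cite{AKM,HM}) and is essentially a consequence of the $C^{1,1}$-smoothness of $\check Q$ together with the real-analytic regularity of $\Gamma$ at $0$. Everything else is bookkeeping.
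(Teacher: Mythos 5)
Your argument follows the paper's proof in all essentials: use the vanishing of $Q-\check Q$ and of its gradient on $\Gamma$ (from $C^{1,1}$-smoothness and the fact that $Q-\check Q\equiv 0$ on $S$), conclude that the one-sided normal second derivative from $S^c$ equals $4\Lap Q$ on $\Gamma$ since $\check Q$ is harmonic there, Taylor-expand, rescale, and then combine with Lemma~\ref{goodb} to get~\eqref{taye2}. The rescaling bookkeeping and the passage to $R_n$ are all as in the paper.

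One technical imprecision in your Taylor step deserves correction. You write a second-order expansion with an $O(s_+^3+|t|s_+^2)$ remainder, which requires one-sided third derivatives of $g=Q-\check Q$ from $S^c$ near $\Gamma$. But $C^{1,1}$-smoothness of $\check Q$ (which is all that holds across $\Gamma$) is not enough to justify a cubic error term, and in your closing remark you attribute the needed regularity to the $C^{1,1}$-smoothness, which is the wrong source. The paper handles this by introducing $V$, the harmonic continuation of $\check Q|_{S^c}$ across the real-analytic curve $\Gamma$ to a full neighborhood $N$ of $0$; since $Q$ is real-analytic and $V$ is harmonic (hence real-analytic) on $N$, the difference $Q-V$ is real-analytic on $N$, coincides with $Q-\check Q$ on $S^c$, and admits the Taylor expansion you want with a genuine cubic remainder. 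The correct input is the harmonicity of $\check Q$ on $S^c$ plus the analyticity of $\Gamma$ (Schwarz reflection), not $C^{1,1}$-regularity. With that adjustment your proof coincides with the paper's.
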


\begin{proof} Let $V$ denote the harmonic continuation of $\check{Q}|_{S^c}$ across the outer boundary curve $\Gamma=\d \Pc S$ to a neighborhood $N$ of $0$.
Let $\pa_\rmn$ and $\pa_\rmt$ denote the exterior normal and tangential derivatives on $\Gamma\cap N$, respectively.
Since $Q=V$ and $\nabla Q = \nabla V$ on $N\cap\Gamma$, we have
$$\pa_\rmn(Q-V)= \pa_\rmt(Q-V) = \pa_\rmt^2(Q-V) =0.$$
This implies $\pa_\rmn^2(Q-V)=4\Lap(Q-V) = 4\Lap Q$ on $\Gamma$, proving \eqref{taye}.
To prove \eqref{taye2} it now suffices to appeal to Lemma \ref{goodb}.
\end{proof}

\begin{proof}[Proof of Theorem \ref{exker}]
Define a function $\Psi_n$ by
$$\Psi_n(z,w) = {K_n(z,w)}/{K_n^{\#}(z,w)}.$$
Then $\Psi_n$ is Hermitian-analytic and the rescaled kernel $K_n$ can be expressed as
\begin{equation*}K_n(z,w)=k_n^{\#}(z,w)\Psi_n(z,w) e^{-n(Q^{(\param)}(\zeta)+Q^{(\param)}(\eta))/2}.\end{equation*}
Using the argument in \cite[Lemma 3.4]{AKM}, we see that there exist cocycles $c_n$ such that
\begin{equation}\label{cycon}
c_n(z,w)\, k_n^{\#}(z,w)\,e^{-n(Q(\zeta)+Q(\eta))/2}=G(z,w)(1+o(1))
\end{equation}
where $o(1)\to 0$ locally uniformly on $\C^2$ as $n\to \infty$. Indeed, \eqref{cycon} is obtained by the Taylor expansion (assuming $\rmn =1$ for simplicity)
\begin{align*}& n(A(\zeta,\eta)-Q(\zeta)/2-Q(\eta)/2) = n(A(\zeta,\eta)-A(\eta,\eta))/2 - n(\overline{A(\zeta,\zeta)}-\overline{A(\eta,\zeta)})/2\\
&=i\im\Big[\frac{\sqrt{n}\,\pa_1A(0,0)}{\sqrt{\Lap Q(0)}}(z-w)+\frac{\pa_1^2A(0,0)}{2\Lap Q(0)}(z^2-w^2)\Big]+ z\bar{w} -|z|^2/2 -|w|^2/2.
\end{align*}
Since $Q^{(\param)}(\zeta) = \check{Q}(\zeta) + \param(Q-\check{Q})(\zeta),$
we obtain the convergence
\begin{equation}\label{Kns}
c_n(z,w)\, K_n^{\#}(z,w)=G(z,w)e^{(1-{\param})((\re z)_+^2+(\re w)_+^2)}(1+o(1))
\end{equation}
by \eqref{taye} and \eqref{cycon}. Here, $o(1) \to 0$ in $L^1_{\loc}(\C^2)$ and locally uniformly on $(\C\setminus i\R)^2$.

On the other hand, observe that
$$|\Psi_n(z,w)|^2 = \left|\frac{K_n(z,w)}{K_n^{\#}(z,w)}\right|^2 \leq \frac{R_n(z)R_n(w)}{|c_n(z,w)K_n^{\#}(z,w)|^2}.$$
Combining with \eqref{taye2},
we obtain that the family $\{\Psi_n\}$ is locally bounded on $\C^2$. By the normal family argument in \cite{AKM}, for each subsequence of $\{\Psi_n\}$ there exists a further subsequence which converges locally uniformly on $\C^2$ and every limit $\Psi = \lim_k \Psi_{n_k}$ is Hermitian-entire. This proves part \ref{1o} and \ref{2o} of Theorem \ref{exker}.

 For the remaining part, we observe that the modified Ward equation \eqref{modward} follows by applying the rescaling procedure in \cite{AKM} to the potential $Q^{(\param)}$ in a straightforward way. Moreover, to prove non-triviality using the method in \cite{AKM} we need only to prove that $R(z)>0$ at some point $z$. For $z$ somewhat inside the (rescaled) droplet
this can be done exactly as in \cite{AKM}, since $Q=Q^{(\param)}$ in $S$.
\end{proof}

\section{An approach to universality using Ward's equation}\label{sec:Ward}
In this section, we shall analyze solutions
$R(z)$ to the (modified) Ward equation \eqref{modward} which satisfy the ``physically reasonable'' condition of translation invariance. This leads to an approach to
universality of the 1-point function $R^{(\param)}$ based only on Ward's equation.

Given the nature of our rescaling (with the rescaled droplet occupying the left half plane) it is natural to assume that each limiting
1-point function $R$ in Theorem \ref{exker} should be vertical translation invariant, in the sense that $R(z)=R(z+it)$ for each $t\in\R$.
This is equivalent to that the holomorphic kernel $\Psi$ should satisfy $\Psi(z,w)=S(z+\bar{w})$ for some entire function $S$. (In view of the computations in the preceding section, we know that
this holds for the Ginibre ensemble, in which case $S=\sfun_\param$.)

In the following, we shall work within the class of abstract translation invariant ``limiting kernels''
where
$S$ has the ``error function-type'' (cf. \cite{AKMW})
$$S(z) = \frac 1 {\sqrt{2\pi}}\int_{-\infty}^{+\infty} {e^{-(\xi-z)^2/2}} {s(\xi)}\, d\xi$$
for some Borel measurable, non-negative function $s$ on $\R$.

To each such function we define a corresponding function $B=B_S$
(the ``Berezin kernel'') by
$$B(z,w) = e^{-|z-w|^2} e^{2(1-{\param})(\re w)_+^2} \frac{S(z+\bar w)S(\bar z + w)}{S(z+\bar z)}.$$
We will describe all the functions $s$ such that the Berezin kernel $B$ satisfies two conditions called the mass-one condition and Ward's equation.

We say that $B$ satisfies the \textit{mass-one equation} if for all $z$
\begin{equation}\label{mass}
\int_{\mathbb{C}} B(z,w) \,dA(w) = 1.
\end{equation}

In the case of a regular point on the boundary, \textit{Ward's equation} has the form
\begin{equation}\label{ward}
\bp C= R -1 - \Lap\log R + (1-{\param})\1_{\re z>0},\end{equation}
where
$R(z) = B(z,z)$ and
\begin{equation}\label{cauc}C(z)=\int_\mathbb{C}\frac{B(z,w)}{z-w}\,dA(w).\end{equation}

Throughout this section, we write $z=x+iy$ and $w=u+iv$ with $x,y,u,v\in \R$.
\begin{thm}\label{mass1}
The mass-one condition \eqref{mass} holds if and only if
$$s(\xi) = \frac{\mathbf{1}_E}{\Phi_{\param}(\xi)}$$
almost everywhere for some Borel subset $E\subset\R$, where $\Phi_\param:\R\to (0,\infty)$ is the function defined in \eqref{defphi}.
\end{thm}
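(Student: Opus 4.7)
The plan is to compute $S(z+\bar z)\int_\C B(z,w)\,dA(w)$ explicitly as a function of $s$ and then invoke injectivity of the Weierstrass transform. Writing $z=x+iy$, $w=u+iv$, I would substitute the defining integral for $S$ into $S(z+\bar w)\,S(\bar z+w)=|S(z+\bar w)|^2$ using two auxiliary variables $\xi,\eta$. A direct expansion of the Gaussian exponents gives
$$S(z+\bar w)\,S(\bar z+w)=\frac{1}{2\pi}\iint_{\R^2}e^{-[\xi-(x+u)]^2/2-[\eta-(x+u)]^2/2+i(y-v)(\xi-\eta)+(y-v)^2}\,s(\xi)s(\eta)\,d\xi\,d\eta,$$
and the key observation I would exploit is that the $e^{(y-v)^2}$ produced here cancels exactly against the $e^{-(y-v)^2}$ from $e^{-|z-w|^2}$. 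The remaining $v$-dependence is purely imaginary, so integrating over $v$ (with the $1/\pi$ from $dA$) yields $2\delta(\xi-\eta)$ and collapses the $(\xi,\eta)$-integral to a single integral in $\xi$ weighted by $s(\xi)^2$.

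Next, using the algebraic identity $(x-u)^2+(\xi-x-u)^2=2(u-\xi/2)^2+(2x-\xi)^2/2$ to pull out the $x$-dependence, the left-hand side of \eqref{mass} reduces to
$$\frac{1}{S(2x)}\cdot\frac{1}{\sqrt{2\pi}}\int_{-\infty}^\infty e^{-(\xi-2x)^2/2}\,\Phi_\param(\xi)\,s(\xi)^2\,d\xi,$$
provided I can identify $I(\xi):=\int_{-\infty}^\infty e^{-2(u-\xi/2)^2+2(1-\param)u_+^2}\,du$ with $\sqrt{\pi/2}\,\Phi_\param(\xi)$. I would verify this by splitting the $u$-integral at $0$: on $u\le 0$ the $u_+^2$ term is absent and the standard Gaussian tail integral gives $\sqrt{\pi/2}\,\fif(\xi)$ via $\fif(\xi)=\tfrac{1}{2}\erfc(\xi/\sqrt{2})$; on $u\ge 0$, completing the square turns the quadratic into $-2\param(u-\xi/(2\param))^2+(1-\param)\xi^2/(2\param)$, producing $\sqrt{\pi/(2\param)}\,e^{(1-\param)\xi^2/(2\param)}\bigl(1-\fif(\xi/\sqrt{\param})\bigr)$. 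The sum matches $\sqrt{\pi/2}\,\Phi_\param(\xi)$ with $\Phi_\param$ exactly as in \eqref{defphi}.

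With this identification, the mass-one condition becomes the Weierstrass-transform equation
$$\int_{-\infty}^\infty e^{-(\xi-2x)^2/2}\bigl[\Phi_\param(\xi)s(\xi)^2-s(\xi)\bigr]\,d\xi=0,\qquad x\in\R.$$
Standard injectivity of the Weierstrass transform (equivalently, strict positivity of the heat semigroup) then forces $\Phi_\param(\xi)s(\xi)^2=s(\xi)$ a.e. Since $\Phi_\param>0$ everywhere, this factors as $s(\xi)\bigl[\Phi_\param(\xi)s(\xi)-1\bigr]=0$, so $s$ can only take the values $0$ and $1/\Phi_\param$; setting $E:=\{s\ne 0\}$ gives $s=\mathbf{1}_E/\Phi_\param$. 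The converse is automatic, as this choice makes $\Phi_\param s^2=s$ pointwise and one can just run the computation backwards. The main obstacle I anticipate is the careful bookkeeping in the split-and-complete-the-square computation of $I(\xi)$ so that the constants precisely reproduce \eqref{defphi}; the rest is clean algebra together with the standard Weierstrass-transform injectivity.
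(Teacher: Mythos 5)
Your proof is essentially the paper's own proof, carried out in the same order with the same intermediate formula: both expand $S(z+\bar w)S(\bar z+w)$ as a double $(\xi,\eta)$-integral, integrate out $v$ to collapse the diagonal (implicitly via the Fourier identity, which the paper does not spell out but which amounts to your $\delta$-function observation), then perform the $u$-integral split at $u=0$ to produce exactly $\Phi_\param(\xi)$, and finally compare the Gaussian transform of $s^2\Phi_\param$ with that of $s$. The one point where you are more explicit is the last step: the paper simply writes ``Equating the last expression with $S(2x)$, we see that \ldots'' whereas you name the underlying fact (injectivity of the Weierstrass/Gaussian-convolution transform) that justifies passing from equality of the two Gaussian transforms for all $x$ to the pointwise a.e.\ identity $s=s^2\Phi_\param$. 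This is a worthwhile clarification but does not change the argument.
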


\begin{proof}
A straightforward computation shows that
$$
\sqrt{\frac2\pi} \int_{-\infty}^\infty  e^{-2u^2 + 2\xi u+2(1-{\param})u_+^2}\,du = e^{\frac12\xi^2}\varphi(\xi)+\frac1{\sqrt {\param}}\,e^{\frac1{2{\param}}\xi^2}\left(1-\varphi\left(\frac{\xi}{\sqrt{\param}}\right)\right).$$
It follows immediately that
\begin{align*}
\int_{\mathbb{C}}&e^{-|z-w|^2} e^{2(1-{\param})u_+^2} S(z+\bar w) S(\bar z + w) \,dA(w) \\
&= \frac1\pi \int_{-\infty}^\infty  \int_{-\infty}^\infty {e^{-2x^2-2u^2-\xi^2 + 2\xi(x+u)+2(1-{\param})u_+^2}} {s(\xi)^2}\,du d\xi \\
&= \frac1{\sqrt{2\pi}}\int_{-\infty}^\infty {e^{-\frac12(\xi-2x)^2}}{s(\xi)^2}\left(\varphi(\xi) +  \frac1{ \sqrt{\param} } e^{\frac{1-{\param}}{2{\param}} \xi^2}\left(1-\varphi\left(\frac \xi{\sqrt{\param}}\right)\right)\right)\,d\xi.
\end{align*}
Here, the expression in the large parenthesis is recognized as the function $\Phi_\param$ defined in \eqref{defphi}.

Equating the last expression with $S(2x),$ we see that the mass-one equation is equivalent to that (for a.e.-$\xi$)
$$ s(\xi) = {s(\xi)^2}\,\Phi_{\param}(\xi).$$
The last condition means that (up to sets of measure $0$) $s(\xi)=1/\Phi_\param(\xi)$ whenever $s(\xi)\ne 0$.
\end{proof}

We will now analyze Ward's equation \eqref{ward}. We will require a few initial remarks about the Cauchy transform \eqref{cauc}.

By a change of variables, the Cauchy transform $C$ takes the form
$$C(z) = - \int_{-\infty}^\infty \int_{-\infty}^\infty \frac{e^{-u^2-v^2+2(1-{\param})(u+x)_+^2}}{u+iv} \frac{S(2x+u-iv) S(2x+u+iv)}{S(2x)} \,\frac{dv\,du}\pi.$$
Let
$$
L(x) = \int_{-\infty}^\infty\int_{-\infty}^\infty \frac{e^{-u^2-v^2+2(1-{\param})(u+x/2)_+^2}}{u+iv} S(x+u-iv) S(x+u+iv) \,\frac{dv\,du}\pi.
$$
Then
$$C(z) = -\frac{L(2x)}{S(2x)},\qquad \bp C(z) = - \Big(\frac LS\Big)'(2x).$$
Now we describe all solutions to Ward's equation \eqref{ward}.
\begin{thm}\label{ward1}
Ward's equation holds if and only if
$$s(\xi) = \frac{\1_I}{\Phi_{\param}(\xi)}
$$
almost everywhere, where $\Phi_\param$ is the function \eqref{defphi} and
$I$ is an interval in $\R$. Furthermore, if $R$ satisfies \begin{equation}\label{Rleft} \liminf_{x\to -\infty} R(x) > 0,\end{equation}
then $I = (-\infty, c)$ for some constant $c$.
\end{thm}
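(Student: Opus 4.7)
The plan is to deduce both directions of the iff by exploiting the convolution structure of $S$. The converse direction (that $s=\mathbf{1}_I/\Phi_\param$ solves Ward's equation) reduces via Fubini and a Cauchy-residue computation to the same Gaussian identities used in the proof of Theorem \ref{mass1}, and is a routine verification. For the forward direction I would first argue that Ward's equation implies the mass-one condition, so that Theorem \ref{mass1} forces $s = \mathbf{1}_E/\Phi_\param$ almost everywhere for some Borel set $E \subset \R$. The idea is that by vertical translation invariance the total mass $M(z) := \int_\C B(z,w)\,dA(w)$ depends only on $x = \re z$, while the Cauchy transform admits the expansion $C(z) = M(z)/z + O(|z|^{-2})$ at infinity. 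Integrating Ward's equation across a large horizontal strip and extracting the $1/z$-coefficient, using that $B(z,\cdot)\ge 0$ is integrable, pins down $M \equiv 1$.

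With mass-one available the problem reduces to showing $E$ is an interval. Writing $T = 2x$, $\bar\partial C(z) = -(L/S)'(T)$, and using $\Delta\log R - (1-\param)\mathbf{1}_{\re z>0} = (\log S)''(T)$, Ward's equation becomes the scalar identity
\[
-(L/S)'(T) = S(T)\,e^{(1-\param) T_+^2/2} - 1 - (\log S)''(T).
\]
Expanding $L(T)$ by Fubini, performing the $v$-integral via the Cauchy-residue structure of $1/(u+iv)$, and invoking the identity
\[
\sqrt{\tfrac{2}{\pi}}\int_\R e^{-2u^2 + 2\xi u + 2(1-\param)u_+^2}\,du = \Phi_\param(\xi)\,e^{\xi^2/2}
\]
from the proof of Theorem \ref{mass1}, the equation should collapse (after cancelling the $\Phi_\param$ weights) to a bilinear statement in $\mathbf{1}_E$. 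A sign/positivity argument then forces $E$ to be an interval modulo a null set.

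For the endpoint claim, a direct estimate using the tail asymptotics of $\fii$ shows $\Phi_\param(t) \to 1$ as $t \to -\infty$. Since $S$ is the Gaussian convolution of $\mathbf{1}_E/\Phi_\param$, the hypothesis $\liminf_{x \to -\infty} R(x) = \liminf_{T \to -\infty} S(T) > 0$ forces $E$ to contain $(-\infty, T_0)$ for some $T_0$; combined with the interval property, $E = (-\infty, c)$ for some $c \in \R \cup \{+\infty\}$. The main obstacle is the bilinear reduction in the middle step: the Cauchy-type left-hand side $(L/S)'$ and the quadratic-in-$S$ right-hand side have different analytic natures, so organizing them into a bilinear identity in $\mathbf{1}_E$ with manifest sign-definiteness is delicate; I expect one must differentiate Ward's equation in $T$, or multiply through by an appropriate power of $S$, before Fubini can be applied cleanly.
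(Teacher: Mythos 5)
Your reformulation of Ward's equation as the scalar identity in $T = 2x$ is correct and matches the paper, and your endpoint argument (that $\Phi_\param(t)\to 1$ as $t\to-\infty$, so that \eqref{Rleft} together with the interval property forces $I=(-\infty,c)$) is also essentially the paper's. However, there are two genuine problems.

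First, your attempted detour through the mass-one condition does not work as stated. You claim $C(z) = M(z)/z + O(|z|^{-2})$ at infinity, but this Laurent-type expansion fails here: because of the factor $e^{-|z-w|^2}$ in $B(z,w)$, the Berezin kernel $B(z,\cdot)$ is a probability-like density concentrated near $w=z$, not near a fixed compact set. After the substitution $w = z + s$ one gets $C(z) = -\int_\C B(z,z+s)\,s^{-1}\,dA(s)$, which has no reason to vanish as $|z|\to\infty$, so the $1/z$-coefficient extraction collapses. Fortunately this step is not needed: the paper never passes through mass-one, and derives $s\Phi_\param=\1_I$ directly from the Ward reduction.

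Second, and more seriously, the step you flag as "the main obstacle" is in fact the heart of the proof, and your plan does not contain the idea that resolves it. The right-hand side of your scalar equation contains the term $R = S(T)\,e^{(1-\param)T_+^2/2}$, which is quadratic in $s$, while $(L/S)'$ appears, after Fubini, as a quotient of bilinear expressions; matching these is exactly the difficulty. The paper resolves it by splitting $L = L_1 + L_2$ according to the sign of $u$ in the residue identity $\int_\R \frac{e^{-iv(\xi-\eta)}}{u+iv}\frac{dv}{\pi} = \mp 2\,e^{u(\xi-\eta)}\1_{\pm(\xi-\eta)<0}$ (for $u\gtrless 0$), and then observing the exact cancellation $\bigl(\tfrac{L_2}{S}\bigr)'(2x) = -R(z)$, which eliminates the quadratic term and reduces the equation to the linear condition $\bigl(\tfrac{L_1}{S}\bigr)' = \bigl(\tfrac{S'}{S}\bigr)' + 1$, i.e., $L_1 = S' + (x-c)S$. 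Reading off the Gaussian "Fourier" weight $e^{-(\eta-x)^2/2}$ of both sides then yields the pointwise relation $\eta - c = -\int_\eta^\infty s\Phi_\param$ a.e.\ on $I=\{s\neq0\}$, from which $s\Phi_\param = \1_I$ and the connectedness of $I$ follow simultaneously. Without identifying the $L_1+L_2$ split and the cancellation $(L_2/S)' = -R$, the equation does not "collapse" to anything bilinear with a usable sign structure, so the proposal has a genuine gap at its central step.
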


\begin{proof}
Let
$$
\ell(x;\xi,\eta)
=\int_{-\infty}^\infty\int_{-\infty}^\infty  l(x;\xi,\eta;u,v)\,\frac{e^{-iv(\xi-\eta)}}{u+iv}\,\frac{dv\,du}\pi,$$
where
$$l(x;\xi,\eta;u,v)=e^{-2u^2 -\frac12(\xi-x)^2-\frac12(\eta-x)^2+u(\xi+\eta-2x)+2(1-{\param})(u+x/2)_+^2}$$
so that
$$L(x) = \int_{-\infty}^\infty \int_{-\infty}^\infty {\ell(x;\xi,\eta)}{s(\xi)s(\eta)}\,\frac{d\xi\,d\eta}{2\pi}.$$
First, observe that
\begin{equation} \label{eq: lemma}
\int_{-\infty}^\infty \frac{e^{-iv(\xi-\eta)}}{u+iv}\,\frac{dv}\pi =
\begin{cases}
-2  e^{u\cdot(\xi-\eta)}\mathbf{1}_{\xi>\eta},  &\textrm{if } u < 0;\\
\phantom{-}2 e^{u\cdot(\xi-\eta)}\mathbf{1}_{\xi<\eta}, &\textrm{if } u > 0.
\end{cases}
\end{equation}
Let
$$\ell_1(x;\xi,\eta)=- \1_{\xi>\eta} \int_{-\infty}^\infty 2 e^{-2u^2 -\frac12(\xi-x)^2-\frac12(\eta-x)^2+2u(\xi-x)+2(1-{\param})(u+x/2)_+^2}\,du$$
and
$$L_1(x) = \int_{-\infty}^\infty \int_{-\infty}^\infty {\ell_1(x;\xi,\eta)}{s(\xi)s(\eta)}\,\frac{d\xi\,d\eta}{2\pi}.$$
By direct computation, we have
$$\ell_1(x;\xi,\eta) = -e^{-\frac12(\eta-x)^2} \sqrt{2\pi}  \1_{\xi>\eta} \Phi_{\param}(\xi)$$
and
\begin{equation}\label{eq: L1}
L_1(x) =  -\int_{-\infty}^\infty \int_{-\infty}^\infty   \1_{\xi>\eta} {e^{-\frac12(\eta-x)^2}}{s(\xi)s(\eta)} \Phi_{\param}(\xi) \,\frac{d\xi d\eta}{\sqrt{2\pi}}.
\end{equation}
Next, let  $L_2(x)=L(x)-L_1(x)$.
Then by \eqref{eq: lemma}, we have $$L_2(x)=\int_{-\infty}^\infty \int_{-\infty}^\infty {\ell_2(x;\xi,\eta)}{s(\xi)s(\eta)}\,\frac{d\xi\,d\eta}{2\pi},$$ where
$$\ell_2(x;\xi,\eta)
=  2\int_0^\infty  e^{-2u^2 -\frac12(\xi-x)^2-\frac12(\eta-x)^2+2u(\xi-x)+2(1-{\param})(u+x/2)_+^2}\,du.$$
This implies that
\begin{equation} \label{eq: L2}
L_2(x)
= S(x) \int_x^\infty S(v)e^{\frac12(1-{\param})v_+^2}\,dv.
\end{equation}
From the relation
$$R(z) = e^{2(1-{\param})x_+^2} S(z+\bar z),$$
we have
$$\pa\bp \log R = (1-{\param})\mathbf{1}_{x>0} + \Big(\frac{S'}S\Big)'(2x).$$
Thus Ward's equation holds if and only if
$$\Big(\frac LS\Big)'(2x) =   \Big(\frac{S'}S\Big)'(2x) +1 - R(z).$$
From the equation~\eqref{eq: L2}, $L_2$ satisfies
$$ \Big(\frac {L_2}S\Big)'(2x) =  - R(z),$$ and hence
Ward's equation holds if and only if
$$\Big(\frac {L_1}S\Big)' =   \Big(\frac{S'}S\Big)' +1.$$
Equivalently,
$$L_1(x) = S'(x) +(x - c)S(x)$$
for some constant $c.$
We find
$$L_1(x) =  \int_{-\infty}^\infty (\eta - c) {e^{-\frac12(\eta-x)^2}}{s(\eta)} \,\frac{d\eta }{\sqrt{2\pi}}.$$
Comparing it to \eqref{eq: L1}, we have 
$$  \eta - c  = -\int_\eta^\infty s(\xi)\Phi_{\param}(\xi)\,d\xi$$
for almost everywhere in the set $I=\{\eta\in\R : s(\eta) \ne 0\}$.  This implies that $s \cdot \Phi_{\param}=\1_{I}$ almost everywhere and $I$ is connected. 
For the last statement, assume on the contrary that $I$ is left-bounded. Then it is a direct consequence that $R(x)\to 0 $ as $x\to -\infty$. This implies that if $R$ satisfies the condition \eqref{Rleft}, then $I=(-\infty, c)$ where  
$c$ is determined by $$  \eta - c  = -\int_\eta^\infty \1_I.$$
Our proof of Theorem \ref{ward1} is complete.
\end{proof}

\begin{rmk*} 
Theorem \ref{ward1} shows that for an ``abstract'' translation invariant kernel, Ward's equation only determines it up to a real constant $c$. It is worth pointing out that each limiting $1$-point function $R$ in Theorem \ref{exker} satisfies the condition \eqref{Rleft}. More precisely, one can prove that $R(x)\to 1$ as $x\to -\infty$ from the argument presented in \cite[Section 5]{AKM}.

In the free boundary case when the confinement parameter $\param=1$, we know that $I=(-\infty, 0)$. Indeed, this is shown in \cite{AKM} depending on the interior estimate of $R$ and the boundary fluctuation theorem from \cite{AM}. 
Moreover, we have seen above that
$I=(-\infty, 0)$ for the Ginibre ensemble (with general $\param$) and we believe that this should be a universal fact. In the
following sections, we will prove the statement for concrete scaling limits in (for example) radially symmetric cases, by applying the method of quasipolynomials.
\end{rmk*}

\section{Boundary universality: some preliminaries}
In this section, we collect some preliminary observations which will come in handy in several occasions, when we prove universality of scaling limits
and maximum moduli (Theorem \ref{thm:ml} and Theorem \ref{thm:sr}).
We will also give some background about Laplacian growth.

\subsection{Outline of strategy}

 Let $p_{j,n}$ be the $j$:th orthonormal polynomial with respect to the weight $e^{-nQ^{(\param)}}$, and write $w_{j,n}=p_{j,n}e^{-nQ^{(\param)}/2}$. We start with the basic identity for the 1-point function
$$\bfR_n(\zeta)=\bfK_n(\zeta,\zeta)=\sum_{j=0}^{n-1}|w_{j,n}(\zeta)|^2.$$
As a preliminary step, we shall prove that if $\zeta$ is very close to the outer boundary
$\Gamma=\d \Pc S$, then all terms but the last $\sqrt{n}\log n$ ones can be neglected. More precisely, if $\zeta$ belongs to a \textit{belt}
\begin{equation}\label{belt}N_\Gamma=\{\zeta;\, \dist(\zeta,\Gamma)\le C/\sqrt{n}\},\end{equation}
then
\begin{equation}\label{redu}\bfR_n(\zeta)\sim \sum_{j=n-\sqrt{n}\log n}^{n-1}|w_{j,n}(\zeta)|^2.\end{equation}

The idea is to prove an asymptotic formula for the remaining $w_{j,n}$'s, so that universality will follow after a simple-minded summation. (See Figure \ref{fig3} and Figure \ref{fg:wop}.)

\begin{figure}[ht]
\includegraphics[width=.32\textwidth]{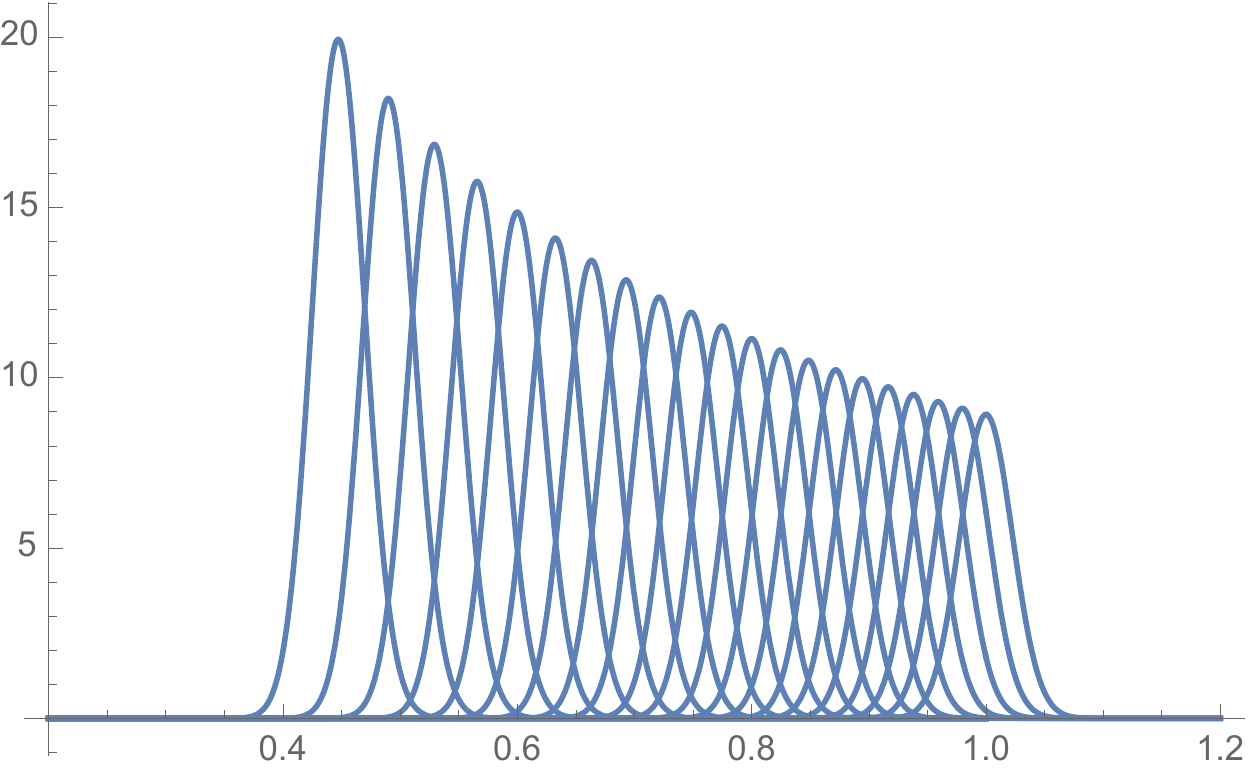}
\includegraphics[width=.32\textwidth]{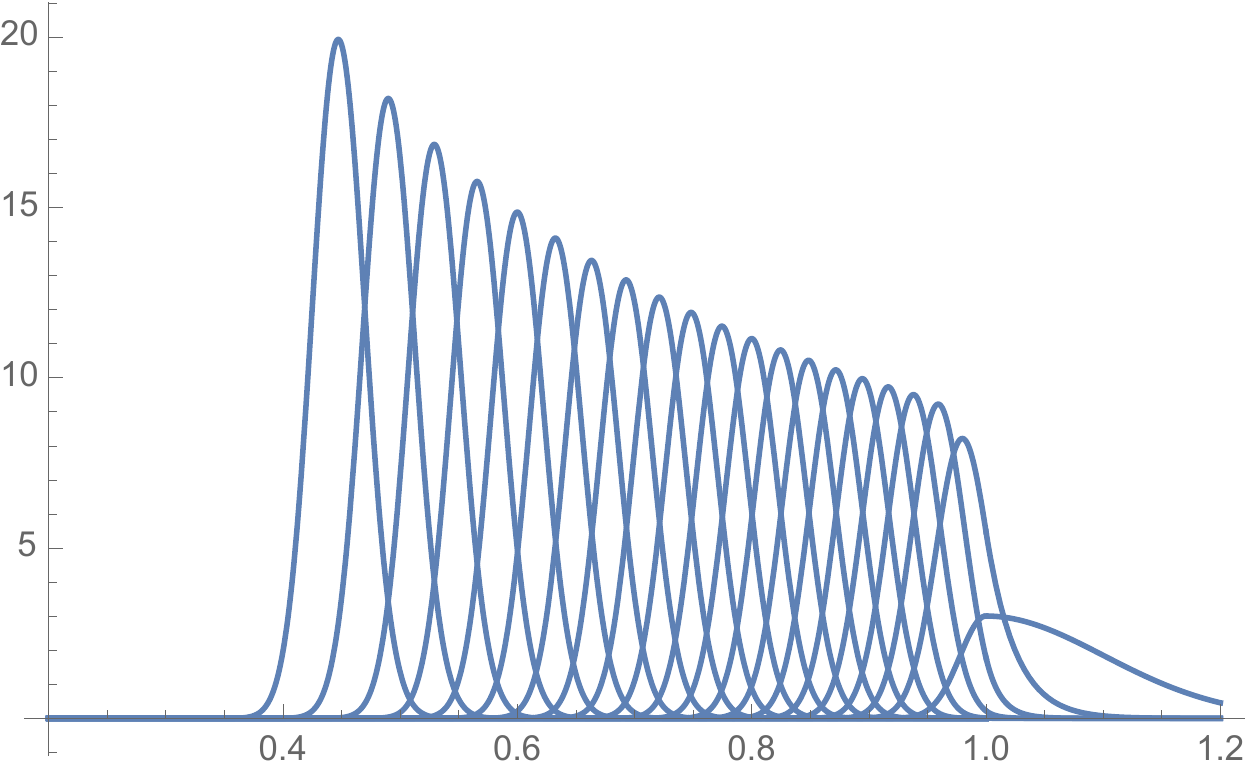}
\includegraphics[width=.32\textwidth]{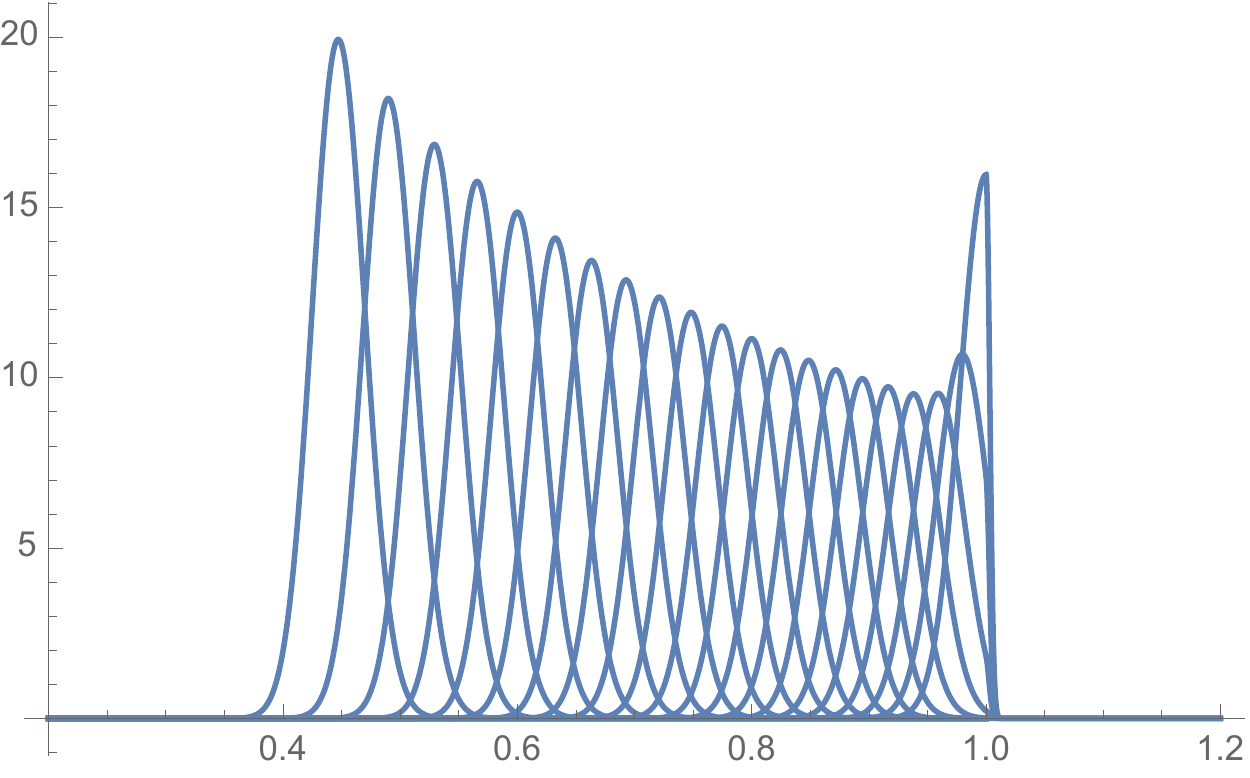}
\caption{Graphs of squared weighted orthonormal polynomials $|w_{j,n}|^2$
when $n=500$ and $j$ varies from $100$ to $500$ in steps of $20$ with different $\param$: $\param = 1$ (left), $\param = 0.05$ (middle), and $\param = 50$ (right). The abscissa $x=1$ is a boundary point of the droplet.}
\label{fg:wop}
\end{figure}

\subsection{Preliminaries on Laplacian growth}\label{prel}

In this subsection we recast some results on Laplacian growth; this is convenient, if not otherwise, to introduce some notation. References and further reading can be found, for instance in \cite{GM,HM,LM,VE} and in \cite[Section 2]{HW17}.

For a parameter $\tau$ with $0<\tau\le 1$, we let
$\eqpot_\tau$ be the \textit{obstacle function} defined by the obstacle $Q$, subject to the growth condition
$$\eqpot_\tau(\zeta)=2\tau\log|\zeta|+O(1),\quad \text{as}\quad \zeta\to\infty.$$
The precise definition runs as follows: for each $\eta\in\C$, $\eqpot_\tau(\eta)$ is the supremum of
$s(\eta)$ where $s$ is a subharmonic function which is everywhere $\le  Q$ and satisfies $s(\zeta)\le 2\tau\log|\zeta|+O(1)$ as $\zeta\to\infty$.

Write $S_\tau$ for the droplet in external potential $Q/\tau$ and note that
$\sigma(S_\tau)=\tau$
where $\sigma$ is the equilibrium measure \eqref{EQMEAS}.
Under our conditions,
$S_\tau$ equals to closure of the interior of the coincidence set $\{Q=\eqpot_\tau\}$ and the measure $\sigma_\tau:=\1_{S_\tau}\cdot \sigma$ minimizes the weighted energy \eqref{q-en} amongst positive measures of total mass $\tau$.

Clearly the droplets $S_\tau$ increase with $\tau$. The evolution of the $S_\tau$'s is known as Laplacian growth.
We will write $\Gamma_\tau$ for the outer boundary,
$$\Gamma_\tau=\d\Pc S_\tau.$$
Hence $\Gamma_\tau=\d U_\tau$ where
$$U_\tau=\hat{\C}\setminus \Pc S_\tau.$$
Finally, we denote by
$$\phi_\tau:U_\tau\to\D_e$$
the unique conformal (surjective) map, normalized so that $\phi_\tau(\infty)=\infty$ and $\phi_\tau'(\infty)>0$. ($\D_e$ is the exterior disk $\{|\zeta|>1\}\cup\{\infty\}$.)

It is well known that $\eqpot_\tau$ is $C^{1,1}$-smooth on $\C$ and harmonic on $U_\tau$. Moreover, since $\Gamma=\Gamma_1$ is everywhere regular, it follows from standard facts about Laplacian growth that $\Gamma_\tau$ is everywhere regular for all $\tau$ in some interval $\tau_0\le \tau\le 1$, where $\tau_0<1$. Below we fix, once and for all, such a $\tau_0$.

\begin{lem} (``Richardson's lemma''). If $0<\tau<\tau'\le 1$ and if $h$ is harmonic in $U_\tau$ and smooth up to the boundary, then
$$\int_{S_{\tau'}\setminus S_\tau}h\Lap Q\, dA=(\tau'-\tau)h(\infty).$$
\end{lem}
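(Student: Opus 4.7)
The plan is to apply Green's identity to $h$ together with the auxiliary function
$$g := \eqpot_{\tau'} - \eqpot_\tau,$$
where $\eqpot_s$ is the obstacle function with growth $\eqpot_s(\zeta) = 2s\log|\zeta| + \gamma_s + o(1)$ at infinity. By the standing properties of obstacle functions, $g$ is $C^{1,1}$ on $\C$ and enjoys the following three features: (i) $g \equiv 0$ on $S_\tau$, since $\eqpot_\tau = Q = \eqpot_{\tau'}$ on $S_\tau \subset S_{\tau'}$; (ii) $\Lap g = \Lap Q\cdot \1_{S_{\tau'}\setminus S_\tau}$ as an $L^\infty$ function, because each equilibrium measure has density $\Lap Q\cdot \1_{S_s}$ with respect to $dA$; and (iii) at infinity $g(\zeta) = 2(\tau'-\tau)\log|\zeta| + (\gamma_{\tau'}-\gamma_\tau) + o(1)$.

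Next I would apply Green's identity to $h$ and $g$ on the bounded region $\Omega_R := U_\tau \cap \{|\zeta|<R\}$, for $R$ large. Since $\Lap h \equiv 0$ on $\Omega_R$ and $\Lap g$ is supported in $S_{\tau'}\setminus S_\tau \subset U_\tau$, this produces
$$\int_{S_{\tau'}\setminus S_\tau} h\,\Lap Q\, dA \;=\; \int_{\Omega_R} h\,\Lap g\, dA \;=\; \frac{1}{4\pi}\oint_{\d\Omega_R}\bigl(h\,\pa_\normal g - g\,\pa_\normal h\bigr)\,|d\zeta|,$$
with $\d\Omega_R = \Gamma_\tau\cup\{|\zeta|=R\}$ and $\pa_\normal$ the outward normal derivative from $\Omega_R$.

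On $\Gamma_\tau$ the boundary integrand vanishes identically: $g\equiv 0$ on the compact set $S_\tau$, and the $C^{1,1}$ regularity forces $\nabla g\equiv 0$ on $\Int S_\tau$, which extends to $\Gamma_\tau$ by continuity. On the outer circle $|\zeta|=R$, the asymptotics (iii) give $\pa_r g = 2(\tau'-\tau)/R + o(1/R)$, while $h$ is harmonic at infinity, so $h(\zeta) = h(\infty) + O(1/|\zeta|)$ and $\pa_r h = O(1/R^2)$. Accounting for the arclength $2\pi R$ and the prefactor $1/(4\pi)$, the $h\,\pa_\normal g$-term yields $h(\infty)(\tau'-\tau) + o(1)$, while the $g\,\pa_\normal h$-term is $O((\log R)/R)\to 0$. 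Sending $R\to\infty$ gives the lemma.

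The only real obstacle is technical bookkeeping: one must justify Green's identity for the merely $C^{1,1}$ function $g$ (routine, e.g.\ by mollification) and ensure that every component of $S_{\tau'}\setminus S_\tau$ lies in $U_\tau$, so that $h$ is actually defined on the support of $\Lap g$. The latter is guaranteed by the standing hypotheses that $S$ is connected and $\Gamma_\tau$ is everywhere regular on $[\tau_0,1]$, which preclude hole-filling during Laplacian growth in the relevant parameter range.
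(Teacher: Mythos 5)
Your argument is correct, and it uses the same potential-theoretic inputs as the paper's proof---that $g:=\eqpot_{\tau'}-\eqpot_\tau$ is $C^{1,1}$, vanishes together with its gradient on $S_\tau$, has $\Lap g=\Lap Q\cdot\1_{S_{\tau'}\setminus S_\tau}$, and grows like $2(\tau'-\tau)\log|\zeta|+O(1)$---but the Green's-identity realization is genuinely different. The paper first reduces to $h(\infty)=0$ (the case of constant $h$ being immediate from $\sigma(S_{\tau'}\setminus S_\tau)=\tau'-\tau$), then extends $h$ smoothly to all of $\C$ and integrates by parts over the whole plane to obtain $\int_{S_s}h\,\Lap Q\,dA=\int_\C\eqpot_s\,\Lap h\,dA$ for $s=\tau,\tau'$; subtracting, the difference vanishes because $\eqpot_{\tau'}=\eqpot_\tau$ on $\supp\Lap h\subset\Pc S_\tau$. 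You avoid both the normalization and the extension by applying Green's identity directly on $U_\tau\cap\{|\zeta|<R\}$: the inner boundary term dies since $g$ and $\nabla g$ vanish on $\Gamma_\tau$, while the outer term at $|\zeta|=R$ supplies $(\tau'-\tau)h(\infty)$ in the limit. Either route is sound; yours has the mild advantage of not needing to make sense of a smooth extension of $h$ into $\Pc S_\tau$, whereas the paper's reduction makes the boundary contribution at infinity vanish instead of having to be computed.

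The one place to tighten is your final paragraph. The containment $S_{\tau'}\setminus S_\tau\subset U_\tau$ does not really follow from $S$ being connected (connectedness of $S$ does not preclude holes in $S_\tau$), nor from regularity of $\Gamma_\tau$ on $[\tau_0,1]$ (the lemma is asserted for all $0<\tau<\tau'\le 1$, not only $\tau\ge\tau_0$). The correct and unconditional reason is the maximum principle: on any bounded component $H$ of $\C\setminus S_\tau$, the function $\eqpot_\tau$ is harmonic with boundary values $Q$ on $\d H$, while $\eqpot_{\tau'}$ is subharmonic on $H$ with boundary values $\le Q$, so $\eqpot_{\tau'}\le\eqpot_\tau$ on $H$; the reverse inequality $\eqpot_\tau\le\eqpot_{\tau'}$ is immediate from the envelope definition of the obstacle functions, so $g\equiv 0$ on $H$. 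Thus $\Lap g=0$ on $H$, and since $\Lap Q>0$ there, $S_{\tau'}\setminus S_\tau$ cannot meet $H$. This gives exactly the containment you need, and the same maximum-principle observation is what makes the paper's subtraction step work when $\Pc S_\tau\ne S_\tau$.
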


\begin{proof} Since $\sigma(S_\tau)=\tau$,
the asserted identity is true when $h$ is a constant. Hence we can assume that $h(\infty)=0$. Let us fix such an $h$ and extend it to $\C$ in a smooth way.
It follows from the properties of the obstacle function $\eqpot_\tau$ that
$$\int_{S_\tau}h\Lap Q\, dA=\int_\C h\Lap \eqpot_\tau\, dA=\int_\C \Lap h\cdot \1_{S_\tau}Q\, dA.$$
Subtracting the corresponding identity with $\tau$ replaced by $\tau'$ we find
$\int_{S_{\tau'}\setminus S_\tau}h\Lap Q\, dA=0,$
finishing the proof of the lemma.
 \end{proof}

The lemma says that if $d\sigma_\tau=\Lap Q\cdot\1_{S_\tau}\, dA$ is equilibrium measure of mass $\tau$, then near the outer boundary component $\Gamma_\tau$, we have, in a suitable ``weak'' sense that
$\frac {d\sigma_\tau} {d\tau}=\omega_\tau$
where $\omega_\tau$ is the harmonic measure of $U_\tau$ evaluated at $\infty$. This means: if $h$ is a continuous function on $\d S_\tau$, then $\omega_\tau(h)=\tilde{h}(\infty)$, where $\tilde{h}$ is the harmonic extension to $U_\tau$ of $h$.

Let us define the Green's function of $U_\tau$ with pole at infinity by
$G(\zeta,\infty)=\log|\phi_\tau(\zeta)|^2$.
Then by Green's identity,
$$\tilde{h}(\infty)=\int_{U_\tau}\tilde{h}(\zeta)\,\Lap G (\zeta,\infty)\, dA(\zeta)=-\frac 1 2 \int_{\d S_\tau} h\frac {\d G}{\d n}\, ds_\tau,$$
where $ds_\tau$ stands for the arclength measure on $\Gamma_\tau$ divided by $2\pi$.

Now for values of $\tau$ such that $\Gamma_\tau$ is everywhere regular
$-\frac \d {\d n} G(\zeta,\infty)=2|\phi'_\tau(\zeta)|$ when $\zeta\in \Gamma_\tau$
so we conclude that
$\frac d {d\tau}(\1_{S_\tau}\Lap Q\, dA)=|\phi'_\tau|ds_\tau$,
meaning that the outer boundary $\Gamma_\tau$ moves in the direction normal to $\Gamma_\tau$, at local speed $|\phi'_\tau|/2\Lap Q$. (The factor $2$ comes about because of
the different normalizations of $dA$ and $ds_\tau$.)

The above dynamic of $\Gamma_\tau$ was of course deduced in a ``weak'' sense, but using the regularity of the curves involved, one can turn it into a pointwise
estimate. More precisely, one has the following result, which is essentially \cite[Lemma 2.3.1]{HW17}.

In the following, we denote by $\normal_\tau$ the outwards unit normal on the curve $\Gamma_\tau$.

\begin{lem} \label{movin} Let $\zeta$ be a point of $\Gamma=\Gamma_1$ and fix
$\tau$ with $\tau_0\le\tau\le 1$. Let $\zeta_\tau$ be the point in $\Gamma_\tau\cap(\zeta+\normal_1(\zeta)\R)$ which is closest
to $\zeta$. Then we have
$$\zeta_{\tau}=\zeta + (\tau-1)\normal_{\tau}(\zeta_\tau)\frac{|\phi_{1}'(\zeta)|}{2\Lap Q(\zeta)}+O((\tau-1)^2),\quad \tau\to 1$$
and $$\normal_{\tau}(\zeta_{\tau}) = \normal_{1}(\zeta)+O(\tau-1), \quad \tau\to 1.$$

In particular there are constants $c_1,c_2>0$ such that for all $\zeta\in \Gamma$ and all $\tau_0\le\tau\le 1$,
$$c_1(1-\tau)\le \dist(\Gamma_\tau,\zeta)\le c_2(1-\tau).$$
\end{lem}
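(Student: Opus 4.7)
The plan is to promote the weak ``Hele-Shaw'' dynamics derived immediately before the statement, namely $\frac{d}{d\tau}(\1_{S_\tau}\Lap Q\,dA)=|\phi_\tau'|\,ds_\tau$ on $\Gamma_\tau$, to a pointwise assertion about how the curve $\Gamma_\tau$ moves with the parameter $\tau$. The first input I would invoke is the standard regularity theory for Laplacian growth (as in \cite{HW17,LM}): since $\Gamma=\Gamma_1$ is everywhere regular, there is an interval around $\tau=1$ (which we take to be $[\tau_0,1]$) on which $\Gamma_\tau$ is a real-analytic Jordan curve depending real-analytically on $\tau$, and $\phi_\tau$ likewise depends smoothly on $\tau$ in a neighborhood of $\Gamma_1$. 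This regularity is the crucial input that legitimizes any Taylor expansion in $(\tau-1)$.

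Given this, for each $\zeta\in\Gamma$ the map $\tau\mapsto\zeta_\tau$, where $\zeta_\tau$ is the nearest point to $\zeta$ on $\Gamma_\tau\cap(\zeta+\normal_1(\zeta)\R)$, is smooth in $\tau$ near $\tau=1$. Differentiating at $\tau=1$ gives a velocity vector $\dot\zeta_1$ which, by construction, is parallel to $\normal_1(\zeta)$. Its signed length is precisely the pointwise normal speed of the boundary, which I would identify with $|\phi_1'(\zeta)|/(2\Lap Q(\zeta))$ by testing the weak identity above against a smooth bump concentrated on a short arc of $\Gamma_1$ around $\zeta$ and letting the support shrink; the smoothness of $\phi_\tau$, $\normal_\tau$ and $\Lap Q$ near $\Gamma$ makes this limit legitimate. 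Taylor's theorem applied to $\tau\mapsto\zeta_\tau$ then yields the first displayed asymptotic with an $O((\tau-1)^2)$ remainder, where the leading coefficient is rewritten as $\normal_\tau(\zeta_\tau)|\phi_1'(\zeta)|/(2\Lap Q(\zeta))$ up to the second-order error, using the second claim below.

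The second assertion $\normal_\tau(\zeta_\tau)=\normal_1(\zeta)+O(\tau-1)$ is a direct consequence of the joint smooth dependence of the unit normal on both $\tau$ and the point on $\Gamma_\tau$, combined with the fact that $\zeta_\tau=\zeta+O(\tau-1)$. For the final two-sided bound on $\dist(\Gamma_\tau,\zeta)$, the upper inequality is immediate from the first asymptotic. The lower inequality requires that the leading coefficient $|\phi_1'(\zeta)|/(2\Lap Q(\zeta))$ is bounded \emph{away from zero} uniformly for $\zeta\in\Gamma$; this holds because $\Gamma$ is compact and everywhere regular, so $|\phi_1'|$ extends continuously and is strictly positive on $\Gamma$, while $\Lap Q$ is continuous and bounded. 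A standard compactness argument then produces the uniform constants $c_1,c_2>0$, shrinking $\tau_0$ toward $1$ if needed to absorb the $O((\tau-1)^2)$ error into the linear term.

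The main obstacle is the passage from the weak (integrated) form of Richardson's lemma to a genuinely pointwise normal-velocity statement; this rests entirely on the real-analytic regularity of the family $\{\Gamma_\tau\}_{\tau\in[\tau_0,1]}$ and the smoothness of $\phi_\tau$ across $\tau=1$, which are nontrivial facts from the Laplacian-growth literature that must be invoked rather than re-derived. Once that regularity is in hand, the remaining content of the lemma is Taylor expansion plus a compactness argument.
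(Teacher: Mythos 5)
The paper does not actually prove this lemma: after deriving the weak Hele--Shaw identity $\tfrac{d}{d\tau}(\1_{S_\tau}\Lap Q\,dA)=|\phi_\tau'|\,ds_\tau$, it simply remarks that ``using the regularity of the curves involved, one can turn it into a pointwise estimate'' and cites the statement as essentially Lemma~2.3.1 of Hedenmalm--Wennman \cite{HW17}. Your proposal fleshes out exactly the strategy the paper alludes to, so in that sense the approach is the same; the difference is only that you attempt to spell it out while the paper delegates it.

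Within your sketch the logic is sound and you correctly identify the load-bearing step, but a couple of points are worth tightening. First, the smoothness of $\tau\mapsto\zeta_\tau$ along the fixed line $\zeta+\normal_1(\zeta)\R$ is not quite automatic from the regularity of the family $\{\Gamma_\tau\}$; one should say it follows from the implicit function theorem applied to a smooth defining function of $\Gamma_\tau$ (e.g. a local level set of $Q-\eqpot_\tau$, or the parametrization $\phi_\tau^{-1}$), together with transversality of $\normal_1(\zeta)$ to $\Gamma_\tau$ for $\tau$ near $1$. Second, your localization argument identifying the normal speed as $|\phi_1'(\zeta)|/(2\Lap Q(\zeta))$ is fine once the joint regularity in $(\tau,\zeta)$ is granted, but you should note that the weak identity is really an identity of measures on an annular neighborhood, so testing against shrinking bumps supported near $\zeta$ requires that $|\phi_\tau'|$ and $\Lap Q$ be continuous in both variables uniformly as $\tau\to 1$ -- again exactly the Laplacian-growth regularity you are invoking. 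With these small clarifications the argument is complete and matches what \cite{HW17} does.
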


For $\tau_0<\tau\le 1$, we will denote by $V_\tau$ the harmonic continuation of the harmonic function $\eqpot_\tau$ on $U_\tau$ across the analytic curve $\Gamma_\tau$. Considering the growth as $\zeta\to\infty$, we obtain the basic identity
\begin{equation}\label{good}V_\tau(\zeta)=\re\calQ_\tau+\tau\log|\phi_\tau(\zeta)|^2,
\end{equation}
where $\calQ_\tau$ is the holomorphic function on $U_\tau$ with $\re \calQ_\tau=Q$ on $\Gamma_\tau$ and $\im \calQ_\tau(\infty)=0$.

The function $Q-V_\tau$, considered in a neighborhood of the curve $\Gamma_\tau$, plays a central role for the theory.

\begin{lem}\label{lem:Qe}
Fix a number $\tau \in [\tau_0,1]$ and a point $p_\tau \in \Gamma_\tau$. Then we have for $v\in\C$
$$(Q-V_\tau)(p_\tau + v) = 2\Lap Q(p_\tau) \cdot v_{\normal}^2 + O(|v|^3),\quad |v|\to 0,$$
where $v_{\normal}=v\cdot\normal=\re(v\bar{\normal})$ is the component of $v$ in the normal direction $\normal=\rmn_{\tau}(p_\tau)$.
\end{lem}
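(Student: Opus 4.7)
The plan is to Taylor expand $f := Q - V_\tau$ at $p_\tau$ to second order and show that only the $v_\normal^2$ contribution survives. Two basic facts reduce the problem to a single computation: first, $f \equiv 0$ on $\Gamma_\tau$ near $p_\tau$, because $\eqpot_\tau = Q$ on $S_\tau$ while $V_\tau = \eqpot_\tau$ on $\overline{U_\tau}$, so the two definitions agree on the common boundary $\Gamma_\tau$; and second, $\nabla f \equiv 0$ on $\Gamma_\tau$, because $\eqpot_\tau \in C^{1,1}(\C)$ forces the one-sided gradients from $S_\tau$ and from $U_\tau$ to agree on $\Gamma_\tau$. This is exactly the setup already used (in the special case $\tau = 1$) in the proof preceding Lemma \ref{goodb}.

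With these two facts in hand the remaining work is purely differential-geometric. Fix orthonormal directions $\normal = \normal_\tau(p_\tau)$ and a unit tangent $\rmt$ to $\Gamma_\tau$ at $p_\tau$, and parametrize $\Gamma_\tau$ near $p_\tau$ by arclength as $\gamma(s)$ with $\gamma(0) = p_\tau$. Differentiating the identity $f \circ \gamma \equiv 0$ twice and using $\nabla f(p_\tau) = 0$ to kill the term involving $\gamma''(0)$ gives $f_{\rmt\rmt}(p_\tau) = 0$. Similarly, differentiating the identity $\partial_\normal f = 0$ tangentially along $\Gamma_\tau$ yields $f_{\normal\rmt}(p_\tau) = 0$. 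Finally, $V_\tau$ is harmonic in a neighborhood of $\Gamma_\tau$, so $\Lap f = \Lap Q$ there; since $\Lap = \d\dbar$ is one quarter of the standard Laplacian, this gives $f_{\normal\normal}(p_\tau) + f_{\rmt\rmt}(p_\tau) = 4\Lap Q(p_\tau)$, and hence $f_{\normal\normal}(p_\tau) = 4\Lap Q(p_\tau)$.

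The second-order Taylor polynomial of $f$ at $p_\tau$ is therefore
$$
\tfrac{1}{2} f_{\normal\normal}(p_\tau)\, v_\normal^2 \;=\; 2\Lap Q(p_\tau)\, v_\normal^2,
$$
and the cubic remainder is $O(|v|^3)$ by the real-analyticity of $Q$ in a neighborhood of the droplet and of $V_\tau$ in a neighborhood of $\Gamma_\tau$ (the latter being available because $\Gamma_\tau$ is everywhere regular for $\tau \in [\tau_0, 1]$). I expect no serious obstacle; the only mild subtlety is the vanishing of $f_{\rmt\rmt}(p_\tau)$, which hinges on using $\nabla f(p_\tau) = 0$ to absorb the curvature contribution from $\gamma''(0)$ that would otherwise appear when differentiating $f \circ \gamma$ twice. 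The proof is essentially the $\tau$-parameter version of the argument already recorded earlier in the paper for $\tau = 1$.
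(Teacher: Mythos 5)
Your proof is correct and follows essentially the same route as the paper's: use $Q - V_\tau = 0$ and $\nabla(Q - V_\tau) = 0$ on $\Gamma_\tau$ to kill the constant, linear, tangential-tangential, and mixed second-order terms, then invoke harmonicity of $V_\tau$ to identify $\pa_\normal^2(Q - V_\tau) = 4\Lap Q$ on $\Gamma_\tau$. Your write-up is in fact slightly more explicit than the paper's (which records $\pa_\rmt^2(Q-V_\tau) = 0$ but leaves the vanishing of the mixed derivative $\pa_\normal\pa_\rmt(Q-V_\tau)$ implicit, obtained by differentiating $\pa_\normal(Q-V_\tau)\equiv 0$ along $\Gamma_\tau$); this is a helpful clarification rather than a departure.
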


\begin{proof}
Let $\pa_\rmn$ and $\pa_\rmt$ denote the exterior normal and tangential derivatives on $\Gamma_\tau$, respectively.
Since $Q=V_\tau$ and $\nabla Q = \nabla V_\tau$ on $\Gamma_\tau$, we have for all $\zeta\in\Gamma_\tau$
$$\pa_\rmn(Q-V_\tau)= \pa_\rmt(Q-V_\tau) = \pa_\rmt^2(Q-V_\tau) =0.$$
This implies $\pa_\rmn^2(Q-V_\tau)=4\Lap(Q-V_\tau) = 4\Lap Q$ on $\Gamma_\tau$, finishing the proof of the lemma.
\end{proof}

In particular, it follows from Lemma \ref{lem:Qe} that
there is a number $c>0$ such that, if $\zeta\in N_\Gamma$,
\begin{equation}\label{skurk}(Q-\eqpot_\tau)(\zeta)=(Q-V_\tau)(\zeta)\ge c\dist(\zeta,S_\tau)^2.\end{equation}

\subsection{Discarding lower order terms}\label{disc}
Let $N_\Gamma$ be the belt \eqref{belt} and recall that $Q=Q^{(\param)}$ in $N_\Gamma$. Below we fix an arbitrary point
$\zeta\in N_\Gamma.$
Also fix a number $\tau_0<1$ such that the curves $\Gamma_\tau$ are regular for all $\tau$ with $\tau_0\le \tau\le 1$. Now write
$$\tau=\tau(j)=j/n,\qquad (j\le n-1).$$
It follows from Lemma \ref{l6} that there is a number $C=C_\param$ such that
$$\tau\le \tau_0\quad  \Rightarrow\quad |w_{j,n}(\zeta)|^2\le
Cne^{-n(Q^{(\param)}-\eqpot_{\tau_0})(\zeta)}\le Cne^{-cn},$$
where $c=\inf_{N_\Gamma}\{Q^{(\param)}-\eqpot_{\tau_0}\}>0$.
Hence $$\bfR_n(\zeta)\sim\sum_{j=\tau_0 n}^{n-1}|w_{j,n}(\zeta)|^2.$$

Next fix $j$ such that $\tau_0 n\le j\le n-\sqrt{n}\log n$, i.e.,
$\tau_0\le \tau\le 1-\delta_n$ where we put \textit{throughout}
$$\delta_n=\frac {\log n}{\sqrt{n}}.$$

By Lemma \ref{movin}, we have $\dist(\Gamma_\tau,\Gamma)\ge c(1-\tau)\ge c\delta_n$ for some constant $c>0$.
By \eqref{skurk},
$n(Q^{(\param)}-\eqpot_\tau)(\zeta)\ge c_1\log^2 n$ where $c_1>0$, so
$$|w_{j,n}(\zeta)|^2\le Cne^{-n(Q^{(\param)}-\eqpot_\tau)(\zeta)}\le Cne^{-c_1\log^2 n},\qquad (j\le n-\sqrt{n}\log n).$$
We have shown that
$$\bfR_n(\zeta)\sim \sum_{j=n-\sqrt{n}\log n}^{n-1}|w_{j,n}(\zeta)|^2, \quad (n\to\infty)$$
in the sense that the difference of the left and right sides converges uniformly to zero on $N_\Gamma$, as $n\to\infty$.

\section{Approximation by quasipolynomials} \label{apqq}

In this section, we discuss an asymptotic formula for weighted quasipolynomials $w_{j,n}$ when $j$ is in the range
$n-\sqrt{n}\log n\le j\le n-1$. This formula will be used repeatedly, in our proofs of Theorem \ref{thm:ml} and Theorem \ref{thm:sr}.

The asymptotic formula leads to functions behaving ``essentially'' like
weighted orthogonal polynomials, provided that
certain conditions are satisfied
(e.g., the potential is radially symmetric). However, in the following discussion, we make no assumptions concerning
radial symmetry, except where this is explicitly stated (namely, it is used in the proof of property (P2) below).

Throughout the section, we assume that $n$ is large enough, and we write $\tau=\tau(j)=j/n$ and $\delta_n=n^{-1/2}\log n$.
Thus $1-\delta_n\le \tau< 1.$

\subsection{Basic definitions}
When $n$ is large, $\Gamma_\tau$ is a real-analytic Jordan curve and thus
the map $\phi_\tau:U_\tau\to \D_e$ can be continued analytically across $\Gamma_\tau$ to $\C\setminus K$ where $K$ is a suitable compact set chosen so that $\phi_\tau$ maps $K$ biholomorphically onto $\D_e(\rho_0-\delta)$ for some $\rho_0<1$ and $\delta>0$. ($\D_e(\rho)$ denotes the set $\{|\zeta|>\rho\}\cup\{\infty\}$.)

Let $\calQ_\tau$ be the bounded holomorphic function defined on a neighborhood of $U_\tau$ with $\re \calQ_\tau = Q$ on $\Gamma_\tau$ and $\im \calQ_\tau(\infty) = 0$. (See \eqref{good}.)

For each $j$ with $n-n\delta_n\leq j\leq n-1$, we write $$\xi_{j,n}(\zeta) = \frac{j-n}{\sqrt{n}}\cdot \frac{|\phi_\tau'(\zeta)|}{\sqrt{\Lap Q(\zeta)}},\qquad (\zeta\in\hat{\C}\setminus K).$$
It is convenient to introduce the notation
$$\Phi_{j,n}(\zeta):= \Phi_{\param}(\xi_{j,n}(\zeta)),$$ where $\Phi_{\param}$ is the function defined in \eqref{defphi}.

Finally, we define the \textit{approximate quasipolynomial} $F_{j,n}=F_{j,n,\param}$ of degree $j$ associated with the external field $nQ^{(\param)}$ by
\begin{equation}\label{qop}
F_{j,n} = \left(\frac{n}{2\pi}\right)^{\frac{1}{4}}\sqrt{\phi_\tau'}\,\phi_\tau^{j}\,e^{n\calQ_\tau/2}\,e^{\calH_{j,n}/2},
\end{equation}
where $\calH_{j,n}$ is the bounded holomorphic function defined on $\hat{\C}\setminus K$ such that
$$\re \calH_{j,n} = \log \sqrt{\Lap Q} - \log \Phi_{j,n}\qquad\text{on}\quad \d S_\tau$$
and $\im \calH_{j,n}(\infty)=0$.

We shall show that $F_{j,n}$ satisfies the following properties in Sections \ref{sec:apno} and \ref{sec:ao}. Fix a smooth function $\chi_0$ on $\hat{\C}$ such that $\chi_0=0$ on $K$ and
$\chi_0=1$ on $O:=\phi_\tau^{-1}(\D_e(\rho_0))$.
\begin{itemize}
\item[(P1)] $F_{j,n}$ is approximately normalized:
$$\int_{\C} \chi_{0}^2 \,|F_{j,n}|^2 \, e^{-nQ^{(\param)}}\, dA = 1+O\left(n^{-1/2}\log^2 n\right).$$

\item[(P2)] $F_{j,n}$ satisfies the following ``approximate orthogonality'': for any polynomial $p$ with $\deg p < j$,
$$\int_{\C} \chi_{0}\, F_{j,n}\, \bar{p}\, e^{-nQ^{(\param)} }\,dA = O\left(n^{-1/2}\log^2 n\,\|p\|_{nQ^{(\param)} }\right).$$
\end{itemize}

Following \cite{HW17} we now define a ``positioning operator'' $\Lambda_{j,n}$ by
$$\Lambda_{j,n}[f] = \phi_{\tau}'\cdot\phi_{\tau}^j\cdot e^{n\calQ_\tau/2}\cdot f\circ \phi_\tau.$$
Then the map $\Lambda_{j,n}$ is an isometric isomorphism from $L^2_{nR_{{\param},\tau}}(\D_e(\rho_0))$ to $L^2_{nQ^{(\param)} }(O)$ where $$R_{{\param},\tau} := (Q^{(\param)}  - V_\tau)\circ \phi_\tau^{-1},$$
and it furthermore preserves holomorphicity. (See \cite[Section 3.1]{HW17}.)

The isometry property means that for all $f,\,g\in L^2_{nR_{{\param},\tau}}(\D_e(\rho_0))$
$$\int_{O}\Lambda_{j,n}[f]\,\overline{\Lambda_{j,n}[g]}\, e^{-nQ^{(\param)} }\,dA = \int_{\D_e(\rho_0)}f\,\bar{g}\,e^{-nR_{{\param},\tau}}\,dA.$$
We now define a function $f_{j,n}$ on $\D_e(\rho_0)$ by the formula
$$n^{\frac{1}{4}}\Lambda_{j,n}[f_{j,n}]=F_{j,n}.$$
This gives $$f_{j,n} = (2\pi)^{-1/4}((\phi_{\tau}')^{-1/2}\cdot e^{\calH_{j,n}/2})\circ \phi_\tau^{-1}.$$

\subsection{Approximate normalization}\label{sec:apno}
In this subsection we prove the property (P1) of the approximate quasipolynomials. Recall that $U_\tau=\hat{\C}\setminus \Pc S_\tau$.

We first note that, for points $\zeta$ close to the unit circle $\T$,
\begin{align*}
R_{{\param},\tau}(\zeta)=
\begin{cases}
(Q-V_\tau -(1-{\param})(Q-V_1))\circ\phi_\tau^{-1}(\zeta)
& \text{if}\qquad  \zeta \in \phi_\tau(U_1),  \\
(Q-V_\tau)\circ \phi_\tau^{-1}(\zeta)
& \text{otherwise}.
\end{cases}
\end{align*}

For each $\tau \in [1-\delta_n,1]$, the curve $\phi_\tau(\d S)$ is an analytic Jordan curve, a slight perturbation of the unit circle $\T$.
It is convenient to represent this curve according to the polar parameterization
$$\phi_\tau(\pa S) = \{r_{\tau}(\eta)\cdot\eta : \eta\in \T\},$$
where $r_\tau$ is a smooth function $\T\to\R_+$. (It is easy to see that such a parameterization is possible when $\tau$ is sufficiently close to $1$.)

For a fixed $\eta\in\T$ we now consider the local speed $\dot{r}_\tau(\eta)=\frac {d r_\tau(\eta)}{d\tau}$. Since the conformal map $\phi_\tau$ locally magnifies distances by a
factor $|\phi_\tau'|$ and since the curve $\Gamma_\tau$ moves towards $\Gamma_1$ with local speed $|\phi'_\tau|/(2\Lap Q)$ (see discussion before Lemma \ref{movin}), we find that
$$\dot{r}_\tau(\eta)=\left(\frac {|\phi_\tau'|^2}{2\Lap Q}\right)\circ\phi_\tau^{-1}(\eta)+O(1-\tau).$$
We have shown that
\begin{equation}\label{rtau}
r_\tau(\eta)=1+(1-\tau)\cdot\left(\frac{|\phi_\tau'|^2}{2\Lap Q}\right)\circ\phi_\tau^{-1}(\eta)+O((1-\tau)^2),\quad \tau\to 1.\end{equation}

In the following we will write $$R_\tau:=R_{1,\tau}.$$
It is useful to note that (since $V_\tau$ is harmonic)
$$\Lap R_\tau=\left(\frac {\Lap Q}{|\phi_\tau'|^2}\right)\circ\phi_\tau^{-1}.$$

We shall now analyze the behavior of the function $R_{{\param},\tau}$ close to $\T$.

\begin{lem}\label{lem:ep}
Fix a point $\eta \in \T$. If $r<r_{\tau}(\eta)$,
$$R_{{\param},\tau}(r\eta) = 2\Lap R_\tau(\eta)(r-1)^2 + O((r-1)^3),\qquad r\to 1.$$
If $r>r_{\tau}(\eta)$,
$$R_{{\param},\tau}(r\eta)=2\Lap R_{\tau}(\eta)(r-1)^2 -2(1-{\param})\Lap R_{\tau}(\eta)(r-r_\tau(\eta))^2 + O((r-1)^3),\qquad r\to 1.
$$
\end{lem}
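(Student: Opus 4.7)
The plan is to split according to the piecewise description of $R_{\param,\tau}$ recorded just before the statement of the lemma, and in each region reduce the computation to a direct application of Lemma \ref{lem:Qe} via the conformal change of variable $\phi_\tau$.

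First I would handle the case $r<r_\tau(\eta)$, where $R_{\param,\tau}(r\eta)=(Q-V_\tau)(\phi_\tau^{-1}(r\eta))$. Set $p_\tau:=\phi_\tau^{-1}(\eta)\in\Gamma_\tau$. Since $\phi_\tau$ is conformal and the radial direction at $\eta\in\T$ is normal to $\T$, the displacement $v=\phi_\tau^{-1}(r\eta)-p_\tau$ satisfies $v=(r-1)(\phi_\tau^{-1})'(\eta)\,\eta+O((r-1)^2)$, with the leading term lying along the outward normal $\normal_\tau(p_\tau)$ and having modulus $1/|\phi_\tau'(p_\tau)|$. Plugging $v_\normal=(r-1)/|\phi_\tau'(p_\tau)|+O((r-1)^2)$ into Lemma \ref{lem:Qe} and identifying
$$\frac{\Lap Q(p_\tau)}{|\phi_\tau'(p_\tau)|^2}=\Lap R_\tau(\eta)$$
(which follows since $V_\tau$ is harmonic and $R_\tau=(Q-V_\tau)\circ\phi_\tau^{-1}$) yields the first assertion.

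For the case $r>r_\tau(\eta)$, the identity $R_{\param,\tau}(r\eta)=(Q-V_\tau)(\phi_\tau^{-1}(r\eta))-(1-\param)(Q-V_1)(\phi_\tau^{-1}(r\eta))$ reduces matters to treating the second summand. I would pick as base point $p_1:=\phi_\tau^{-1}(r_\tau(\eta)\eta)\in\Gamma_1$ and expand $\phi_\tau^{-1}(r\eta)-p_1=(r-r_\tau(\eta))(\phi_\tau^{-1})'(r_\tau(\eta)\eta)\,\eta+O((r-r_\tau(\eta))^2)$. The wrinkle is that $\normal_1(p_1)$ pulls back under $\phi_\tau^{-1}$ to the normal of the image curve $\{r_\tau(\xi)\xi:\xi\in\T\}$ at $r_\tau(\eta)\eta$, which is \emph{not} quite the radial direction $\eta$; however, differentiating the parametrization and using \eqref{rtau} gives $\partial_\theta r_\tau=O(1-\tau)$, so the angular discrepancy is of order $1-\tau$. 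Applying Lemma \ref{lem:Qe}, together with $\Lap Q(p_1)/|\phi_\tau'(p_1)|^2=\Lap R_\tau(\eta)+O(1-\tau)$ (since $|p_1-p_\tau|=O(1-\tau)$), produces
$$(Q-V_1)(\phi_\tau^{-1}(r\eta))=2\Lap R_\tau(\eta)(r-r_\tau(\eta))^2+O\big((r-r_\tau(\eta))^2(1-\tau)\big)+O\big((r-r_\tau(\eta))^3\big).$$

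To finish, I would combine the two expansions and absorb all error terms into $O((r-1)^3)$ using the crucial observation that in this second case $r-1>r_\tau(\eta)-1\ge c(1-\tau)$ by Lemma \ref{movin}, so that $1-\tau=O(r-1)$ and $|r-r_\tau(\eta)|\le r-1$. The main obstacle is precisely this bookkeeping of error terms, in particular ensuring that the tangential-vs-radial discrepancy of $\normal_1(p_1)$ and the shift of the Laplacian base point from $p_\tau$ to $p_1$ both contribute only errors of size $O((r-1)^3)$; once $1-\tau=O(r-1)$ is in hand, this is automatic.
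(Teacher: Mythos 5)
Your argument is correct and follows essentially the same strategy as the paper's proof: split $R_{\param,\tau}$ piecewise, use a second-order Taylor expansion vanishing to second order along the relevant free boundary, and control the tangential/Laplacian discrepancies via Lemma~\ref{movin} and the parametrization~\eqref{rtau}. The one (cosmetic) difference is that you apply Lemma~\ref{lem:Qe} directly in the $\zeta$-plane and then pull through $\phi_\tau^{-1}$, whereas the paper performs the analogous expansion directly on the pushed-forward functions $R_\tau$ and $(Q-V_1)\circ\phi_\tau^{-1}$ in the image coordinates; these are equivalent by conformality of $\phi_\tau$, and both rely on the same ingredients. Your closing observation that $r_\tau(\eta)-1\asymp 1-\tau$ (so $1-\tau=O(r-1)$ when $r>r_\tau(\eta)$) is exactly the right bookkeeping needed to absorb both the $O(1-\tau)$ normal-direction tilt and the $O(1-\tau)$ shift in the Laplacian base point into the $O((r-1)^3)$ error.
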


\begin{proof}
If $r<r_\tau (\eta)$, then $R_{{\param},\tau}=(Q-V_\tau)\circ \phi_\tau^{-1}=R_\tau$. From the argument in Lemma \ref{lem:Qe}, we obtain the following Taylor series expansion of $R_{{\param},\tau}$ about $\eta$:
$$R_{{\param},\tau}(r\eta) =2\Lap R_\tau(\eta)(r-1)^2 + O((r-1)^3),\qquad r\to 1.$$
Now assume that $r>r_\tau(\eta)$.
By Lemma \ref{movin}, we have
\begin{equation}\label{nn}
\rmn_1(r_\tau(\eta)\eta) = \eta + O(\tau-1),\quad \tau \to 1,
\end{equation}
where $\rmn_1(\zeta)$ denotes the outer unit normal to $\phi_\tau(\pa S_1)$ at $\zeta\in \phi_\tau(\pa S_1)$.
Hence the second-order Taylor expansion of $(Q-V_1)\circ \phi_\tau^{-1}$ about the point $r_\tau (\eta)\eta \in \phi_\tau(\d S_1)$ takes the form
\begin{equation}\label{tyl}
((Q-V_1)\circ \phi^{-1}_\tau)(r_\tau(\eta)\eta+\xi) = 2\Lap R_\tau (r_\tau(\eta)\eta) (\xi \cdot \rmn_1(r_\tau(\eta)\eta))^2 + O(|\xi|^3),\quad |\xi|\to 0.\end{equation}

Now note that $\Lap ((Q-V_1)\circ \phi_\tau^{-1})= |(\phi_\tau^{-1})'|^2(\Lap Q)\circ\phi_\tau^{-1} = \Lap R_\tau$ since $V_1$ is harmonic. Also note that if $r \to 1$ with $r>r_\tau(\eta)$, then $r_\tau(\eta)- 1 = O(1-r)$ and $1-\tau = O(1-r)$ by \eqref{rtau}.
Now combining \eqref{nn} and \eqref{tyl}, we obtain
\begin{align*}
((Q-V_1)\circ \phi^{-1}_\tau)(r\eta) &= ((Q-V_1)\circ \phi^{-1}_\tau)(r_\tau(\eta)\eta + (r-r_\tau(\eta))\eta)\\
&=2\Lap R_\tau (\eta)\cdot (r_\tau(\eta) - r)^2 + O((r-1)^3),\quad r\to 1.
\end{align*}
Hence the proof is complete.
\end{proof}

Let $D_{j,n}=\{\zeta\in\C: 1-M\delta_n \leq |\zeta| \leq 1+M\delta_n\}$ for a large number $M$. We have the following lemma which will be used to show that our quasipolynomials
are approximately normalized.
\begin{lem}\label{lem:un}
For any integer $j$ with $\tau = j/n \in [1-\delta_n,1)$, we have
$$\int_{D_{j,n}} n^{\frac{1}{2}}\,|f_{j,n}(\zeta)|^2\, e^{-nR_{{\param},\tau}(\zeta)}\,dA(\zeta)=1+O(n^{-1/2}\log ^2n).$$
\end{lem}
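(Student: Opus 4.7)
The natural strategy is to pass to polar coordinates $\zeta=r\eta$ with $\eta\in\T$, $r\in[1-M\delta_n,1+M\delta_n]$, and rescale by $s=\sqrt n(r-1)$ — the scale dictated by the quadratic vanishing of $R_{\param,\tau}$ at $\T$ (Lemma~\ref{lem:ep}). Writing $dA=2r\,dr\,ds(\eta)$ and using $dr=ds/\sqrt n$, the prefactor $n^{1/2}$ exactly cancels the Jacobian, reducing the task to analyzing
\[
\int_{\T}ds(\eta)\int_{-M\log n}^{M\log n}2|f_{j,n}((1+s/\sqrt n)\eta)|^2\,e^{-nR_{\param,\tau}((1+s/\sqrt n)\eta)}\,ds\cdot(1+O(n^{-1/2}\log n)).
\]
The boundary values of $|f_{j,n}|^2$ follow directly from the definition of $\calH_{j,n}$: setting $a(\eta):=\Lap R_\tau(\eta)=(\Lap Q/|\phi_\tau'|^2)\circ\phi_\tau^{-1}(\eta)$ and $\xi:=\xi_{j,n}(\phi_\tau^{-1}(\eta))=-\sqrt n(1-\tau)/\sqrt{a(\eta)}=O(\log n)$, one finds
\[
|f_{j,n}(\eta)|^2=(2\pi)^{-1/2}\,\frac{\sqrt{a(\eta)}}{\Phi_\param(\xi)},\qquad \eta\in\T.
\]
Lemma~\ref{lem:ep} combined with \eqref{rtau} yields the two-regime expansion
\[
nR_{\param,\tau}\bigl((1+s/\sqrt n)\eta\bigr)=2a(\eta)s^2-2(1-\param)a(\eta)\bigl(s-\sigma(\eta)\bigr)_+^2+O(|s|^3/\sqrt n),
\]
with $\sigma(\eta):=\sqrt n(r_\tau(\eta)-1)=-\xi/(2\sqrt{a(\eta)})+O(\log^2 n/\sqrt n)$. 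After the substitution $u=2\sqrt{a(\eta)}\,s$ the crossover becomes $u=-\xi$ (up to the same error) and the inner integral reduces, modulo errors, to
\[
\frac{|f_{j,n}(\eta)|^2}{\sqrt{a(\eta)}}\int_{-\infty}^{\infty}e^{-u^2/2}\,e^{(1-\param)(u+\xi)^2/2\cdot\1_{u>-\xi}}\,du.
\]
Splitting at $u=-\xi$ and completing the square in the piece $u>-\xi$ via $w=u+\xi$ shows that the last integral equals exactly
\[
\sqrt{2\pi}\Bigl[\varphi(\xi)+\tfrac{1}{\sqrt{\param}}\,e^{(1-\param)\xi^2/(2\param)}\bigl(1-\varphi(\xi/\sqrt{\param})\bigr)\Bigr]=\sqrt{2\pi}\,\Phi_\param(\xi),
\]
in perfect agreement with \eqref{defphi}. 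The two occurrences of $\Phi_\param(\xi)$ cancel, the inner integral is $1$ per $\eta$, and integrating over $\T$ with $\int_\T ds=1$ yields the leading value $1$.

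\medskip

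The remaining work is to show that the cumulative error is $O(n^{-1/2}\log^2 n)$. The four sources to be controlled are: (i) the cubic remainder in Lemma~\ref{lem:ep} perturbs the exponent by $O(|s|^3/\sqrt n)$ and, after averaging against $e^{-u^2/2}$, contributes $O(n^{-1/2})$; (ii) the $O(\log^2 n/\sqrt n)$ shift of the crossover in $u$ contributes only at order (shift)${}^3$ since both pieces of the integrand coincide at $u=-\xi$; (iii) the radial variation of $|f_{j,n}(r\eta)|^2$ is governed by the normal derivative of $\re\calH_{j,n}$ at $\T$; and (iv) the truncation to $|r-1|\le M\delta_n$, whose tail is super-polynomially small thanks to the Gaussian. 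The main obstacle is (iii): the boundary datum $\log\sqrt{\Lap Q}-\log\Phi_\param(\xi_{j,n})$ of $\re\calH_{j,n}$ involves $\xi_{j,n}$ as large as $\log n$, and one must show this does not spoil the estimate. The key observation is that $(\log\Phi_\param)'(\xi)=O(|\xi|)$ in the relevant range (because $\log\Phi_\param(\xi)\sim(1-\param)\xi^2/(2\param)$ at infinity), so the tangential derivative of the boundary data along $\T$ is at most $O(\log^2 n)$; by the Hilbert-transform relation between conjugate harmonic functions, the same bound holds for the normal derivative. Averaging $|s|\cdot O(\log^2 n)/\sqrt n$ against the Gaussian on $|s|\le M\log n$ then produces the advertised error $O(n^{-1/2}\log^2 n)$, and aggregating (i)–(iv) completes the proof.
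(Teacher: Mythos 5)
Your proposal is correct and is essentially the paper's proof: polar coordinates, the two-sided quadratic expansion of Lemma~\ref{lem:ep} across $r=r_\tau(\eta)$, a Laplace-type evaluation of the radial integral yielding $\sqrt{2\pi}\,\Phi_\param(\xi_\tau(\eta))$, and cancellation against the boundary value $|f_{j,n}(\eta)|^2=(2\pi)^{-1/2}\sqrt{\Lap R_\tau(\eta)}/\Phi_\param(\xi_\tau(\eta))$, with the paper compressing your items (i)--(iv) into a single estimate $|f_{j,n}(r\eta)|^2=|f_{j,n}(\eta)|^2+O(\delta_n\log n)$. One small inaccuracy in item (iii): you invoke the $\xi\to+\infty$ asymptotic $\log\Phi_\param(\xi)\sim(1-\param)\xi^2/(2\param)$, but $\xi_{j,n}\le 0$ always, and as $\xi\to-\infty$ one has $\Phi_\param(\xi)\to1$ with $(\log\Phi_\param)'(\xi)=O(1)$, so the cited growth never enters; your $O(|\xi|)$ bound is merely conservative and the conclusion stands.
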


\begin{proof}
For each $\eta\in \T$, we compute the integral
$$\int_{1-M\delta_n}^{1+M\delta_n}|f_{j,n}(r\eta)|^2 e^{-nR_{{\param},\tau}(r\eta)} rdr.$$
Fix $\eta\in\T$ and consider the function $R_{{\param},\tau,\eta}(r):=R_{{\param},\tau}(r\eta)$.
For $r<r_{\tau}(\eta)$, using the expansion of $R_{{\param},\tau}$ in Lemma \ref{lem:ep}, we have
\begin{align*}
\int_{1-M\delta_n}^{r_{\tau}(\eta)}|f_{j,n}(r\eta)|^2 e^{-n R_{{\param},\tau,\eta}(r)} rdr
&= \int_{1-M\delta_n}^{r_\tau(\eta)} r|f_{j,n}(r\eta)|^2 e^{-2n \Lap R_\tau(\eta)(r-1)^2 + O(n\delta_n^3)} dr.
\end{align*}
Note that for all $r$ with $|r-1|<M\delta_n$ and all $j$ with $j/n\in [1-\delta_n,1)$
$$|f_{j,n}(r\eta)|^2 = |f_{j,n}(\eta)|^2 + O\left(\delta_n \,{\log n}\right).$$
Recall from the expansion \eqref{rtau} that
$$r_\tau(\eta)-1 = (1-\tau)\cdot (2\Lap R_\tau(\eta))^{-1} + O((1-\tau)^2).$$
Thus the above integral is approximated by
\begin{align}\label{nes1}
\int_{1-M\delta_n}^{r_{\tau}(\eta)}|f_{j,n}(r\eta)|^2 e^{-n R_{{\param},\tau,\eta}(r)} rdr
=\frac{|f_{j,n}(\eta)|^2}{\sqrt{4n\Lap R_{\tau}(\eta)}}\int_{-\infty}^{-\xi_{\tau}(\eta)} \!\!\!\!\! e^{-\frac{1}{2}x^2} dx + E_1,
\end{align}
where $E_1$ is an error term with $E_1 = O(n^{-1}\log^2 n)$ and
$$\xi_\tau (\eta)= \frac{\sqrt{n}(\tau-1)}{\sqrt{\Lap R_\tau(\eta)}}=\frac{j-n}{\sqrt{n}}\left(\frac{|\phi_\tau'|} {\sqrt{\Lap Q}}\right)\circ \phi_\tau^{-1}(\eta).$$
On the other hand, using Lemma \ref{lem:ep} again,  we obtain for $r>r_\tau(\eta)$
$$R_{{\param},\tau,\eta}(r)= 2{\param}\Lap R_\tau(\eta)\left(r-r^*_{\param}\right)^2 +
\frac{2({\param}-1)}{\param}\Lap R_\tau(\eta)(r_\tau(\eta)-1)^2
+  O((r-1)^3),$$
where $r^*_{\param}=(1-(1-{\param})r_\tau(\eta))/{\param}$. It follows from the expansion \eqref{rtau} that
$$R_{{\param},\tau,\eta}(r) = 2{\param}\Lap R_\tau(\eta)\left(r-r^*_{\param}\right)^2 + \frac{{\param}-1}{2{\param}}\cdot\frac{(1-\tau)^2}{\Lap R_\tau(\eta)} +  O((r-1)^3)$$
and $r^*_{\param}$ has the asymptotic expansion of the form
$$r^*_{\param} = 1- \frac{1-{\param}}{\param}\cdot \frac{1-\tau}{2\Lap R_\tau(\eta)}+O((1-\tau)^2).$$
Estimating the integral by Laplace's method, we thus obtain
\begin{align}\label{nes2}
\int_{r_\tau(\eta)}^{1+M\delta_n} \!\! |f_{j,n}(r\eta)|^2
&e^{-n R_{{\param},\tau,\eta}(r)} rdr
= \frac{|f_{j,n}(\eta)|^2}{\sqrt{4n{\param}\Lap R_\tau(\eta)}} e^{\frac{1-{\param}}{2{\param}}\xi_\tau(\eta)^2}\int_{-{\xi_\tau(\eta)}/\sqrt{\param}}^{\infty} e^{-\frac{x^2}{2}}dx + E_2
\end{align}
where $E_2$ is a new error term with $E_2 = O(n^{-1}\log^2 n)$. Combining \eqref{nes1} and \eqref{nes2}, we obtain
\begin{align*}
n^{\frac{1}{2}}\int_{D_{j,n}} |f_{j,n}|^2\, e^{-nR_{{\param},\tau}}\,dA = \sqrt{2\pi}\int_{\T} \frac{|f_{j,n}(\eta)|^2}{\sqrt{\Lap R_\tau(\eta)}} \Phi_{j,n}(\phi_\tau^{-1}(\eta))\,ds(\eta)+E_3 = 1+ E_3
\end{align*}
where $E_3 =O(n^{-1/2}\log^2 n)$ and $ds(\eta) = |d\eta|/2\pi$ is the normalized arclength measure.\end{proof}

We now conclude our proof of the approximate normalization property (P1) of our quasipolynomials.

\begin{lem} \label{ap1}
For any integer $j$ with $\tau = j/n\in [1-\delta_n,1),$ we have
\begin{equation}
\int_{\C} \chi_0^2|F_{j,n}|^2 e^{-nQ^{(\param)}} dA = 1+O(n^{-1/2}\log^2 n).
\end{equation}
\end{lem}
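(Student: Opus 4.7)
The strategy is to transport the integral through the isometry $\Lambda_{j,n}$, invoke Lemma \ref{lem:un} for the main contribution, and show that both the cutoff-induced error and the tails in the $f_{j,n}$-picture are super-polynomially small.

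Since $\chi_0\equiv 1$ on $O$, I split
\begin{equation*}
\int_{\C}\chi_0^2|F_{j,n}|^2 e^{-nQ^{(\param)}}\,dA = \int_O |F_{j,n}|^2 e^{-nQ^{(\param)}}\,dA + \int_{\supp\chi_0\setminus O}\chi_0^2|F_{j,n}|^2 e^{-nQ^{(\param)}}\,dA.
\end{equation*}
The identity $F_{j,n}=n^{1/4}\Lambda_{j,n}[f_{j,n}]$ together with the isometric property yields $\int_O |F_{j,n}|^2 e^{-nQ^{(\param)}}\,dA = n^{1/2}\int_{\D_e(\rho_0)}|f_{j,n}|^2 e^{-nR_{\param,\tau}}\,dA$. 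For $n$ large, $D_{j,n}\subset\D_e(\rho_0)$ since $\rho_0<1$ is fixed and $D_{j,n}$ is an $M\delta_n$-collar of $\T$, so I split once more and apply Lemma \ref{lem:un} on $D_{j,n}$ to produce the main term $1+O(n^{-1/2}\log^2 n)$. Two error pieces remain.

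For the $f_{j,n}$-tail over $\D_e(\rho_0)\setminus D_{j,n}$: the boundary datum $\log\sqrt{\Lap Q}-\log\Phi_{j,n}$ defining $\re\calH_{j,n}$ is bounded uniformly in the relevant $j$-range, because $\xi_{j,n}=O(\log n)$ takes values in $(-\infty,0]$ on $\d S_\tau$ and $\Phi_\param$ is bounded above and below by positive constants there (the exponential factor in \eqref{defphi} is countered by the gaussian decay of $1-\fii(\cdot/\sqrt{\param})$ as $\xi\to-\infty$). Hence $\calH_{j,n}$ is uniformly bounded on $\hat{\C}\setminus K$ by its Poisson representation, so $f_{j,n}$ is uniformly bounded on $\D_e(\rho_0)$. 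The quadratic lower bound of Lemma \ref{lem:ep} then gives $nR_{\param,\tau}(\zeta)\ge cM^2\log^2 n$ on the bounded part of $\D_e(\rho_0)\setminus D_{j,n}$, while the growth assumption \eqref{growth} on $Q$ ensures $R_{\param,\tau}$ is large enough at infinity to guarantee integrability of the Gaussian weight with exponentially small total mass. Together these yield a bound of $O(e^{-cM^2\log^2 n})$, which is $o(n^{-A})$ for any $A$ once $M$ is chosen large enough.

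For the transition-zone integral, a direct computation from the definition \eqref{qop}, using $\re\calQ_\tau=V_\tau-\tau\log|\phi_\tau|^2$ from \eqref{good} and $j=n\tau$, gives
\begin{equation*}
|F_{j,n}|^2 e^{-nQ^{(\param)}}=\sqrt{n/(2\pi)}\,|\phi_\tau'|\,e^{\re\calH_{j,n}}\,e^{-n(Q^{(\param)}-V_\tau)}.
\end{equation*}
On $\supp\chi_0\setminus O$ one has $|\phi_\tau(\zeta)|\le\rho_0<1$, so such $\zeta$ lie strictly inside $S_\tau\subset S$ and are bounded away from $\Gamma_\tau$ uniformly in $\tau\in[1-\delta_n,1]$. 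Since $Q^{(\param)}=Q$ on $S$ and $Q-V_\tau$ has a quadratic lower bound at $\Gamma_\tau$ by Lemma \ref{lem:Qe}, there is a constant $c_0>0$ with $(Q^{(\param)}-V_\tau)(\zeta)\ge c_0$ throughout this zone, giving an $O(\sqrt{n}\,e^{-c_0 n})$ bound. Summing the three contributions completes the proof. The main technical point to check carefully is the uniform-in-$j$ control of $\calH_{j,n}$, which in turn hinges on the boundedness of $\Phi_\param$ on $(-\infty,0]$.
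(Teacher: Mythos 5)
Your proof is correct and follows essentially the same route as the paper's: transport via the isometry $\Lambda_{j,n}$, invoke Lemma \ref{lem:un} for the main contribution on $D_{j,n}$, and kill the remaining integrals using the quadratic lower bound of Lemma \ref{lem:ep} near $\T$ plus the logarithmic growth at infinity. Your extra splitting into $O$ and the transition zone $\supp\chi_0\setminus O$ is a minor reorganization (the paper changes variables on all of $\C\setminus K$ at once), and your explicit check that $\Phi_\param$ is bounded above and below on $(-\infty,0]$ — hence that $\calH_{j,n}$ and $f_{j,n}$ are uniformly bounded — is a worthwhile verification of a point the paper asserts without comment.
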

\begin{proof}
By Lemma \ref{lem:un} we obtain
\begin{align*}
\int_{\C} \chi_0^2|F_{j,n}|^2 e^{-nQ^{(\param)}} dA = n^{\frac{1}{2}}\int_{\C\setminus D_{j,n}} (\chi_0 \circ \phi^{-1}_{\tau})^2|f_{j,n}|^2 e^{-nR_{{\param},\tau}} dA + 1+O(n^{-1/2}\log ^2 n).
\end{align*}
In order to estimate the first term on the right, we observe that the functions $f_{j,n}$ are uniformly bounded in $\D_e(\rho_0)$. By Lemma \ref{lem:ep}, there is an $\epsilon>0$ such that for all $\eta$ with $M\delta_n \le \dist(\eta,\T) \le \epsilon $,
$$R_{{\param},\tau}(\eta) \geq c \,\delta_n^2$$
for some constant $c$. Also, by the growth rates of $Q$ and $V_\tau$ near infinity, i.e., $$\liminf_{\zeta\to \infty}\, \frac {Q(\zeta)}{\log |\zeta|^2}>1,\quad \liminf_{\zeta\to \infty}\frac {V_\tau(\zeta)} {\log |\zeta|^2}=\tau,$$
there exists a constant $C>0$ such that
$$|\eta|\ge 1+\epsilon\qquad \Rightarrow\qquad R_{{\param},\tau}(\eta) \geq C \log |\eta|.$$
Summing up, we obtain
\begin{equation}\label{extint}
n^{\frac{1}{2}}\int_{\C\setminus D_{j,n}} (\chi_0 \circ \phi^{-1}_{\tau})^2|f_{j,n}|^2 e^{-nR_{{\param},\tau}} dA \leq O(n^{\frac{1}{2}}e^{-c'\log^2 n})\end{equation} for some constant $c'$, which completes the proof.
\end{proof}

\subsection{Approximate orthogonality}\label{sec:ao} Up to this point we have not assumed that $Q$ be radially symmetric.
However, in order to prove the following lemma, we shall use this assumption.

\begin{lem}\label{ap2}
Suppose that $Q$ is radially symmetric. Fix an integer $j$ with $\tau = j/n \in [1-\delta_n,1)$.  Let $p$ be a holomorphic polynomial of degree $\ell$ less than $j$.
Then $$\int_{\C} \chi_{0}\, p\, \overline{F_{j,n}} e^{-nQ^{(\param)}} dA = O(n^{1/4}e^{-c(\log n)^2}\|p\|_{nQ^{(\param)}}).$$
\end{lem}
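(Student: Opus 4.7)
The plan is to use the radial symmetry to identify $F_{j,n}$ explicitly as a monomial of degree $2j$ in $\zeta$; once this is done, the lemma reduces to an exact angular cancellation, so the stated error bound is satisfied trivially (in fact the integral is identically $0$).

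First I would compute every ingredient entering \eqref{qop}. When $Q$ is radial, the droplet in external potential $Q/\tau$ is a disk $S_\tau=\{|\zeta|\le\rho_\tau\}$, the conformal map is $\phi_\tau(\zeta)=\zeta/\rho_\tau$ (with constant derivative $1/\rho_\tau$), and $V_\tau(\zeta)=Q(\rho_\tau)+2\tau\log(|\zeta|/\rho_\tau)$, whose holomorphic companion is $\calQ_\tau(\zeta)=Q(\rho_\tau)+2\tau\log(\zeta/\rho_\tau)$. A priori this logarithm is multivalued, but because $n\tau=j\in\Z$ the corresponding exponential is single-valued,
\[e^{n\calQ_\tau(\zeta)/2}=e^{nQ(\rho_\tau)/2}(\zeta/\rho_\tau)^{j}.\]
Next, each of $\Lap Q$, $|\phi_\tau'|$ and $\Phi_{j,n}$ depends only on $|\zeta|$, so the prescribed boundary datum of $\re\calH_{j,n}$ on $\partial S_\tau$ is a constant $a$. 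On $U_\tau$ the only bounded harmonic function with constant boundary value $a$ is $a$ itself (maximum principle), and the normalization $\im\calH_{j,n}(\infty)=0$ then yields $\calH_{j,n}\equiv a$ on $U_\tau$; analytic continuation across the real-analytic circle $\partial S_\tau$ propagates this identity to all of $\hat\C\setminus K$. Substituting back into \eqref{qop} gives
\[F_{j,n}(\zeta)=C_{j,n}\,\zeta^{2j}\]
for a constant $C_{j,n}$, which extends as an honest monomial to all of $\C$.

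With $F_{j,n}$ identified, I would exploit the radial structure of $K$ and $O$ to choose the cutoff $\chi_0$ radial, $\chi_0(\zeta)=\chi_0(|\zeta|)$. Expanding $p(\zeta)=\sum_{k=0}^{\ell}a_k\zeta^{k}$ with $\ell<j$ and passing to polar coordinates $\zeta=re^{i\theta}$,
\[\int_{\C}\chi_0\,p\,\overline{F_{j,n}}\,e^{-nQ^{(\param)}}\,dA=\overline{C_{j,n}}\sum_{k=0}^{\ell}a_k\int_0^\infty\chi_0(r)\,r^{k+2j+1}e^{-nQ^{(\param)}(r)}\,dr\cdot\frac{1}{\pi}\int_0^{2\pi}e^{i(k-2j)\theta}\,d\theta.\]
Because $k\le\ell<j<2j$, the angular integral vanishes in every term, so the whole expression equals $0$ and the stated bound $O(n^{1/4}e^{-c(\log n)^2}\|p\|_{nQ^{(\param)}})$ holds trivially.

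The one genuinely nontrivial step in this plan is the constancy of $\calH_{j,n}$, and I expect this to be the main obstacle. It rests on two facts: that the boundary datum on $\partial S_\tau$ is literally a constant function in the radial case (since $\Lap Q$, $|\phi_\tau'|$ and $\Phi_{j,n}$ are all radial), and that the maximum principle for bounded harmonic functions on $U_\tau$ then forces $\calH_{j,n}$ itself to be constant. Once this is done, the content of the lemma is the elementary observation that $\zeta^{2j}$ is orthogonal to every polynomial of degree $<2j$ against any radial weight.
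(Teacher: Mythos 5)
Your approach is essentially the paper's: both exploit radial symmetry to force the integral to vanish via angular cancellation. The paper works with the pullback $f_{j,n}$ under $\Lambda_{j,n}$ and uses the mean-value theorem ($\int_\T h\,ds = h(\infty) = 0$), then covers the transitional region where $\chi_0$ is not identically $1$ by Cauchy--Schwarz against \eqref{extint}; you instead identify $F_{j,n}$ as an outright monomial and cancel the angular integral directly. If $\chi_0$ is chosen radial (as it can be here), your version gives the integral exactly equal to $0$, slightly sharper than the paper's stated bound.

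There is, however, a concrete computational error in your identification of $\calQ_\tau$. You wrote $\calQ_\tau(\zeta)=Q(\rho_\tau)+2\tau\log(\zeta/\rho_\tau)$, but $\calQ_\tau$ is defined in the paper to be the \emph{bounded} holomorphic function with $\re\calQ_\tau=Q$ on $\Gamma_\tau$; your expression is unbounded, so it cannot be $\calQ_\tau$. The identity \eqref{good}, $V_\tau=\re\calQ_\tau+\tau\log|\phi_\tau|^2$, shows that the logarithmic growth of $V_\tau$ is carried entirely by the $\tau\log|\phi_\tau|^2$ term; in the radial case $\re\calQ_\tau\equiv Q(\rho_\tau)$ on $\Gamma_\tau$, and since $\calQ_\tau$ is bounded with $\im\calQ_\tau(\infty)=0$, we get $\calQ_\tau\equiv Q(\rho_\tau)$ identically. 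Consequently $e^{n\calQ_\tau/2}$ is a constant, not $(\zeta/\rho_\tau)^j$, and \eqref{qop} gives
\[
F_{j,n}(\zeta)=\Big(\tfrac{n}{2\pi}\Big)^{1/4}\rho_\tau^{-1/2}\,(\zeta/\rho_\tau)^{j}\,e^{nQ(\rho_\tau)/2}\,e^{\calH_{j,n}/2}=C_{j,n}\,\zeta^{j},
\]
a monomial of degree $j$ (consistent with the paper calling $F_{j,n}$ a quasipolynomial of degree $j$), not $2j$. Your conclusion is unaffected --- $\int_0^{2\pi}e^{i(k-j)\theta}\,d\theta=0$ for all $k\le\ell<j$ --- but the statement that $F_{j,n}$ has degree $2j$ contradicts the paper and would have been a warning sign: had $F_{j,n}$ been degree $2j$, it would have failed to serve as an approximate orthonormal polynomial of degree $j$ elsewhere in the paper. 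Your argument that $\calH_{j,n}$ is constant (constant boundary data on $\Gamma_\tau$, maximum principle on $U_\tau$, analytic continuation) is correct and is the genuinely nontrivial ingredient here.
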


\begin{proof}
Write $q = \Lambda_{j,n}^{-1}[p]$. Then $q$ is holomorphic on $\D_e(\rho_0)$ and satisfies $q(\eta)= O(|\eta|^{\ell-j})$ as $\eta\to \infty$.
We first show that
\begin{equation}\label{ordn}
\int_{D_{j,n}}q \,\overline{f_{j,n}} \, e^{-nR_{{\param},\tau}}dA = 0.\end{equation}
Define a function $h$ by $h = q/f_{j,n}$. Then $h$ is holomorphic in $\D_e(\rho_0)$ and vanishes at infinity since $f_{j,n}$ does not vanish at infinity. Now we write the above integral as
$$\int_{1-M\delta_n}^{1+M\delta_n}\int_{\T} 2h \,|f_{j,n}|^2\,e^{-nR_{{\param},\tau}} r ds dr.$$
Since $|f_{j,n}|^2\,e^{-nR_{{\param},\tau}}$ is radially symmetric and $\int_{\T} h \,ds = h(\infty) = 0$ by mean-value theorem, the above integral vanishes and we obtain \eqref{ordn}.

For the remaining part, using Cauchy-Schwarz inequality, we have
\begin{align*}
\left|\int_{\C\setminus\phi^{-1}_\tau(D_{j,n})} \chi_0 \, p \, \overline{F_{j,n}} e^{-nQ^{(\param)}} dA \right| \leq \|p\|_{nQ^{(\param)}} \left(\int_{\C\setminus\phi^{-1}_{\tau}(D_{j,n})}|\chi_0|^2 |F_{j,n}|^2 e^{-nQ^{(\param)}} dA\right)^{1/2},
\end{align*}
which gives the bound $O(n^{1/4}e^{-c(\log n)^2}\|p\|_{nQ^{(\param)}})$ for some $c$ by \eqref{extint}.
\end{proof}

\begin{rmk*} Lemma \ref{ap2} says that the quasipolynomials $F_{j,n}$ satisfy the approximate orthogonality property (P2)
with a better error bound. In the sequel we will not use radial symmetry of $Q$, but merely that the property (P2) holds.
\end{rmk*}

\subsection{Pointwise estimates}
Going back to the general case, we assume that $F_{j,n}$ satisfies the properties (P1) and (P2). (But we do not necessarily assume that $Q$ be radially symmetric.)

\begin{lem} Let $p_{j,n}$ the $j$:th orthonormal polynomial in $L^2(\C, e^{-nQ^{(\param)}}dA)$. Then for all $j$ with $j/n\in [1-\delta_n,1)$
$$\|p_{j,n} - F_{j,n}\chi_0\|_{nQ^{(\param)}} = O(n^{-1/2}\log^2 n).$$
\end{lem}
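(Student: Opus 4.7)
Following the scheme of Hedenmalm--Wennman \cite{HW17}, I would approximate $F_{j,n}\chi_0$ in $L^2(e^{-nQ^{(\param)}})$-norm by a genuine polynomial $P_{j,n}$ of degree at most $j$ via a $\bar\partial$-surgery, and then identify $P_{j,n}$ with $p_{j,n}$ (up to a phase convention) using the properties (P1) and (P2).

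For the $\bar\partial$-surgery, let $g:=\bar\partial(F_{j,n}\chi_0)=F_{j,n}\bar\partial\chi_0$. This is supported in a compact annular region which in $\phi_\tau$-coordinates lies strictly inside the unit disc, hence well inside the droplet $S_\tau$; on this region $R_{\param,\tau}$ is bounded below by a positive constant by Lemma~\ref{lem:ep}, and therefore $\|g\|_{nQ^{(\param)}}^2=O(e^{-cn})$. A Hörmander $L^2$-estimate with a suitably strictly subharmonic perturbation of $nQ^{(\param)}$ (e.g., $\phi=nQ^{(\param)}+(2j+2)\log(1+|\zeta|^2)$) produces a solution $u$ of $\bar\partial u=g$ satisfying $\|u\|_{nQ^{(\param)}}=O(e^{-c'n})$ together with enough pointwise decay at infinity to force the entire function $P_{j,n}:=F_{j,n}\chi_0-u$ to grow at most like $|\zeta|^j$, and hence to be a polynomial of degree at most $j$.

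Expanding $P_{j,n}=\sum_{k=0}^{j}c_kp_{k,n}$ in the orthonormal basis, observe that (P2) is equivalent by duality to $\|\pi_{<j}(F_{j,n}\chi_0)\|=O(n^{-1/2}\log^2n)$, where $\pi_{<j}$ is the orthogonal projection onto polynomials of degree $<j$. Together with $\|u\|=O(e^{-c'n})$ this gives $\sum_{k<j}|c_k|^2=O(n^{-1}\log^4n)$. From (P1) and Pythagoras,
$$\|P_{j,n}\|^2=\|F_{j,n}\chi_0\|^2+O(e^{-c'n})=1+O(n^{-1/2}\log^2n),$$
whence $|c_j|^2=1+O(n^{-1/2}\log^2n)$. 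The leading coefficients of $F_{j,n}$ (inspection of \eqref{qop}) and of $p_{j,n}$ are both real and positive, and since $u$ is exponentially smaller than $F_{j,n}$ at infinity the leading coefficient of $P_{j,n}$ agrees with that of $F_{j,n}$ up to an exponentially small correction; hence $c_j$ is real and positive, giving $c_j=1+O(n^{-1/2}\log^2n)$. Finally,
$$\|p_{j,n}-F_{j,n}\chi_0\|^2\le 2\|p_{j,n}-P_{j,n}\|^2+2\|u\|^2=2(1-c_j)^2+2\sum_{k<j}|c_k|^2+O(e^{-2c'n})=O(n^{-1}\log^4n),$$
which is the desired bound.

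The main obstacle lies in the $\bar\partial$-step, which must simultaneously yield an exponentially small $L^2(e^{-nQ^{(\param)}})$-norm for $u$ and enough pointwise decay of $u$ at infinity to ensure that $P_{j,n}=F_{j,n}\chi_0-u$ is a polynomial of degree $\le j$. The weight $nQ^{(\param)}$ is only $C^{1,1}$-smooth and not globally strictly subharmonic, so one perturbs it with a logarithmic term; the growth hypothesis~\eqref{growth} makes the perturbed weight admissible and forces polynomial degree, while the compact support of $g$ (well inside $S_\tau$, uniformly away from the boundary $\Gamma_\tau$) ensures that the logarithmic perturbation only multiplies the right-hand side of Hörmander's inequality by a bounded factor, preserving the exponential decay. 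Everything downstream is a straightforward $L^2$-comparison via (P1) and (P2).
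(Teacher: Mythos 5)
Your proposal is correct and follows essentially the same route as the paper: a $\bar\partial$-correction (via Hörmander) replaces $F_{j,n}\chi_0$ by a genuine polynomial of degree $j$ up to an exponentially small error, and then (P1)--(P2) are used to compare it with $p_{j,n}$. The only cosmetic difference is that you track the coordinates $c_k$ in the orthonormal basis explicitly, whereas the paper works with the orthogonal projection $\pi_{j,n}$ onto lower-degree polynomials and simply normalizes the scalar $c_{j,n}$ to be positive rather than deriving this from the leading-coefficient conventions.
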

\begin{proof}
Let $u_0$ be the $L^2(e^{-nQ^{(\param)}})$ norm-minimal solution to the $\dbar$-problem
$$\dbar u = F_{j,n}\dbar \chi_0$$ which satisfies $u_0(\zeta)=O(|\zeta|^{j-1})$ as $\zeta\to\infty$.
By a standard H\"{o}rmander estimate (cf. \cite[Section 4.2]{Hor}), we have
\begin{equation}\label{ine1}
\|u_0\|_{nQ^{(\param)}}^2 \leq Cn^{-1}\int_{\C} |\dbar\chi_0 F_{j,n}|^2   e^{-nQ^{(\param)}} = O(e^{-cn}).
\end{equation}
Here, the $O(e^{-cn})$ bound is obtained from the fact that $\bp \chi_0$ vanishes on a neighborhood of $\Gamma_\tau$ and $|F_{j,n}|^2 e^{-nQ^{(\param)}}$ has an exponential decay (in $n$) on the support of $\dbar \chi_0$.

Define a function $\tilde{p}_{j,n}$ by
$$\tilde{p}_{j,n} = F_{j,n}\, \chi_0 - u_0.$$
Then $\tilde{p}_{j,n}$ is entire and has the polynomial growth $\tilde{p}_{j,n}(\zeta) = a |\zeta|^{j}+O(|\zeta|^{j-1})$ with some nonzero $a$ near infinity. It follows that $\tilde{p}_{j,n}$ is a polynomial of exact degree $j$. By \eqref{ine1}
$$\|\tilde{p}_{j,n}-F_{j,n}\,\chi_0\|_{nQ^{(\param)}} \leq O(e^{-cn}).$$
We thus obtain from (P1) and (P2) that
\begin{equation}\label{epp1}
\|\tilde{p}_{j,n}\|_{nQ^{(\param)}} = 1+O(n^{-1/2}\log^2 n)\end{equation}
and
\begin{equation}\label{epp2}
\int_{\C}\tilde{p}_{j,n} \,\bar{p}\, e^{-nQ^{(\param)}} dA = O(n^{-1/2} \log^2 n \,\|p\|_{nQ^{(\param)}})\end{equation}
for all polynomials $p$ of degree $\leq j-1$.

Now we consider the orthogonal projection $\pi_{j,n}$ from $L^2(e^{-nQ^{(\param)}})$ onto the space consisting of all  polynomials in $L^2(e^{-nQ^{(\param)}})$ of degree $\leq j-1$.
Then the function
$$p^{\star}_{j,n}:=\tilde{p}_{j,n} - \pi_{j,n}[\tilde{p}_{j,n}]$$ is a polynomial degree $j$ which satisfies by \eqref{epp2}
$$\|p^{\star}_{j,n}-\tilde{p}_{j,n}\|_{nQ^{(\param)}}=O(n^{-1/2}\log^2 n).$$ Since $p^{\star}_{j,n}$ is orthogonal to all polynomials of degree at most $j-1$, it can be written as
$p^{\star}_{j,n}=c_{j,n}\,p_{j,n}$ for some constant $c_{j,n}$. Since $\|p^{\star}_{j,n}\|_{nQ^{(\param)}} = 1+O(n^{-1/2}\log^2 n)$ by \eqref{epp1} and \eqref{epp2}, we have
$|c_{j,n}|= 1+O(n^{-1/2}\log^2 n)$. We can assume that $c_{j,n}$ is a positive real number. Hence we get $$\|p_{j,n}-F_{j,n}\chi_0\|_{nQ^{(\param)}} = O(n^{-1/2}\log^2 n).$$
\end{proof}

\begin{lem} \label{lem:pwq}
Suppose that $$\dist(\zeta, \Gamma_\tau )\leq n^{-1/2}(\log \log n)^{1/2}.$$
Then we have the approximation for all $j$ with $j/n\in [1-\delta_n,1)$
$$p_{j,n}(\zeta) = F_{j,n}(\zeta)(1+O(n^{-\beta})),$$
where $\beta>0$ and the $O$-constant is uniform in $\zeta$.
\end{lem}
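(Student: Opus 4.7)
The plan is to leverage the preceding lemma's $L^2$-bound $\|p_{j,n} - F_{j,n}\chi_0\|_{nQ^{(\param)}} = O(n^{-1/2}\log^2 n)$ and convert it to a pointwise statement using the Bergman-type subharmonicity estimate in Lemma \ref{propp}, then compare against a pointwise lower bound on $|F_{j,n}(\zeta)|$ in a neighborhood of $\Gamma_\tau$.

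First I would verify a location statement: for $\tau\in[1-\delta_n,1)$, Lemma \ref{movin} places $\Gamma_\tau$ at distance $\gtrsim (1-\tau)\gtrsim\delta_n=n^{-1/2}\log n$ inside $\Gamma$. Since the hypothesis gives $\dist(\zeta,\Gamma_\tau)\le n^{-1/2}(\log\log n)^{1/2}\ll \delta_n$, the point $\zeta$ lies strictly inside the droplet $S$ (so $Q^{(\param)}(\zeta)=Q(\zeta)$), and the disk $D(\zeta,1/\sqrt{n})$ is contained in the open set $\{\chi_0=1\}$ uniformly in $\tau$. On this disk $p_{j,n}-F_{j,n}\chi_0 = p_{j,n}-F_{j,n}$ is holomorphic, so Lemma \ref{propp} applied to the weighted holomorphic function $(p_{j,n}-F_{j,n})e^{-nQ^{(\param)}/2}$ gives
$$|(p_{j,n}-F_{j,n})(\zeta)|^2 e^{-nQ^{(\param)}(\zeta)}\le Cn\|p_{j,n}-F_{j,n}\chi_0\|_{nQ^{(\param)}}^2 = O(\log^4 n),$$
i.e. $|(p_{j,n}-F_{j,n})(\zeta)|\,e^{-nQ^{(\param)}(\zeta)/2}=O(\log^2 n)$.

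Next I would establish the lower bound on $|F_{j,n}|$. From \eqref{qop} and the identity $V_\tau=\re\calQ_\tau+\tau\log|\phi_\tau|^2$, using $j=n\tau$, I would write
$$|F_{j,n}(\zeta)|^2 e^{-nQ^{(\param)}(\zeta)}=\sqrt{n/(2\pi)}\,|\phi_\tau'(\zeta)|\,e^{-n(Q^{(\param)}-V_\tau)(\zeta)}\,e^{\re\calH_{j,n}(\zeta)}.$$
Since $\zeta\in S$ and $\dist(\zeta,\Gamma_\tau)\le n^{-1/2}(\log\log n)^{1/2}$, Lemma \ref{lem:Qe} applied at the nearest point on $\Gamma_\tau$ yields $Q(\zeta)-V_\tau(\zeta)=O(n^{-1}\log\log n)$, so $n(Q^{(\param)}-V_\tau)(\zeta)=O(\log\log n)$. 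The remaining factors $|\phi_\tau'|$ and $e^{\re\calH_{j,n}}$ are uniformly bounded above and below on a neighborhood of $\Gamma_\tau$ (for $\tau$ in our range, by compactness and the analyticity of the data). Therefore
$$|F_{j,n}(\zeta)|\,e^{-nQ^{(\param)}(\zeta)/2}\gtrsim n^{1/4}(\log n)^{-c}$$
for some constant $c>0$. Taking the ratio of the two estimates gives $|p_{j,n}(\zeta)-F_{j,n}(\zeta)|/|F_{j,n}(\zeta)|=O(n^{-1/4}(\log n)^{2+c})=O(n^{-\beta})$ for any $\beta<1/4$, which is the desired relative estimate.

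The main obstacle is the lower bound in the third paragraph: the whole argument hinges on the $O(\log\log n)$ gain in the exponent $n(Q^{(\param)}-V_\tau)(\zeta)$, which is sharp enough to be absorbed into a polynomial factor of $\log n$. Had we used only the crude upper bound $\dist(\zeta,\Gamma)=O(\delta_n)$, this exponent would have been $O(\log^2 n)$ and the lower bound on $|F_{j,n}|$ would have collapsed. The choice of the scale $n^{-1/2}(\log\log n)^{1/2}$ in the hypothesis is precisely calibrated so that $\zeta$ stays inside $S$ (where $Q^{(\param)}=Q$) and the quadratic expansion of $Q-V_\tau$ about $\Gamma_\tau$ produces a tame exponent; the rest of the proof is a routine combination of standard tools.
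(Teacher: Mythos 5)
Your strategy is close in spirit to the paper's (which applies Lemma~\ref{l6} to the quasipolynomial $p_{j,n}-F_{j,n}$ and then lower-bounds $|F_{j,n}|e^{-nV_\tau/2}\sim n^{1/4}$), but there is a genuine error in your location step, and it propagates.

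You claim that for $\tau\in[1-\delta_n,1)$, Lemma~\ref{movin} places $\Gamma_\tau$ at distance $\gtrsim(1-\tau)\gtrsim\delta_n$ inside $\Gamma$. This inverts the inequality: since $\tau\ge 1-\delta_n$, we have $1-\tau\le\delta_n$, and Lemma~\ref{movin} gives $\dist(\Gamma_\tau,\Gamma)\asymp(1-\tau)$, so $\dist(\Gamma_\tau,\Gamma)\lesssim\delta_n$ --- and in fact can be as small as $O(1/n)$ for $j=n-1$. Consequently $\zeta$ is \emph{not} guaranteed to lie inside $S$: when $\Gamma_\tau$ is very near $\Gamma$, a point at distance $n^{-1/2}(\log\log n)^{1/2}\gg 1/n$ from $\Gamma_\tau$ may well be exterior to $S$. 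This breaks two steps of your argument. First, Lemma~\ref{propp} is stated only for $\zeta\in S$ (a modest extension of its proof would cover a thin neighborhood, but that is not what you invoked). Second, and more seriously, the inference from $Q-V_\tau=O(n^{-1}\log\log n)$ to $n(Q^{(\param)}-V_\tau)(\zeta)=O(\log\log n)$ silently uses $Q^{(\param)}(\zeta)=Q(\zeta)$, which fails whenever $\zeta\notin S$; there $Q^{(\param)}-V_\tau=\param(Q-V_\tau)+(1-\param)(\check{Q}-V_\tau)$ and the second term requires a separate estimate.

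The gap is repairable: for $\zeta$ exterior to $S$ one has $\dist(\zeta,\Gamma)\le\dist(\zeta,\Gamma_\tau)$ (the segment from $\zeta$ to the nearest point of $\Gamma_\tau$ must cross $\Gamma$), so $\check{Q}-V_\tau=V_1-V_\tau=(V_1-Q)+(Q-V_\tau)=O(n^{-1}\log\log n)$ as well, and the desired conclusion $n(Q^{(\param)}-V_\tau)(\zeta)=O(\log\log n)$ survives on both sides. Note, however, how the paper sidesteps the whole dichotomy: Lemma~\ref{l6} produces the pointwise bound $|p_{j,n}-F_{j,n}|\lesssim\sqrt{n}\,\|p_{j,n}-F_{j,n}\chi_0\|_{nQ^{(\param)}}\,e^{n\check{Q}_\tau/2}$ with $\check{Q}_\tau$ (not $Q$ or $Q^{(\param)}$) in the exponent, and $\check{Q}_\tau-V_\tau\le c\,\dist(\zeta,\Gamma_\tau)^2$ holds trivially on the outer side of $\Gamma_\tau$ (where $\check{Q}_\tau=V_\tau$) and by Lemma~\ref{lem:Qe} on the inner side. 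This is precisely why the paper phrases the estimate in terms of $\check{Q}_\tau$ rather than $Q$.
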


\begin{proof}
For $\zeta$ with $\dist(\zeta,\Gamma_\tau) \leq n^{-1/2} (\log\log n)^{1/2}$, we have
$$(\check{Q}_\tau - V_\tau)(\zeta) \leq c \dist(\zeta, \Gamma_\tau)^2 \leq c\, n^{-1} \log\log n.$$
The estimate in Lemma \ref{l6} gives that there exists a constant $C>0$ such that
\begin{align*}
|p_{j,n} - F_{j,n}|\leq C\sqrt{n}\, \|p_{j,n}-F_{j,n}\chi_0\|_{nQ^{(\param)}}\,e^{n\check{Q}_\tau/2},
\end{align*}
which implies there exists a constant $c'>0$ such that
$$|p_{j,n}(\z)-F_{j,n}(\z)| = O((\log n)^2 e^{n\check{Q}_\tau(\z)/2}) = O((\log n)^{c'} e^{n V_\tau(\z)/2}).$$ Since $$|F_{j,n}| = n^{1/4}|\sqrt{\phi_\tau'}|\,e^{n V_\tau/2} e^{\re\calH_{j,n}/2}$$ and $\re \calH_{j,n}$ is bounded,
the lemma is proved.
\end{proof}

\section{Error function approximation}\label{sec:err}
In this section, we prove Theorem \ref{thm:ml}. We assume that conditions (P1) and (P2) of Section \ref{apqq} are satisfied. (E.g.,
$Q$ is radially symmetric.)

Take a point $p$ on the outer boundary $\Gamma$ of $S$ and consider points $\zeta$ in the disc $|\zeta-p|\le M/\sqrt{n}$ for some
large constant $M$. As we observed in Section \ref{prel}, the 1-point function rescaled about $p$ can be written as
\begin{equation}\label{r2sum}
R_n(z) =R_n^\sharp(z)+o(1),\qquad R_n^\sharp(z)= \frac{e^{-nQ^{(\param)}(\zeta)}}{n\Lap Q(p)}\sum_{j=m_n}^{n-1}|p_{j,n}(\zeta)|^2
\end{equation}
where $m_n = n-n\delta_n$ and $\zeta = p+\rmn_1(p)\,z/\sqrt{n\Lap Q(p)}$.  To obtain the asymptotics of $R_n^\sharp(z)$, we
shall apply the quasipolynomial approximation in Section \ref{apqq}.

Consider a point $\zeta$ with $\dist(\zeta,U_\tau)\le n^{-1/2}(\log\log n)^{1/2}$. By Lemma \ref{lem:pwq}, we obtain that for all $j$ with $n-n\delta_n \leq j \leq n-1$
\begin{align*} |p_{j,n}(\zeta)|^2 e^{-nQ^{(\param)}(\zeta)}
&= |F_{j,n}(\zeta)|^2 e^{-nQ^{(\param)}(\zeta)}(1+O(n^{-\beta})),\\
&= \left(\frac{n}{2\pi}\right)^{1/2}|\phi_\tau'| e^{-n(Q^{(\param)} - V_\tau)} e^{\re \calH_{j,n}}(1+O(n^{-\beta})).
\end{align*}
Here the error term is uniform for all $j$ with $n- n\delta_n \leq j \leq n-1$. Recall that $\calH_{j,n}$ is a bounded holomorphic function on the relevant set of $\zeta$ satisfying
 $$\re\calH_{j,n}(\zeta) = \log \sqrt{\Lap Q(\zeta)} - \log \Phi_{\param}(\xi_{j,n}(\zeta)),\quad \zeta\in \pa S_\tau$$ where $\Phi_{\param}$ is the function defined in \eqref{defphi} and 
$\xi_{j,n}(\zeta) = (j-n)|\phi_\tau'(\zeta)|/\sqrt{n\Lap Q(\zeta)}$.
\begin{lem}\label{lem:2r}
We have that
$$\lim_{n\to \infty} R_n^\sharp(z) = S(2\re z) \, e^{2(1-{\param})(\re z)_+^2},\quad z\in \C.$$
Here the convergence is bounded on $\C$ and locally uniform in $\C\setminus i\R$.
\end{lem}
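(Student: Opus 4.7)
The plan is to substitute the pointwise quasipolynomial approximation from Lemma~\ref{lem:pwq} into the sum defining $R_n^\sharp(z)$ and recognize the resulting finite sum as a Riemann sum for the integral defining $S$. Since $\zeta=p+\normal_1(p)\,z/\sqrt{n\Lap Q(p)}$ sits within $O(n^{-1/2})$ of $\Gamma$, and by Lemma~\ref{movin} the curve $\Gamma_\tau$ lies within $O(\delta_n)$ of $\Gamma$ for all $\tau=j/n\in[1-\delta_n,1)$, Lemma~\ref{lem:pwq} applies uniformly in $j$ and yields
\begin{align*}
|p_{j,n}(\zeta)|^2 e^{-nQ^{(\param)}(\zeta)} = \sqrt{\tfrac{n}{2\pi}}\,|\phi_\tau'(\zeta)|\,\frac{\sqrt{\Lap Q(\zeta)}}{\Phi_\param(\xi_{j,n}(\zeta))}\,e^{-n(Q^{(\param)}-V_\tau)(\zeta)}\,(1+O(n^{-\beta})).
\end{align*}
Setting $\xi_j:=\xi_{j,n}(p)=(j-n)|\phi_1'(p)|/\sqrt{n\Lap Q(p)}$, the mesh $\Delta\xi=|\phi_1'(p)|/\sqrt{n\Lap Q(p)}\to 0$ and $\xi_{m_n}\to-\infty$; after collecting the smooth prefactors (which converge to $1/\sqrt{2\pi}$) and using $\phi_\tau'(\zeta)=\phi_1'(p)+O(\delta_n)$, $\Lap Q(\zeta)=\Lap Q(p)+O(n^{-1/2})$, the sum becomes
\begin{align*}
R_n^\sharp(z) = \frac{1+o(1)}{\sqrt{2\pi}}\sum_{j=m_n}^{n-1}\frac{e^{-n(Q^{(\param)}-V_\tau)(\zeta)}}{\Phi_\param(\xi_j)}\,\Delta\xi.
\end{align*}

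The main technical step is the Taylor expansion of $n(Q^{(\param)}-V_\tau)(\zeta)$. By Lemma~\ref{movin} the signed normal distance $d_\tau$ of $\zeta$ from $\Gamma_\tau$ along $\normal_1(p)$ equals
\begin{align*}
d_\tau = \frac{\re z}{\sqrt{n\Lap Q(p)}} + (1-\tau)\,\frac{|\phi_1'(p)|}{2\Lap Q(p)} + O((1-\tau)^2) = \frac{2\re z-\xi_j}{2\sqrt{n\Lap Q(p)}} + O(n^{-1}\log^2 n).
\end{align*}
If $\re z<0$ then $\zeta\in S$, hence $Q^{(\param)}=Q$ near $\zeta$, and Lemma~\ref{lem:Qe} applied to $\Gamma_\tau$ gives $n(Q^{(\param)}-V_\tau)(\zeta)=\tfrac12(2\re z-\xi_j)^2+o(1)$. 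If $\re z>0$ then $\zeta\notin S$, so $Q^{(\param)}=\param Q+(1-\param)V_1=Q-(1-\param)(Q-V_1)$ locally, and applying Lemma~\ref{lem:Qe} to both $\Gamma_\tau$ and $\Gamma=\Gamma_1$ yields the unified expansion
\begin{align*}
n(Q^{(\param)}-V_\tau)(\zeta) = \tfrac12(2\re z-\xi_j)^2 - 2(1-\param)(\re z)_+^2 + o(1),
\end{align*}
where the $o(1)$ is uniform in $j$ in the relevant range and on compact subsets of $\C\setminus i\R$.

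Pulling out the factor $\exp\{2(1-\param)(\re z)_+^2\}$ and noting that $\Phi_\param$ is bounded below by a positive constant on $\R$, while the integrand decays Gaussianly in $\xi$ uniformly in $n$, dominated convergence yields
\begin{align*}
R_n^\sharp(z)\longrightarrow e^{2(1-\param)(\re z)_+^2}\,\frac{1}{\sqrt{2\pi}}\int_{-\infty}^{0}\frac{e^{-(2\re z-\xi)^2/2}}{\Phi_\param(\xi)}\,d\xi = S(2\re z)\,e^{2(1-\param)(\re z)_+^2}.
\end{align*}
The convergence is locally uniform on $\C\setminus i\R$, where the case distinction between $\zeta\in S$ and $\zeta\notin S$ is locally constant, and is bounded on all of $\C$ by the apriori estimate \eqref{taye2}. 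The main obstacle is precisely the uniformity of the case-split Taylor expansion across the critical line $\re z=0$: the expansion is only piecewise smooth there (with a jump in the second derivative of $Q^{(\param)}-V_\tau$ across $\Gamma$), which prevents locally uniform convergence on all of $\C$; however \eqref{taye2} provides the domination needed for bounded convergence and $L^1_{\mathrm{loc}}$ convergence throughout.
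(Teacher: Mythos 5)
Your proof is correct and follows essentially the same route as the paper's: substitute the quasipolynomial approximation from Lemma~\ref{lem:pwq}, Taylor-expand $n(Q^{(\param)}-V_\tau)(\zeta)$ via Lemma~\ref{lem:Qe} and Lemma~\ref{movin} with a case split on the sign of $\re z$, and recognize the resulting sum as a Riemann sum for $S(2\re z)$. One small imprecision: $\Phi_\param$ is not bounded below on all of $\R$ when $\param>1$ (it decays as $t\to+\infty$), but since the sum/integral involves only $\xi\le 0$, where $\Phi_\param$ is indeed bounded below by a positive constant, the domination argument goes through unchanged.
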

\begin{proof}
We first observe that
\begin{equation*}
R_n^\sharp(z) = \frac{1}{\sqrt{2\pi}}\sum_{k=1}^{n\delta_n} \frac{|\phi_\tau'(\zeta)|}{\sqrt{n}\Lap Q(p)} e^{-n(Q^{(\param)} - V_\tau)(\zeta)} e^{\re \calH_{j,n}(\zeta)} (1+ O(n^{-\beta})),
\end{equation*}
where $k = n -j$ and $\zeta = p+ \rmn_1(p)\, z/\sqrt{n\Lap Q(p)}$.
For any compact subset $\calD\subset \C$, we have
$$|\phi_\tau'(\zeta)|\ e^{\re\calH_{j,n}(\zeta)} = \frac{|\phi_\tau'(p)|\ \sqrt{\Lap Q(p)}}{\Phi_{\param}(\xi_{j,n}(p))}+ O(n^{-1/2}\log n),\quad n\to\infty,$$
where the error term is uniform for $z\in \calD$ and for $j$ with $n-n\delta_n \leq j \leq n-1$. To obtain the asymptotic expansion of $Q-V_\tau$, let $p_\tau$ denote the closest point to $p$ where the line $p+ \rmn_1(p)\,\R$ intersects the boundary $\pa S_\tau$. Then in view of Lemma \ref{movin} the following asymptotic formulas hold: as $\tau \to 1$,
$$p_\tau = p + (\tau -1) \,\rmn_1(p) \ \frac{|\phi_\tau'(p)|}{2\Lap Q(p)} + O((1-\tau)^2), \quad \rmn_\tau(p_\tau) = \rmn_1(p)+O(1-\tau).$$ The Taylor series expansion about $p_\tau$ in Lemma \ref{lem:Qe} gives that for $\tau \in [1-\delta_n,1)$
\begin{align*}
(Q-V_\tau)(\zeta)& = (Q-V_\tau) \Big(p_\tau +  \rmn_\tau(p_\tau) \frac{k}{n} \frac{|\phi_\tau'(p)|}{2\Lap Q(p)}  + \frac{\rmn_\tau(p_\tau)\,z}{\sqrt{n\Lap Q(p)}} +O(n^{-1}\log^2 n) \Big) \\
&= 2\Lap Q(p_\tau) \left(\frac{k}{n}\frac{|\phi_\tau'(p)|}{2\Lap Q(p)} + \frac{\re z}{\sqrt{n\Lap Q(p)}}\right)^2 + O(n^{-3/2}\log^3 n),
\end{align*}
where the error is uniform for $z \in \calD$ and for $\tau \in [1-\delta_n,1)$. We also note that
$$n(Q-V_1)(\zeta) = 2(\re z)^2 + O(n^{-1/2}),\quad z\in \calD.$$

Take a compact subset $\calD_1$ of $\LL$. Then for sufficiently large $n$ the points $$\zeta = p+ \rmn_1(p)\, z/\sqrt{n\Lap Q(p)},\ z\in \calD_1 $$ are contained in the droplet $S$. Thus for $z\in \calD_1$, we have
$$e^{-n(Q^{(\param)} - V_\tau)(\zeta)} = e^{-n(Q-V_\tau)(\zeta)} = e^{-\left(-\xi_{j,n}(p) + 2\re z\right)^2/2}(1+o(1)).$$
On the other hand, for a compact subset $\calD_2$ of $\C\setminus \overline{\LL}$, it holds that for $z \in \calD_2$
\begin{align*}
e^{-n(Q^{(\param)} - V_\tau)(\zeta)}&= e^{-n(Q-V_\tau)(\zeta) + n(1-{\param})(Q-V_1)(\zeta)}\\
& = e^{-(-\xi_{j,n}(p)+2\re z)^2/2} e^{2(1-{\param})(\re z)^2}(1+o(1)).
\end{align*}

Combining the above asymptotics, we obtain
\begin{equation*}
R_n^\sharp(z)= \frac{1}{\sqrt{2\pi}}\frac{|\phi_\tau'(p)|}{\sqrt{n\Lap Q(p)}}\sum_{k=1}^{\sqrt{n}\log n} \frac{ e^{-\frac{1}{2}(\xi_{j,n}(p)-2\re z)^2} e^{2(1-{\param})(\re z)_+^2 }} {\Phi_{\param}(\xi_{j,n}(p))}
(1+o(1)).
\end{equation*}
Recall that $\xi_{j,n}(p) = -{k}|\phi_\tau'(p)|/{\sqrt{n\Lap Q(p)}}$. The sum can be considered as an approximate Riemann sum with step length $|\phi_\tau'(p)|/\sqrt{n\Lap Q(p)}$, which implies
\begin{equation*}
R_n^\sharp(z) = \int_{-\infty}^{0} \frac{e^{-\frac{1}{2}(\xi - 2\re z)^2} }{\Phi_{\param}(\xi)} \frac{d\xi}{\sqrt{2\pi}}\cdot e^{2(1-{\param})(\re z)_+^2 }\cdot (1+o(1)),\qquad (n\to\infty).
\end{equation*}
Hence the lemma is proved.
\end{proof}

\begin{proof}[Proof of Theorem \ref{thm:ml}]
Recall that by Theorem \ref{exker} every limit $K$ of correlation kernels $c_n K_n$ is of the form
$$K(z,w) = G(z,w) \Psi(z,w) e^{(1-{\param})((\re z)_+^2 +(\re w)_+^2)},$$
where $\Psi$ is a Hermitian-entire function. As a result of Lemma \ref{lem:2r} we obtain the convergence $$\lim_{n\to \infty} R_n(z) = S(2 \re z)\, e^{2(1-{\param})(\re z)_+^2 }, \quad z\in\C.$$
This implies $\Psi(z,z) = S(2\re z).$
By analytic continuation, we obtain $\Psi(z,w) = S(z+\bar{w})$, completing our proof of Theorem \ref{thm:ml}.
\end{proof}

\section{Scaling limit of the maximal modulus}\label{sec:max}
In this section we discuss the asymptotic distribution of the maximal modulus at the edge and prove Theorem \ref{thm:sr}. For this purpose, we
pick a random sample $\{\zeta_j\}_1^n$ (associated with a modified potential $Q^{(\param)}$) and recall that the maximal modulus is defined by
$$|\zeta|_n = \max_{1\leq j \leq n} |\zeta_j|.$$

Observe that $|\zeta|_n \leq r$ if and only if
none of the points $\zeta_j$ belongs to the exterior disk $\D_e(r)$. Thus we must investigate the gap probability
that no point belongs to $\D_e(r)$.

It is well known that the distribution function of $|\zeta|_n$ is represented by
\begin{align*}
\Prob_n(|\zeta|_n \leq r) = \det\left(\delta_{j,k} - \int_{\D_e(0,r)} p_{j,n}\,\overline{p}_{k,n}e^{-nQ^{(\param)}} dA\right)_{j,k=0}^{n-1},
\end{align*}
where $p_{j,n}$ is an orthonormal polynomial of degree $j$ in $L^2 (\C,e^{-nQ^{(\param)}} dA)$. Cf. \cite[Section 3]{R03} or \cite[Section 15.1]{Meh}.

Here and in what follows we use the shorthand $\Prob_n$ to denote the Boltzmann-Gibbs measure $\Prob_{n,\param}^{1}$.

In the case that $Q$ is radially symmetric, the above
probability reduces to
\begin{equation}\label{pbpd}
\Prob_n(|\zeta|_n \leq r) = \prod_{j=0}^{n-1}\left(1- \int_{\D_e (0,r)} |p_{j,n}|^2 e^{-nQ^{(\param)}} dA\right).
\end{equation}

Note that the outer boundary of the droplet is a circle centered at the origin. Let $\rho$ be the radius of that circle.
We rescale $|\zeta|_n$ about $\rho$ by
$$\omega_n = \sqrt{4n{\param} \gamma_n \Lap Q(\rho) }\left(|\zeta|_n - \rho - \sqrt{\frac{\gamma_n}{4n{\param}\Lap Q(\rho)}}\right),$$
where $\gamma_n = \log(n/2\pi)-2\log\log n + \log (\rho^2\Lap Q(\rho)/\Phi_{\param}^2(0))$.

Now we compute the distribution function of the random variable $\omega_n$
$$\Prob_n(\omega_n \leq x) = \Prob_n\left(|\zeta|_n \leq \rho + \frac{1}{\sqrt{4n{\param}\Lap Q(\rho)}}\left(\sqrt{\gamma_n}+\frac{x}{\sqrt{\gamma_n}}\right)\right).$$
It is convenient to write, for $x\in\R$, $$ h_n(x) =   \frac{r_n({\param},x)}{\sqrt{4n\Lap Q(\rho)}}; \qquad r_n({\param},x) = \frac{1}{\sqrt{\param}}\left(\sqrt{\gamma_n} + \frac{x}{\sqrt{\gamma_n}}\right).$$
Then by \eqref{pbpd} we have
\begin{align}\label{ppom}
\Prob_n(\omega_n \leq x) = \prod_{j=0}^{n-1}\left(1- \int_{|\zeta|> \rho + h_n(x)} |p_{j,n}|^2 e^{-nQ^{(\param)}} dA\right).
\end{align}

\begin{lem}\label{lem:sr}
We have the convergence
$$\lim_{n\to \infty}\sum_{j=0}^{n-1} \int_{|\zeta|>\rho + h_n(x)} |p_{j,n}|^2 e^{-nQ^{(\param)}} dA = e^{-x}.$$
Here the convergence is uniform in any compact subset of $\R$.
\end{lem}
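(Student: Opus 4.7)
The plan is to reduce the sum to indices $j$ just below $n$, substitute the quasipolynomial approximation of Section~\ref{apqq}, and then evaluate the resulting asymptotic expression as a Riemann sum converging to $e^{-x}$.

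I first discard $j$ with $\tau = j/n \le 1-\delta_n$. For such $j$ and every $\zeta$ with $|\zeta| \ge \rho$ we have $\dist(\zeta, S_\tau) \ge \rho - \rho_\tau \gtrsim \delta_n$, so \eqref{skurk} combined with Lemma~\ref{l6} gives $|p_{j,n}(\zeta)|^2 e^{-nQ^{(\param)}(\zeta)} \le Cn\, e^{-c(\log n)^2}$; integrating and summing, the total contribution of these indices is $O(n^2 e^{-c(\log n)^2}) = o(1)$. For the remaining range $j/n \in [1-\delta_n,1)$, I apply Lemma~\ref{lem:pwq} to replace $p_{j,n}$ by the quasipolynomial $F_{j,n}$ near $\Gamma$, with Lemma~\ref{l6} controlling the region far from $\Gamma_\tau$. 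Unpacking \eqref{qop} and \eqref{good} gives $|F_{j,n}|^2 e^{-nQ^{(\param)}} = \sqrt{n/(2\pi)}\,|\phi_\tau'|\,e^{-n(Q^{(\param)}-V_\tau)}e^{\re\calH_{j,n}}$, and the boundary identity $e^{\re\calH_{j,n}(\zeta)} = \sqrt{\Lap Q(\zeta)}/\Phi_\param(\xi_{j,n}(\zeta))$ brings the function $\Phi_\param$ of \eqref{defphi} into the formula explicitly. Using radial symmetry ($|\phi_\tau'| = 1/\rho_\tau$) and integrating over angles,
\[
I_j(x) := \int_{|\zeta|>\rho+h_n(x)}|p_{j,n}|^2 e^{-nQ^{(\param)}}dA \sim \sqrt{\frac{n\Lap Q(\rho)}{2\pi}}\cdot\frac{2}{\rho\,\Phi_\param(\xi_{j,n}(\rho))}\int_{\rho+h_n(x)}^\infty e^{-n(q^{(\param)}-v_\tau)(r)}\,r\,dr,
\]
where $q^{(\param)}, v_\tau$ are the radial profiles of $Q^{(\param)}, V_\tau$.

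To evaluate the one-dimensional tail integral, I Taylor-expand $n(q^{(\param)}-v_\tau)(r)$ about $r=\rho^+$ using $q^{(\param)\prime}(\rho^+) = 2/\rho$ and the jump $q^{(\param)\prime\prime}(\rho^+) = 4\param\Lap Q(\rho) - 2/\rho^2$ (from $\check q(r) = q(\rho)+2\log(r/\rho)$), which with $k=n-j$ gives
\[
n(q^{(\param)}-v_\tau)(r) = \frac{k^2}{2n\rho^2\Lap Q(\rho)} + \frac{2k}{\rho}(r-\rho) + 2n\param\Lap Q(\rho)(r-\rho)^2 + o(1),\qquad r>\rho.
\]
Completing the square and performing the Gaussian right-tail integral (the shifted center $-k/(2n\param\rho\Lap Q(\rho))$ lies inside $S$) produces
\[
I_j(x) = \frac{\fii(\xi_k(x))}{\sqrt{\param}\,\Phi_\param(\xi_{j,n}(\rho))}\bigl(1+o(1)\bigr),\qquad \xi_k(x) := \sqrt{\gamma_n}+\frac{x}{\sqrt{\gamma_n}}+\frac{k}{\rho\sqrt{n\param\Lap Q(\rho)}},
\]
uniformly in $k$ with $1 \le k \le \sqrt{n}\log n$ and in $x$ on compact sets, while $\xi_{j,n}(\rho) = -k/(\rho\sqrt{n\Lap Q(\rho)}) \to 0$ throughout the relevant summation range.

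With $a := 1/(\rho\sqrt{n\param\Lap Q(\rho)})$, Mills' ratio gives $\fii(\xi_k(x)) \sim (\sqrt{\log n}/(C_\param\sqrt{n}))\,e^{-x}e^{-ka\sqrt{\gamma_n}}$, using the expansion $\xi_k(x)^2/2 = \gamma_n/2 + x + ka\sqrt{\gamma_n} + o(1)$ together with the identity $e^{-\gamma_n/2} = \sqrt{2\pi/n}\,(\log n)/C_\param$ from the definition of $\gamma_n$. The geometric series evaluates to $\sum_{k\ge 1}e^{-ka\sqrt{\gamma_n}} \sim 1/(a\sqrt{\gamma_n})$, and assembling all constants via $\rho\sqrt{\Lap Q(\rho)}/C_\param = \Phi_\param(0)$ and $\sqrt{\param}\,\Phi_\param(0) = (\sqrt{\param}+1)/2$ makes everything telescope to yield $\sum_j I_j(x) \to e^{-x}$, uniformly on compact $x$-sets. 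The main obstacle is to carry out the Laplace/Gaussian-tail analysis with errors that are $1+o(1)$ \emph{uniformly} in $k \in [1,\sqrt{n}\log n]$, verifying that the $\Phi_\param$ factor of \eqref{defphi} arises precisely from the boundary value of $\calH_{j,n}$; once this uniform asymptotic is in hand, the Riemann-sum evaluation is routine.
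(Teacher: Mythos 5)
Your strategy is sound and genuinely different from the paper's in the order of operations. Both proofs discard the indices $j\le n-n\delta_n$ the same way and pass to the quasipolynomials $F_{j,n}$, but after that the paper keeps both variables alive: it performs the change of variable $s=\sqrt{4n\Lap Q(\rho)}(r-\rho)$, converts the sum over $j$ into an integral over $\xi$ by a Riemann-sum argument, and then evaluates the remaining double integral $E_n(x)=\int_{r_n}^\infty e^{(1-\param)s^2/2}\int_{-\infty}^0 e^{-(s-\xi)^2/2}/\Phi_\param(\xi)\,d\xi\,ds$ by swapping the order of integration and integrating by parts twice. You instead collapse the $r$-integral for each fixed $j$ to a Gaussian right tail $\fii(\xi_k(x))$ and then sum the resulting quantities over $k$ as a geometric series with ratio $e^{-a\sqrt{\gamma_n}}$. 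The geometric-series route is arguably cleaner and more transparent about where the length scale $\sim\sqrt{n/\log n}$ of the effective summation range comes from, whereas the paper's route keeps $\Phi_\param$ as an explicit integrand and thereby parallels the structure of the proofs in Sections~\ref{sec:Ward} and~\ref{sec:err}.

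There is one slip to repair. Completing the square in $n(q^{(\param)}-v_\tau)(r)=\frac{k^2}{2n\rho^2\Lap Q}+\frac{2k}{\rho}(r-\rho)+2n\param\Lap Q(r-\rho)^2+o(1)$ leaves the constant $\frac{k^2}{2n\rho^2\Lap Q}\bigl(1-\frac1\param\bigr)=-\frac{(1-\param)}{2\param}\,\xi_{j,n}(\rho)^2$, so the correct per-$j$ asymptotic carries the extra factor $e^{(1-\param)\xi_{j,n}^2/(2\param)}$:
\begin{equation*}
I_j(x)=\frac{\fii(\xi_k(x))}{\sqrt{\param}\,\Phi_\param(\xi_{j,n}(\rho))}\,e^{\frac{(1-\param)}{2\param}\xi_{j,n}(\rho)^2}\bigl(1+o(1)\bigr).
\end{equation*}
Your displayed formula for $I_j(x)$ omits this factor and is then claimed ``uniformly in $1\le k\le\sqrt{n}\log n$,'' which is not correct as stated: at $k\sim\sqrt n\log n$ one has $\xi_{j,n}\sim-\log n$, so the omitted factor is far from $1$ (and for $\param<1$ it is large). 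What saves the argument is that the geometric decay $e^{-ka\sqrt{\gamma_n}}$ confines the sum to $k=O(\sqrt{n/\log n})$, where indeed $\xi_{j,n}(\rho)\to0$ uniformly and both the extra exponential and $\Phi_\param(\xi_{j,n})/\Phi_\param(0)$ are $1+o(1)$; the contribution from larger $k$ should be estimated separately (it is superpolynomially small). You flag the uniform-error question as ``the main obstacle,'' and this is exactly where the missing factor must be reinstated before the reduction to $k\lesssim\sqrt{n/\log n}$ is carried out. With that correction, the telescoping through $C_\param=\rho\sqrt{\Lap Q(\rho)}/\Phi_\param(0)$ gives $e^{-x}$, in agreement with the paper.
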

\begin{proof}
For each $x$, $\rho + h_n(x)$ is located in $\C \setminus S$ with $O(n^{-1/2}\sqrt{\log n})$-distance from the boundary $\pa S$.
By the growth assumption on $Q$ and the estimate in Lemma \ref{lem:Qe} we have for some $c>0$  $$Q^{(\param)}(\zeta) - V_\tau (\zeta) \geq c \min \{\dist(\zeta, \Gamma_\tau)^{\,2},\, \log(1+|\zeta|) \},\quad \zeta \in U_\tau,$$
where we remind of the notation $U_\tau=\hat{\C}\setminus \Pc S_\tau$ and $\Gamma_\tau=\d U_\tau$.

From this and Lemma \ref{l6} we infer that the sum of lower degree terms
$$\sum_{j \leq n-n\delta_n} \int_{|\zeta|>\rho + h_n(x)} |p_{j,n}|^2 e^{-nQ^{(\param)}} dA $$
 is negligible. (Here $\delta_n = n^{-1/2}\log n$, as usual.)

Using the quasipolynomial approximation from Section \ref{apqq}, we now analyze the limit of the sum of higher degree terms. We will write
$$D_{n,x} = \{\zeta\in\C : \rho + h_n(x)\leq |\zeta| \leq \rho + M\delta_n \}.$$
The problem is reduced to computing, for each $x\in\R$, the sum
\begin{equation}\label{hsint}
\sum_{j= m_n}^{n-1}\int_{D_{n,x}} |F_{j,n}|^2 e^{-nQ^{(\param)}} dA = \sqrt{\frac{n}{2\pi}}\sum_{j= m_n}^{n-1} \int_{\rho + h_n(x)}^{\rho + M\delta_n}|\phi_\tau'| e^{-n(Q^{(\param)} - V_\tau)} e^{\re \calH_{j,n}} 2r dr,\end{equation}
where $m_n = n-n\delta_n$.
By the Taylor series expansion for $Q-V_\tau$ and $Q-V_1$ about $\Gamma_\tau$ and $\Gamma$ respectively, we get for all $r$ with $r-\rho=O(\delta_n)$ and $\tau$ with $\tau \in [1-\delta_n,1)$
\begin{align*}
(Q^{(\param)} -V_\tau)(r)&=(Q-V_\tau)(r) - (1-{\param})(Q-V_1)(r)\\
&=2\Lap Q(\rho_\tau)(r-\rho_\tau)^2-2(1-{\param})\Lap Q(\rho) (r-\rho)^2+O(\delta_n^3)
\end{align*}
where $\rho_\tau$ is the radius of the circle $\Gamma_\tau$.
By the change of variable $$s =\sqrt{4n\Lap Q(\rho)}(r -\rho)$$ and the argument in Section \ref{sec:err},
 the above integral \eqref{hsint} is equal to
\begin{equation}\label{imsum}
 \sum_{j=m_n}^{n-1}\frac{\rho|\phi_\tau'(\rho)|}{\sqrt{2\pi}\Phi_{\param}(\xi_{j,n})}\int_{r_n({\param},x)}^{M'\log n} e^{(1-{\param})s^2/2} e^{-(s-\xi_{j,n})^2/2} ds\, (1+o(1))\end{equation}
where $M'$ is a positive constant and $\xi_{j,n}=\xi_{j,n}(\rho) = (j-n)|\phi_\tau'(\rho)|/\sqrt{n\Lap Q(\rho)} $.
By the Riemann sum approximation as in Section \ref{sec:err}, we obtain for $k=n-j$ and $s$ with $r_n(\param,x)\leq s\leq M'\log n$
$$\frac{|\phi_\tau'(\rho)|}{\sqrt{n\Lap Q(\rho)}}\sum_{k=1}^{n\delta_n} \frac{e^{-(s-\xi_{j,n})^2/2}}{\Phi_{\param}(\xi_{j,n})}=\int_{-\log n}^{0}\frac{e^{-(s-\xi)^2/2}}{\Phi_{\param}(\xi)}d\xi + \epsilon_n(s).$$
The error term $\epsilon_n(s)$ from the Riemann sum approximation has a bound
$$\epsilon_n(s) \leq C  \,\frac{(\log n)^2}{\sqrt{n}}\sup_{-\log n \leq \xi \leq 0} \left|\frac{d}{d\xi} \left(\frac{e^{-(s-\xi)^2/2}}{\Phi_{\param}(\xi)}\right)\right|$$ so that $\epsilon_n(s)$ is negligible in \eqref{imsum}. Hence we need to find the limit of the integral
$$\sqrt{\frac{n\Lap Q(\rho)}{2\pi}}\,\rho\int_{r_n({\param},x)}^{M'\log n} e^{(1-{\param})s^2/2} \int_{-\infty}^{0} \frac{e^{-(s-\xi)^2/2}}{\Phi_{\param}(\xi)}d\xi\, ds.$$
Write
$$E_n(x) = \int_{r_n({\param},x)}^{\infty} e^{(1-{\param})s^2/2} \int_{-\infty}^{0} \frac{e^{-(s-\xi)^2/2}}{\Phi_{\param}(\xi)}d\xi\, ds.$$
By the change of variables $u=s-\xi/\param$ and $v=u-r_n(\param,x)$,
we obtain
\begin{align*}
E_n(x) &= \int_{-\infty}^{0} \frac{e^{\frac{1-{\param}}{2{\param}}\xi^2}}{\Phi_{\param}(\xi)} \int_{r_n({\param},x)-\xi/{\param}}^{\infty} e^{-\frac{\param}{2}u^2} du\, d\xi \\
&=\int_{r_n({\param},x)}^{\infty} e^{-\frac{\param}{2}u^2}\int_{{\param}(r_n({\param},x)-u)}^{0} \frac{e^{\frac{1-{\param}}{2{\param}}\xi^2}}{\Phi_{\param}(\xi)}d\xi \,du = \int_{0}^{\infty} e^{-\frac{\param}{2}(v+r_n({\param},x))^2} f(v) dv,
\end{align*}
where
$$f(v) = \int_{-{\param}v}^{0} \frac{e^{\frac{1-{\param}}{2{\param}}\xi^2}}{\Phi_{\param}(\xi)}d\xi.$$
Now the integration by parts gives
\begin{align}\label{intpar}
\int_{0}^{\infty} e^{-\frac{\param}{2}v^2-{\param}v\,r_n({\param},x)}f(v)dv &=  -\frac{1}{{\param}r_n({\param},x)}\left[e^{-{\param}v\,r_n({\param},x)}e^{-{\param}v^2/2}f(v)\right]_{v=0}^\infty \\
& + \int_{0}^{\infty} \frac{e^{-{\param}v\,r_n({\param},x)}}{{\param}\,r_n({\param},x)}e^{-{\param}v^2/2}\left(-{\param}v f(v) + f'(v)\right)dv. \nonumber
\end{align}
Since $f(0)=0$ and $e^{-\param v^2/2}f(v)=o(1)$ as $v\to \infty$, the first term in the right-hand side of \eqref{intpar} vanishes.
By integrating by parts again we obtain
\begin{align*}
\int_{0}^{\infty} e^{-\frac{\param}{2}v^2-{\param}v\,r_n({\param},x)}f(v)dv &= - \frac{1}{({\param}\,r_n({\param},x))^2}\left[e^{-{\param}v\,r_n({\param},x)}e^{-{\param}v^2/2}f_1(v)\right]_{v=0}^\infty \\
&+\int_{0}^{\infty} \frac{e^{-{\param}v\,r_n({\param},x)}}{\left({\param}\,r_n({\param},x)\right)^2}e^{-{\param}v^2/2}\left(-{\param}v f_1(v) + f_1'(v)\right)dv
\end{align*}
where $f_1(v) = f'(v) - {\param}vf(v)$. Since $f_1(0)=f'(0)={\param}(\Phi_{\param}(0))^{-1}$,
\begin{align*}
\int_{0}^{\infty} e^{-\frac{\param}{2}v^2-{\param}v\,r_n({\param},x)}f(v)dv &= \frac{1}{{\param}\Phi_{\param}(0) r_n({\param},x)^2} + O(r_n({\param},x)^{-3}).
\end{align*}
It follows that
\begin{align*}
E_n(x) = e^{-\frac{\param}{2}r_n({\param},x)^2} \left( \frac{1}{\param\Phi_{\param}(0) r_n({\param},x)^2} + O(r_n({\param},x))^{-3} \right).
\end{align*}
Recall that $r_n({\param},x) = \frac{1}{\sqrt{\param}}\left(\sqrt{\gamma_n}+ \frac{x}{\sqrt{\gamma_n}}\right)$ where $$\gamma_n = \log(n/2\pi)-2\log\log n + \log (\rho^2\Lap Q(\rho)/\Phi_{\param}^2(0)).$$
Thus we obtain
$$\sqrt{\frac{n\Lap Q(\rho)}{2\pi}} \rho E_n(x) = e^{-x} + o(1),$$
where $o(1)\to 0$ locally uniformly as $n\to\infty$.
Hence we prove the lemma.
\end{proof}

It follows from \eqref{ppom} that
$$\log \Prob_n(\omega_n \leq x)=-\sum_{j=0}^{n-1} \int_{|\zeta|>\rho + h_n(x)} |p_{j,n}|^2 e^{-nQ^{(\param)}} dA + o(1).$$
By Lemma \ref{lem:sr}, we finally obtain the convergence
$$\Prob_n(\omega_n \leq x) \to e^{-e^{-x}},\quad n\to \infty.$$

Our proof of Theorem \ref{thm:sr} is complete. $\qed$

\section{Summary and outlook}\label{sec:sum}

We have introduced a scale of boundary confinements for the Coulomb gas and analyzed them in the planar, determinantal case.
In particular, we have proved existence of a new scale of point fields, and investigated them for universality in two different ways;
(1) using the method of Ward equations and (2) using a quasipolynomial approximation formula.

The first method works well for
any reasonable potential, but has the drawback that we need to impose some apriori conditions on a solution in order to draw
relevant conclusions.

The method of quasipolynomials becomes more complicated than in the free boundary case, depending on that the quasipolynomial approximation
formula has different forms inside and outside of the droplet. This makes it complicated to verify the approximate orthogonality property (P2)
in Section \ref{apqq}. An exception occurs in cases when the curves $\Gamma_\tau$ has everywhere a uniform distance to $\Gamma$, meaning in practice that we are working with a radially symmetric situation. In this case the method works well, and allows us to conclude universality of scaling limits and to calculate the distribution of the maximum modulus.

We hope that our two methods will contribute to a future resolution of the intriguing question of universality in the above setting.

It is well known that fluctuations about the equilibrium measure of a determinantal Coulomb gas $\{\zeta_j\}_1^n$ converges, e.g., in the sense of distributions, to a Gaussian field with free boundary conditions, see \cite{AM,KM}. It is also known (see \cite{AKMW}) that if one imposes
hard edge conditions, then the fluctuations will converge to another kind of Gaussian field.
The question of convergence of fluctuations with an arbitrary value of the confinement-parameter $\param$ presents itself naturally.

In the paper \cite{A19}, a notion of ``distance to the vacuum'' was introduced and analyzed for $\beta$-ensembles. This gives a natural
generalization of the maximum modulus, which works also for non radial potentials. The paper \cite{A19} provides some estimates
of the distribution function for this distance; cf. the paper \cite{CF} for numerical simulations and more precise predictions.

Finally, we wish to point out that our construction can in principle be applied to a Coulomb gas in any dimension, or, say, in a compact
Riemannian manifold.
Indeed, given a suitable external potential, it is possible to define natural notions of interaction kernel, droplet, obstacle function, and so on
(cf. e.g., \cite{CGZ} and \cite{GR}).
By introducing a confinement-parameter as above, one obtains more general ensembles with various restrictions near the
boundary.

\end{document}